\documentclass[a4paper,11pt]{quantumarticle}
\pdfoutput=1
\usepackage{float}
\usepackage{graphicx} 
\usepackage{amssymb}
\usepackage[utf8]{inputenc}
\usepackage[T1]{fontenc}
\usepackage{amsfonts}
\usepackage{bm}
\usepackage{physics}
\usepackage{amsmath}
\usepackage{amsthm}
\usepackage{mathtools}
\usepackage[numbers]{natbib}
\usepackage{stmaryrd}
\usepackage{tikz}
\usepackage{quantikz}
\usepackage{hyperref}
\usepackage{makecell}
\usetikzlibrary{decorations.pathreplacing}
\def\id{{\rm 1\kern-.22em l}}

\newcommand{\stirling}[2]{\genfrac{[}{]}{0pt}{}{#1}{#2}}
\newtheorem{tw}{Theorem}
\newtheorem{propo}[tw]{Proposition}
\newtheorem{lem}[tw]{Lemma}
\newtheorem{df}[tw]{Definition}
\newtheorem{uw}[tw]{Remark}
\newtheorem{prz}[tw]{Example}
\newtheorem{cor}[tw]{Corollary}

\begin{document}

\title{Asymptotic distinguishability of Haar-averaged measurement models}
\author{Ludmiła Marcinkowska}
\affiliation{Institute of Theoretical and Applied Informatics, Polish Academy of Sciences, ul. Ba{\l}tycka 5, 44-100 Gliwice, Poland}
\orcid{0009-0002-0935-3921}
\author{Łukasz Pawela}
\affiliation{Institute of Theoretical and Applied Informatics, Polish Academy of Sciences, ul. Ba{\l}tycka 5, 44-100 Gliwice, Poland}
\orcid{0000-0002-0476-7132}
\author{Marcin Markiewicz}
\affiliation{Institute of Theoretical and Applied Informatics, Polish Academy of Sciences, ul. Ba{\l}tycka 5, 44-100 Gliwice, Poland}
\affiliation{International Centre for Theory of Quantum Technologies, University of Gdansk, 80-309 Gda{\'n}sk, Poland}
\orcid{0000-0002-8983-9077}
\author{Zbigniew Puchała}
\affiliation{Institute of Theoretical and Applied Informatics, Polish Academy of Sciences, ul. Ba{\l}tycka 5, 44-100 Gliwice, Poland}
\orcid{0000-0002-4739-0400}

\begin{abstract} We study discrimination problems generated by the same basic
Haar-random measurement mechanism at two observational levels. First, we derive
an explicit expression for the type-II error in the task of discriminating a
Haar-random measure-and-prepare channel from the identity channel
$\mathbb{I}$, using a coherence-sensitive entangled tester. Second, after
passing to the induced classical measurement records, we compare two random
measurement models: one induced by a single collective unitary of the form
$U^{\otimes (n_1+n_2)}$ with $U\in U(d)$, and another induced by independent
local unitaries $U_1^{\otimes n_1}\otimes U_2^{\otimes n_2}$. For the
associated Haar-averaged aggregate histogram laws, in which the block of origin
of each count is not retained, we obtain closed-form formulas and quantify
their discrepancy through the total variation distance. We derive asymptotic
expressions in the fixed-$N$, large-$d$ regime, the fixed-$d$, large-$N$ regime,
the sparse joint-scaling regime $N=o(\sqrt d)$, and the critical scaling regime
$N/\sqrt d\to c$. We also identify the block-resolved pair-of-histograms law,
showing that the aggregate total variation distance is a coarse-grained lower
bound on the distinguishability available when block labels are retained.\end{abstract}

\maketitle
\onecolumn

\section{Introduction and motivation}
Distinguishing quantum processes is a central problem in quantum information
theory, with key applications in device verification, quantum communication, and
process tomography. Formally, it can be cast as a task of quantum hypothesis
testing, in which a decision must be made between two candidate quantum channels
based on a limited number of channel uses and the resulting measurement outcomes
(see
\cite{chefles2000quantum},\cite{barnett2009quantum},\cite{helstrom1969quantum},\cite{holevo1973statistical},\cite{Hayashi09},\cite{Zhuang20},\cite{Watrous18}).
A particularly well-studied special case concerns the discrimination of unitary channels, where the structural properties of unitary operations --- such as the role of entanglement, the feasibility of perfect discrimination, and the use of adaptive or multi-shot strategies --- have been analyzed in detail (see \cite{Duan07},\cite{Duan09},\cite{Hillery10},\cite{Soeda21},\cite{Hashimoto22}).
The discrimination of quantum measurements, which is particularly relevant in
experimentally accessible scenarios, has also been widely investigated,
including single-shot and multi-shot settings as well as unambiguous
discrimination strategies (see
\cite{puchala2018strategies},\cite{puchala2021multiple},\cite{krawiec2024discrimination},\cite{ziman2009unambiguous},\cite{sedlak2014optimal}).

A powerful approach to the study of typical properties of quantum channels
relies on averaging over the Haar measure on the unitary group. In this setting,
the Weingarten calculus provides explicit formulas for integrals of polynomial
functions of unitary matrices and has become a standard tool in the analysis of
random measurement operators, random quantum channels, and quantum networks (see
\cite{puchala2011symbolic},\cite{collins2006integration},\cite{chiribella2009theoretical}). Random quantum
channels generated from Haar-random constructions and random POVMs induced by
random isometries have been extensively studied in recent years, see
\cite{heinosaari2020random},\cite{kukulski2021generating},\cite{nechita2018almost}.

Rather than characterizing channels through their full Choi representations, we
adopt a more operational perspective that focuses directly on the statistics of
measurement outcomes they induce. This approach emphasizes experimentally
accessible data and avoids the overhead associated with full process tomography.
It also aligns naturally with practical verification in which only measurement
statistics are available.

The two discrimination problems studied in this paper are organized around the
same basic object: a Haar-random unitary followed by a projective measurement,
or equivalently a Haar-random measure-and-prepare channel. The first problem
uses a coherence-sensitive quantum tester and asks how well this channel family
can be distinguished from the identity channel $\mathbb{I}$. This provides a
baseline at the channel level, before any restriction to classical records is
imposed.

The second problem keeps the same random-measurement mechanism but changes the
available data. Instead of testing the channel against the identity, we look at
the classical outcome statistics produced after the measurement and ask whether
these statistics reveal the structure of the underlying Haar randomness. More
precisely, we compare a model generated by a single Haar-random unitary acting
collectively (as a tensor power) on a system composed of $N=n_1 + n_2$
subsystems with a model generated by two independent Haar-random unitaries
acting collectively on $n_1$ and $n_2$ subsystems. While both models produce
exchangeable outcome statistics, their Haar-averaged histogram distributions
differ in a nontrivial combinatorial manner.

A useful way to understand the statistical structure of random quantum
measurements is through an analogy with classical occupancy processes. When a
fixed input state is measured repeatedly in a basis obtained from a Haar-random
unitary, each outcome can be viewed as drawing a symbol from a probability
vector given by the squared amplitudes of a random column of the unitary matrix.

Conditioned on the unitary, the measurement outcomes are independent and follow
a multinomial distribution. Averaging over the Haar measure therefore induces a
classical random process in which the probability vector itself is random and
uniformly distributed on the probability simplex. As a consequence, the
resulting statistics can be interpreted as a classical occupancy problem:
distributing $n$ balls (measurement shots) into $d$ bins (basis outcomes).

In this picture, the most informative events are collisions, i.e. repeated
occurrences of the same outcome. When the number of samples is small compared to
the Hilbert-space dimension, collisions are rare and the outcome statistics are
close to those generated by independent sampling from a large alphabet. In
contrast, when the number of samples grows, collisions become more frequent and
reveal correlations imposed by the underlying measurement model.

This observation provides a natural mechanism for distinguishing different
random measurement structures. In particular, correlations between subsystems in
a collective Haar-random unitary influence the probability of collision events
across blocks of measurement outcomes. By contrast, when the measurement is
generated by independent unitaries acting collectively on separate subsystems,
such cross-block correlations are suppressed. As we show below, the
distinguishability between these two models can therefore be traced directly to
differences in their collision statistics.

Thus the first part fixes the channel-level benchmark, while the main technical
part of the paper analyzes what remains visible after one passes to classical
measurement statistics. In that setting, the central quantity is the total
variation distance (TVD) between the Haar-averaged aggregate histogram laws. The
TVD provides an operationally meaningful measure of statistical
distinguishability for the block-unresolved experiment in which only the total
histogram is observed. We also record the corresponding block-resolved law for
the ordered pair of histograms, which gives the operational distinguishability
when the block labels are retained. Our analysis shows that the
distinguishability between collective and block-collective random measurements
is governed by collision statistics, leading to four distinct asymptotic regimes
depending on the scaling of the number of samples $N$ and the Hilbert-space
dimension $d$.

\subsection{Contributions}
In this work we provide a detailed analysis of discrimination problems for
Haar-random quantum measurement models, moving from a channel-level test to the
classical statistics induced by the same random measurement construction. Our
main contributions are as follows:
\begin{enumerate}
    \item We derive an explicit formula for the Haar-averaged type-II error
    probability in the problem of distinguishing a Haar-random
    measure-and-prepare channel from the identity channel using a maximally
    entangled input and a two-outcome tester. This establishes a channel-level
    benchmark for the random measurement family, expressed as a unitary integral
    evaluated via Weingarten calculus in Theorem~\ref{typeII}.
    \item We show that Haar-averaged measurement statistics can be mapped to a
    classical occupancy problem, where outcome histograms follow a
    Dirichlet-induced distribution on the probability simplex. This provides a
    transparent probabilistic interpretation of quantum measurement statistics
    in terms of collisions.
    \item For both the collective and block-collective random measurement
    models, we derive closed-form expressions for the Haar-averaged aggregate
    histogram laws, and show that the block model induces a nontrivial
    combinatorial deformation governed by constrained histogram decompositions.
    \item We express the aggregate total variation distance $\mathrm{TVD}$
    between the two models in terms of a combinatorial functional counting
    admissible histogram decompositions. This identifies collision events as the
    fundamental mechanism responsible for statistical distinguishability.
    \item We derive the exact block-resolved pair-of-histograms law. This
    clarifies the operational meaning of the aggregate statistic: the
    total-histogram TVD is a coarse-grained lower bound on the TVD obtainable
    from block-resolved data, and in the fixed-$d$, large-$N$ regime the
    block-resolved TVD tends to one.
    \item We analyze the behavior of the aggregate TVD in several asymptotic
    regimes:
        \begin{itemize}
            \item for fixed $N$ and $d \to \infty$, we show that the
            distinguishability is governed by rare collision events and vanishes
            with the dimension,
            \item for fixed $d$ and $N \to \infty$, we derive a simplex-limit
            formula expressed as an explicit integral over the probability
            simplex,
            \item in the sparse joint-scaling regime $N=o(\sqrt{d})$, we recover
            the same collision-driven asymptotic behaviour,
            \item in the critical scaling regime $N/\sqrt d\to c$, we prove that
            the collision count converges to a Poisson random variable and
            derive an exact expression for the limiting aggregate $\mathrm{TVD}$
            as an expectation over this distribution (Theorem~\ref{twCritScal}),
            providing a complete description of the transition between sparse
            and dense sampling regimes.
        \end{itemize}
\end{enumerate}

\section{\texorpdfstring{Distinguishing a measurement channel from the identity
channel $\mathbb{I}$}{Distinguishing a measurement channel from the identity
channel I }}

In this section we analyze the problem of distinguishing the identity channel
$\mathbb{I}$ from a random quantum channel implemented as a measurement in a
collectively rotated computational basis, using a fixed two-outcome tester and a
maximally entangled input state. Our goal is to compute the Haar-averaged
probability of a type II error and to express it in terms of a Haar integral
over the unitary group. For each $U \in U(d)$ we define a measure-and-prepare
channel $Q^{(N)}_U$ acting on $(\mathbb{C}^d)^{\otimes N}$, defined as follows.
Let $U\in U(d)$ be a Haar-random unitary and let $\{\ket{k}\}_{k=0}^{d-1}$
denote the computational basis of $\mathbb{C}^d$. For a multi-index
$\mathbf{k}=(k_0,\dots,k_{N-1})\in[d]^N$, define, in the Heisenberg picture,
\begin{equation}
    \begin{split}
        & E_{U,k}:=U^\dagger\ket{k}\!\bra{k}U,\\
        & E_{U,\mathbf{k}} := E_{U,k_0}\otimes\cdots\otimes E_{U,k_{N-1}}.
    \end{split}
\end{equation}
The channel $Q^{(N)}_U$ is then given by
\begin{equation} \label{defQ}
\begin{split}
    &Q^{(N)}_U:\rho \mapsto \sum_{\mathbf{k}\in[d]^N} \big(\mathrm{Tr}\, E_{U,\mathbf{k}}\,\rho\big)\,\ket{\mathbf{k}}\!\bra{\mathbf{k}},\\
    \end{split}
\end{equation}
where $\{\ket{\mathbf{k}}\}$ denotes the standard product basis of
$(\mathbb{C}^d)^{\otimes N}$. Equivalently
\begin{equation}
     Q^{(N)}_U(\rho)= \Delta\left( U^{\otimes N}\,\rho\,U^{\dagger\otimes N} \right),
\end{equation}
where
\begin{equation}
    \Delta(\rho):=\sum_{k} \bra{k}\rho\ket{k}\ket{k}\bra{k}.
\end{equation}
This channel corresponds to a measurement in the rotated computational basis
followed by a classical encoding of the outcome. 

We consider the hypothesis 
\begin{equation}
    \begin{split}
        & H_0: X=\mathbb{I}, \\
        & H_1: X=Q^{(N)}_U\neq \mathbb{I},
    \end{split}
\end{equation}
and a bipartite system $\mathcal{H}_A\otimes\mathcal{H}_B \cong
(\mathbb{C}^d)^{\otimes N}\otimes(\mathbb{C}^d)^{\otimes N}$. The unknown
channel $X\in\{\mathbb{I},Q^{(N)}_U\}$ acts on subsystem $\mathcal{H}_A$, while
the identity channel acts on $\mathcal{H}_B$. The discrimination strategy
consists of preparing a maximally entangled state between $\mathcal{H}_A$ and
$\mathcal{H}_B$,
$\ket{\psi}=\frac{1}{\sqrt{d^N}}\sum_{\mathbf{k}}\ket{\mathbf{k}\mathbf{k}}$,
followed by a joint measurement on the output state.

The two-outcome measurement $\{\Pi_0, \Pi_1\}$
is performed jointly on $\mathcal{H}_A\otimes\mathcal{H}_B$ and is defined as 
\begin{equation} \label{tester}
    \begin{split}
        & \Pi_0=\frac{1}{d^N}\sum_{\mathbf{k},\mathbf{l}} \ket{\mathbf{k}\mathbf{k}}\!\bra{\mathbf{l}\mathbf{l}}= \proj{\psi}, \\
        & \Pi_1=\mathbf{1}-\Pi_0.
    \end{split}
\end{equation}
Let us denote the resulting output states by
\begin{equation} \label{outputs}
     \rho_0 := \proj{\psi}, \qquad \rho_U:= (Q^{(N)}_U \otimes \mathbb{I})(\proj{\psi}).
\end{equation}
The measurement outcome corresponding to $\Pi_0$ is interpreted as accepting
$H_0$, while the outcome corresponding to $\Pi_1$ is interpreted as accepting
$H_1$. The type-I and type-II errors are then given by
\begin{equation}
\begin{split}
    &  p_{I}=\mathrm{Tr}(\rho_0\Pi_1),\\
    &  p_{II}=\mathrm{Tr}(\rho_U \Pi_0),\\
\end{split}
\end{equation} 
where the type-I error corresponds to incorrectly identifying the channel as
different from the identity when it is in fact the identity, and the type-II
error corresponds to failing to detect a difference when the channel is
$Q^{(N)}_U$. 

\begin{figure}[H]
\centering
\begin{tikzpicture}[scale=0.75, transform shape]
\draw[decorate,decoration={brace,amplitude=8pt,mirror}]
(-1.2,3.5) -- (-1.2,-1.1);
\node[left] at (-1.6,1.2)
{$|\psi\rangle$};
\foreach \y in {3.2,2.4,0.8,0}
{
    \draw[thick] (0,\y) -- (1,\y);
    \draw[thick, rounded corners=0.08cm]
    (1,\y-0.22) rectangle +(1.0,0.62);
    \node at (1.5,\y+0.08) {$Q^{(1)}_U$};
    \draw[thick] (1.9,\y) -- (3.2,\y);
}
\node at (1.5,1.75) {$\vdots$};
\draw[thick] (0,-1) -- (3.2,-1);
\draw[thick, rounded corners=0.1cm]
(3.2,-1.4) rectangle (5.2,3.7);
\node at (4.2,1)
{$\{\Pi_0,\Pi_1\}$};
\draw[double, thick, ->]
(5.2,1) -- (6.5,1);
\node at (7.3,1) {$=$};
\draw[decorate,decoration={brace,amplitude=8pt,mirror}]
(9,2.2) -- (9,0.1);
\node[left] at (8.6,1.1)
{$|\psi\rangle$};
\draw[thick] (10.2,1.9) -- (11.4,1.9);
\draw[thick, rounded corners=0.08cm]
(11.4,1.55) rectangle +(1.8,0.7);
\node at (12.3,1.9)
{$Q_U^{(N)}$};
\draw[thick] (13.2,1.9) -- (14.7,1.9);
\draw[thick] (10.2,0.4) -- (14.7,0.4);
\draw[thick, rounded corners=0.1cm]
(14.7,0.0) rectangle (16.7,2.3);
\node at (15.7,1.15)
{$\{\Pi_0,\Pi_1\}$};
\draw[double, thick, ->]
(16.7,1.15) -- (18,1.15);
\end{tikzpicture}

\caption{Discrimination protocol between the identity channel $\mathbb{I}$ and
the Haar-random measurement channel $Q_U^{(N)}$. The maximally entangled state
$\ket{\psi}$ is prepared across $N$ subsystems. The channel $Q_U^{(N)}$ acts
collectively on these subsystems, followed by the joint measurement
$\{\Pi_0,\Pi_1\}$.}
\end{figure}

\begin{tw}\label{typeII}
For the above discrimination strategy the following holds:
\begin{enumerate}
\item If $X=\mathbb{I}$, the outcome corresponding to $\Pi_0$ is obtained with
probability $1$.
\item If $X=Q^{(N)}_U$, then the Haar-averaged type–II error
\begin{equation} 
\overline{p}_{II}:=\int p_{II}(U)dU
\end{equation}
is given by
\begin{equation} \label{defInd}
    \overline{p}_{II} =\frac{1}{d^{2N}}\sum_{\sigma,\tau\in S_N} d^{c(\sigma\vee\tau)} \mathrm{Wg}(\sigma^{-1}\tau,d)
\end{equation}
where $\mathrm{Wg}$ denotes the Weingarten function and $c(\sigma\vee \tau)$
denotes the number of blocks in the join of the set partitions induced by the
cycle decompositions of $\sigma$ and $\tau$.
\end{enumerate}
\end{tw}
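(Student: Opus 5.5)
The plan is to reduce $p_{II}(U)$ to a polynomial in the matrix entries of $U$ and $\bar U$, and then average it with the standard Weingarten formula.

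\emph{Part 1.} Under $H_0$ the output state is $\rho_0=\proj{\psi}$, and by \eqref{tester} we have $\Pi_0=\proj{\psi}$. Hence $\mathrm{Tr}(\rho_0\Pi_0)=|\langle\psi|\psi\rangle|^2=1$, and $p_I=0$.

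\emph{Part 2, first step: compute $p_{II}(U)$ for fixed $U$.} Expanding the maximally entangled state gives
\[
\rho_U=\frac{1}{d^N}\sum_{\mathbf{k},\mathbf{l}}Q^{(N)}_U(\ket{\mathbf{k}}\!\bra{\mathbf{l}})\otimes\ket{\mathbf{k}}\!\bra{\mathbf{l}} .
\]
Therefore
\[
p_{II}(U)=\bra{\psi}\rho_U\ket{\psi}=\frac{1}{d^{2N}}\sum_{\mathbf{k},\mathbf{l}}\bra{\mathbf{k}}Q^{(N)}_U(\ket{\mathbf{k}}\!\bra{\mathbf{l}})\ket{\mathbf{l}} .
\]
The output of $Q^{(N)}_U=\Delta(U^{\otimes N}\cdot U^{\dagger\otimes N})$ is diagonal in the product basis, so only the terms with $\mathbf{k}=\mathbf{l}$ survive. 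Each surviving term equals $|\bra{\mathbf{k}}U^{\otimes N}\ket{\mathbf{k}}|^2$, which gives
\[
p_{II}(U)=\frac{1}{d^{2N}}\sum_{\mathbf{k}\in[d]^N}\prod_{a=1}^N U_{k_ak_a}\overline{U_{k_ak_a}} .
\]

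\emph{Part 2, second step: Haar average.} Apply the Weingarten formula
\[
\int \prod_a U_{i_aj_a}\overline{U_{i'_aj'_a}}\,dU=\sum_{\sigma,\tau\in S_N}\prod_a\delta_{i_a,i'_{\sigma(a)}}\delta_{j_a,j'_{\tau(a)}}\,\mathrm{Wg}(\sigma^{-1}\tau,d),
\]
with all four index tuples equal to $\mathbf{k}$. The delta constraints say that $k_a=k_{\sigma(a)}$ and $k_a=k_{\tau(a)}$ for every $a$. Equivalently, $\mathbf{k}$ is constant on each cycle of $\sigma$ and on each cycle of $\tau$, and hence constant on each block of the join $\sigma\vee\tau$ of the induced set partitions. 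Conversely, any assignment that is constant on these blocks satisfies all the constraints. So after exchanging the finite sums, $\sum_{\mathbf{k}}$ counts exactly $d^{c(\sigma\vee\tau)}$ admissible tuples, which yields \eqref{defInd}.

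\emph{Expected obstacle.} The only delicate point is convention bookkeeping. One must check which of $\sigma^{-1}\tau$, $\sigma\tau^{-1}$ or $\tau\sigma^{-1}$ appears in the Weingarten formula being used. This is harmless: $\mathrm{Wg}$ is a class function invariant under inversion, and the substitution $(\sigma,\tau)\mapsto(\sigma^{-1},\tau^{-1})$ preserves the cycle partitions and hence the join. The formula also requires $N\le d$ when the Weingarten function is taken in its invertible-Gram-matrix form; otherwise one uses the pseudo-inverse version, and the counting argument is unchanged.
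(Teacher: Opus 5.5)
Your proposal is correct and follows the paper's proof. You reduce $p_{II}(U)$ to $d^{-2N}\sum_{\mathbf{k}}|\bra{\mathbf{k}}U^{\otimes N}\ket{\mathbf{k}}|^2$, specialize the Weingarten formula to all indices equal to $\mathbf{k}$, and count the tuples constant on the blocks of $\sigma\vee\tau$, exactly as in the paper's appendix lemma. Two points go beyond the paper: you write out the reduction step it only calls straightforward, and your caveat that $N>d$ requires the general (pseudo-inverse) Weingarten function is accurate and consistent with the paper's separate treatment of $I_{3,2}$.
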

\begin{proof}
The first statement follows from the definition of $\Pi_0$ and $\ket{\psi}$;
\begin{equation}
    \begin{split}
        & \text{Tr}\proj{\psi}\Pi_0= 1.
    \end{split}
\end{equation}
For $X=Q^{(N)}_U$, define the Haar integral
\begin{equation}
I_{N,d}:=\sum_{\mathbf{k}}\int\left|\bra{\mathbf{k}}U^{\otimes N}\ket{\mathbf{k}} \right|^2 dU .
\end{equation}
Then
\begin{equation}
\overline{p}_{II}=\frac{1}{d^{2N}} I_{N,d}.
\end{equation}
A straightforward computation using
\eqref{defQ}-\eqref{outputs} shows that
\begin{equation}
I_{N,d}=\sum_{\mathbf{k}} \int \bra{\mathbf{k}}E_{U,\mathbf{k}}\ket{\mathbf{k}} dU= \int\Big(\sum_{j=0}^{d-1} |U_{jj}|^2\Big)^N dU.
\end{equation}
Using Weingarten calculus, we obtain
\begin{equation}
I_{N,d}=\sum_{\sigma,\tau\in S_N}d^{c(\sigma\vee\tau)}\mathrm{Wg}(\sigma^{-1}\tau,d).
\end{equation}
The reduction to the Haar integral is purely algebraic; details are provided in Appendix~\ref{ObInd}.
\end{proof}
\begin{propo}[Large-$d$ type-II asymptotics]\label{typeIIasymp}
    Fix $N\ge1$. As $d \to \infty$, with constants in the $O(\cdot)$ terms
    depending only on $N$, the following asymptotic expansion holds:
   \begin{equation}
    I_{N,d} = 1+\frac{\binom{N}{2}}{d} +O(d^{-2}),
    \end{equation}
    \begin{equation}
    \overline{p}_{II}=\frac{1}{d^{2N}}\left(1+\frac{\binom{N}{2}}{d} +O(d^{-2})\right) \xrightarrow[]{d \to \infty}0.
    \end{equation}
    Hence, for fixed $N$, the Haar-averaged type-II error vanishes
    asymptotically. 
\end{propo}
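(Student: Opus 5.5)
The plan is to start from the exact Weingarten representation of Theorem~\ref{typeII},
\begin{equation*}
I_{N,d}=\sum_{\sigma,\tau\in S_N} d^{c(\sigma\vee\tau)}\,\mathrm{Wg}(\sigma^{-1}\tau,d),
\end{equation*}
and expand each term in powers of $1/d$. Since $N$ is fixed, the sum has finitely many ($(N!)^2$) terms, so any constants depend only on $N$. The one analytic input is the standard large-$d$ asymptotics of the Weingarten function,
\begin{equation*}
\mathrm{Wg}(\pi,d)=d^{-N-|\pi|}\bigl(\mathrm{Moeb}(\pi)+O(d^{-2})\bigr),
\end{equation*}
where $|\pi|=N-\#\mathrm{cyc}(\pi)$ is the minimal number of transpositions needed to write $\pi$. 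The only facts about the Möbius coefficient that I will use are $\mathrm{Moeb}(e)=1$, so that $\mathrm{Wg}(e,d)=d^{-N}(1+O(d^{-2}))$, and that $\mathrm{Moeb}(\pi)$ is bounded in terms of $N$.

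The key combinatorial step bounds the exponent of $d$ in each term. The blocks of $\sigma\vee\tau$ are exactly the orbits of the subgroup $\langle\sigma,\tau\rangle$ acting on $\{1,\dots,N\}$. Every orbit of $\langle\sigma,\tau\rangle$ is a union of cycles of $\sigma^{-1}\tau$, and likewise a union of cycles of $\sigma$. This gives
\begin{equation*}
c(\sigma\vee\tau)\le \min\bigl(\#\mathrm{cyc}(\sigma),\,N-|\sigma^{-1}\tau|\bigr).
\end{equation*}
Hence the $(\sigma,\tau)$ term has order $d^{\,c(\sigma\vee\tau)-N-|\sigma^{-1}\tau|}\le d^{-2|\sigma^{-1}\tau|}$. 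Every off-diagonal term ($\sigma\neq\tau$) is therefore $O(d^{-2})$.

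For the diagonal terms $\sigma=\tau$, we have $c(\sigma\vee\sigma)=\#\mathrm{cyc}(\sigma)$ and $\mathrm{Wg}(e,d)=d^{-N}(1+O(d^{-2}))$. Each such term therefore equals $d^{-|\sigma|}(1+O(d^{-2}))$, which splits into three cases:
\begin{itemize}
\item $\sigma=e$ contributes $1+O(d^{-2})$;
\item each of the $\binom N2$ transpositions contributes $d^{-1}+O(d^{-3})$;
\item every $\sigma$ with $|\sigma|\ge2$ contributes $O(d^{-2})$.
\end{itemize}
Summing yields $I_{N,d}=1+\binom N2 d^{-1}+O(d^{-2})$. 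As a sanity check, for $N=1$ this gives $I_{1,d}=\int\sum_j|U_{jj}|^2\,dU=1$ exactly, which agrees.

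The statement for $\overline{p}_{II}$ then follows at once from $\overline{p}_{II}=d^{-2N}I_{N,d}$: the bracket stays bounded while $d^{-2N}\to0$.

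I expect the main obstacle to be the exponent bound for the off-diagonal terms, in particular justifying $c(\sigma\vee\tau)\le N-|\sigma^{-1}\tau|$ through the orbit description of the join. I would verify this carefully first, since it is what guarantees that no cross term contributes at order $d^{-1}$. The remaining steps are bookkeeping with the cited Weingarten asymptotics.
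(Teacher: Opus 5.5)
Your proposal is correct and follows essentially the same route as the paper. The paper groups the Weingarten sum by $\pi=\sigma^{-1}\tau$, uses $\mathrm{Wg}(\pi,d)=d^{-N-|\pi|}(\mathrm{Moeb}(\pi)+O(d^{-2}))$ with $\deg S(\pi)\le N-|\pi|$ to get order $d^{-2|\pi|}$, and extracts $1+\binom N2 d^{-1}$ from $\pi=\mathrm{id}$ via $S(\mathrm{id})=\sum_\sigma d^{c(\sigma)}=d(d+1)\cdots(d+N-1)$; this is exactly your diagonal/off-diagonal split, and your orbit argument (each block of $\sigma\vee\tau$ is a union of cycles of $\sigma^{-1}\tau\in\langle\sigma,\tau\rangle$) supplies the justification for the degree bound that the paper only asserts.
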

The proof is based on the Weingarten expansion and an analysis of the
contributions of permutations in $S_N$. It is presented in
Appendix~\ref{asymptotic}. We have thus derived an exact formula for the
Haar-averaged type-II error in discriminating a Haar-random measure-and-prepare
channel family $\{Q^{(N)}_U\}_{U \in U(d)}$ from the identity $\mathbb{I}$. The
result reduces to a unitary integral that can be evaluated using Weingarten
calculus. This first result uses the coherence of a maximally entangled input to
detect the loss of coherence caused by the random measurement channel. We now
turn to the complementary situation in which the observed data are already
classical measurement outcomes. In that regime the identity channel is no longer
the point of comparison; instead, the question is whether the outcome statistics
retain enough information to distinguish a shared Haar-random measurement from
independent blockwise Haar-random measurements.

\section{Discrimination via Haar-averaged measurement statistics}

In this section we study the classical statistical counterpart of the channel
discrimination problem above. The same operation, a Haar-random unitary followed
by a computational-basis measurement, now produces a classical outcome record.
We compare two random measurement models acting on a composite system of
$N=n_1+n_2$ subsystems of local dimension $d$. The models differ in the
structure of the underlying random channels: a single collective channel versus
two independent block-collective channels. Our goal is to characterize and
compare the Haar-averaged distributions of measurement outcomes induced by
these models, and to quantify their aggregate distinguishability using the total
variation distance. We also record how this statistic changes when
block-resolved histograms are available.

\subsection{Measurement channels}
We fix the computational basis $\{\ket{j}\}_{j=0}^{d-1}$ of $\mathbb{C}^d$ and
denote by
\begin{equation}
    \ket{\mathbf{0}} := \ket{0}^{\otimes N}
\end{equation}
the product input state on $\mathcal{H}^{\otimes N}$. Measurement outcomes are
labeled by strings
\begin{equation}
    \mathbf{k}=(k_0,\dots,k_{N-1})\in\{0,\dots,d-1\}^N,
\end{equation}
with the corresponding basis vectors
\begin{equation}
    \ket{\mathbf{k}}:=\ket{k_0}\otimes\cdots\otimes\ket{k_{N-1}}.
\end{equation}
We consider measurement channels obtained by applying a random unitary, followed
by a projective measurement in the computational basis. The classical
post-processing is described by the dephasing map $\Delta$
\begin{equation}
    \Delta(\rho):=\sum_{\mathbf{k}} \bra{\mathbf{k}}\rho\ket{\mathbf{k}}\ket{\mathbf{k}}\!\bra{\mathbf{k}},
\end{equation}
which discards all off-diagonal terms and produces a classical outcome distribution.
\begin{df}[Collective unitary measurement channel]
Let $N=n_1+n_2$. For fixed unitary $U \in U(d)$, define
\begin{equation}
    Q_U^{(N)}(\rho) := \Delta\left( U^{\otimes N}\,\rho\,U^{\dagger\otimes N} \right),
\end{equation}
and write
\begin{equation}
Q_U^{(N)}(\rho) = \sum_{\mathbf{k}} \mathrm{tr}\!\left(E_{U,\mathbf{k}}^{(N)}\,\rho\right) \ket{\mathbf{k}}\!\bra{\mathbf{k}}, \quad \text{where} \quad E_{U,\mathbf{k}}^{(N)}:= U^{\dagger\otimes N} \ket{\mathbf{k}}\!\bra{\mathbf{k}} U^{\otimes N}.
\end{equation}
\end{df}
Averaging over the Haar measure gives
\begin{equation}
\overline{Q}^{(N)}(\rho)=\sum_{\mathbf{k}}\mathrm{tr}\left( \overline{E}_{\mathbf{k}}^{(N)}\rho\right)\ket{\mathbf{k}}\!\bra{\mathbf{k}}, \qquad \overline{E}_{\mathbf{k}}^{(N)}:=\int E_{U,\mathbf{k}}^{(N)} dU. 
\end{equation}
\begin{df}[Two-block-collective unitary measurement channel]
For fixed $U,V \in U(d)$, define
\begin{equation}
Q_{U,V}^{(n_1,n_2)}(\rho):= \Delta\left[ \left(U^{\otimes n_1}\otimes V^{\otimes n_2}\right)\rho \left(U^{\dagger\otimes n_1}\otimes V^{\dagger\otimes n_2}\right) \right]
\end{equation}
where $U$ and $V$ are independent Haar-distributed unitaries in $U(d)$. Equivalently,
\begin{equation}
 Q_{U,V}^{(n_1,n_2)}(\rho) = \sum_{\mathbf{l},\mathbf{m}}\mathrm{tr}\!\left(E_{U, \mathbf{l}}^{(n_1)}\otimes E_{V,\mathbf{m}}^{(n_2)}\,\rho\right)\ket{\mathbf{l}\mathbf{m}}\!\bra{\mathbf{l}\mathbf{m}}.
\end{equation}
\end{df}
We write
\begin{equation}
    \overline{Q}^{(N)}:=\mathbb{E}_{U} \left[Q_U^{(N)}\right], \qquad \overline{Q}^{(n_1,n_2)}:=\mathbb{E}_{U,V} \left[Q_{U,V}^{(n_1,n_2)}\right]
\end{equation}
for the corresponding Haar-averaged channels. In the remainder of this section
we analyze the histogram distributions induced by $Q_U^{(N)}$ and
$Q_{U,V}^{(n_1,n_2)}$ on the input state $\ket{\mathbf{0}}$, and then average
these laws over the Haar measure. Exploiting permutation invariance, we first
derive an exact formula for the total variation distance between the resulting
Haar-averaged aggregate histogram laws, where the observer retains only the
total counts. We then compare this block-unresolved statistic with the
block-resolved statistic given by the ordered pair of histograms.

\begin{figure}[H]
\centering
\begin{tikzpicture}[scale=0.9]
\foreach \y/\lab in {
3/{|0\rangle},
2/{|0\rangle},
0/{|0\rangle},
-1/{|0\rangle}
}
{
    \node[left] at (-0.3,\y) {$\lab$};
}
\node at (-0.7,1.1) {$\vdots$};
\foreach \y in {3,2,0,-1}
{
    \draw[thick,->] (0,\y) -- (1.2,\y);
    \draw[thick, fill=black, rounded corners=0.08cm]
    (1.0,\y-0.32) rectangle +(2.0,0.64); 
    \node[color=white] at (2,\y)
    {$\mathbf{Q_U^{(1)}}$};
    \draw[double, thick, ->] (3.2,\y) -- (4.4,\y);
}
\node at (2.0,1.1) {$\vdots$};
\foreach \y/\lab in {
3/{k_0},
2/{k_1},
0/{k_{N-2}},
-1/{k_{N-1}}
}
{
    \node[right] at (4.4,\y) {$\lab$};
}
\node at (4.7,1.1) {$\vdots$};
\node at (6,1) {$=$};
\node[left] at (8,1)
{$|0\rangle^{\otimes N}$};
\draw[thick,->]
(8.0,1) -- (9,1);
\draw[thick, fill=black, rounded corners=0.08cm]
(9,0.65) rectangle +(2.2,0.7);
\node[color=white] at (10.1,1)
{$\mathbf{Q_U^{(N)}}$};
\draw[double, thick, ->]
(11.2,1) -- (12.5,1);
\node[right] at (12.5,1)
{$\mathbf{k}$};
\end{tikzpicture}
\caption{Equivalent representations of the collective model. The multipartite
implementation using $N$ copies of $Q_U^{(1)}$ is equivalent to the collective
channel $Q_U^{(N)}$ acting on the product input state $|0\rangle^{\otimes N}$,
producing the classical outcome vector $\mathbf{k}=(k_0,\dots,k_{N-1})$.}
\end{figure}

\begin{figure}[H] 
\centering 
\begin{tikzpicture}[scale=0.9]
\foreach \y/\lab in {
3/{|0\rangle},
2/{|0\rangle},
0/{|0\rangle},
-1/{|0\rangle}
}
{
    \node[left] at (-0.3,\y) {$\lab$};
}
\foreach \y in {3,2}
{
    \draw[thick,->] (0,\y) -- (1.2,\y);
    \draw[thick, fill=black, rounded corners=0.08cm]
    (1.2,\y-0.32) rectangle +(2.0,0.64);
    \node[color=white] at (2.2,\y)
    {$\mathbf{Q_U^{(1)}}$};
    \draw[double, thick, ->] (3.2,\y) -- (4.4,\y);
}  
\node[left] at (-0.5,1.1) {$\vdots$};
\node at (2.2,1.1) {$\vdots$};
\node[right] at (4.8,1.1) {$\vdots$};
\foreach \y in {0,-1}
{
    \draw[thick,->] (0,\y) -- (1.2,\y);
    \draw[thick, fill=gray!50, rounded corners=0.08cm]
    (1.2,\y-0.32) rectangle +(2.0,0.64);
    \node at (2.2,\y)
    {$\mathbf{Q_V^{(1)}}$};
    \draw[double, thick, ->] (3.2,\y) -- (4.4,\y);
}
\foreach \y/\lab in {
3/{k_0},
2/{k_1},
0/{k_{N-2}},
-1/{k_{N-1}}
}
{
    \node[right] at (4.7,\y) {$\lab$};
}
\node at (6.8,0.8) {$=$};
\node[left] at (9.0,1.4)
{$|0\rangle^{\otimes n_1}$};
\draw[thick,->]
(9.1,1.4) -- (10.5,1.4);
\draw[thick, fill=black, rounded corners=0.08cm]
(10.5,1.05) rectangle +(2.5,0.7);
\node[color=white] at (11.75,1.4)
{$\mathbf{Q_U^{(n_1)}}$};
\draw[double, thick, ->]
(13,1.4) -- (14.2,1.4);
\node[right] at (14.2,1.4)
{$\mathbf{k}^{(1)}$};
\node[left] at (9.0,0)
{$|0\rangle^{\otimes n_2}$};
\draw[thick,->]
(9.1,0) -- (10.5,0);
\draw[thick, fill=gray!50, rounded corners=0.08cm]
(10.5,-0.35) rectangle +(2.5,0.7);
\node at (11.75,0)
{$\mathbf{Q_V^{(n_2)}}$};
\draw[double, thick, ->]
(13,0) -- (14.2,0);
\node[right] at (14.2,0)
{$\mathbf{k}^{(2)}$};
\end{tikzpicture}
\caption{Equivalent representations of the block model. The multipartite
implementation using $n_1$ copies of $Q_U^{(1)}$ and $n_2$ copies of $Q_V^{(1)}$
is equivalent to two collective channels $Q_U^{(n_1)}$ and $Q_V^{(n_2)}$ acting
independently on the corresponding subsystems, producing the classical outcome
vectors \(\mathbf{k}^{(1)}\) and \(\mathbf{k}^{(2)}\). In the aggregate experiment, the observer
retains only the total histogram obtained after forgetting the block assignment
of the outcomes.}
\end{figure}

\subsection{Histogram reduction for a single random unitary}
In this section, we show that the output statistics of the random measurement
channel can be reduced to a distribution over histograms, which significantly
simplifies the analysis. 

\begin{lem}[Permutation invariance]
\label{permutacje}
Let $\mathbf{k},\mathbf{k}'\in\{0,\dots,d-1\}^N$ be two outcome strings such
that $\mathbf{k}'$ is obtained from $\mathbf{k}$ by a permutation of subsystems.
Then
\begin{equation}
\mathrm{tr}\!\left(\overline{E}_{\mathbf{k}}^{(N)}\ket{\mathbf{0}}\!\bra{\mathbf{0}}\right) = \mathrm{tr}\!\left(\overline{E}_{\mathbf{k}'}^{(N)}\ket{\mathbf{0}}\!\bra{\mathbf{0}}\right).
\end{equation}
Consequently, the probability of observing an outcome $\mathbf{k}$ depends only
on the multiplicities of the symbols appearing in $\mathbf{k}$.
\end{lem}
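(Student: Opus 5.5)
The plan is to make the permutation symmetry explicit at the level of the unaveraged effects and then integrate, so that Haar averaging is never actually used.

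Let $P_\pi$ be the unitary on $(\mathbb{C}^d)^{\otimes N}$ that permutes tensor factors according to $\pi\in S_N$. It acts on product basis vectors by $P_\pi\ket{\mathbf{k}}=\ket{\mathbf{k}_\pi}$, where $\mathbf{k}_\pi$ is the permuted string. By hypothesis $\mathbf{k}'=\mathbf{k}_\pi$ for some $\pi$. The first step is to record three facts:
\begin{enumerate}
\item $P_\pi$ commutes with every tensor power $U^{\otimes N}$, since all factors are the same $U$.
\item $P_\pi\ket{\mathbf{k}}\!\bra{\mathbf{k}}P_\pi^\dagger=\ket{\mathbf{k}'}\!\bra{\mathbf{k}'}$.
\item $P_\pi\ket{\mathbf{0}}=\ket{\mathbf{0}}$, because the input is the symmetric product state $\ket{0}^{\otimes N}$.
\end{enumerate}
From the first two facts I get, for every fixed $U$,
\begin{equation}
E_{U,\mathbf{k}'}^{(N)}=U^{\dagger\otimes N}P_\pi\ket{\mathbf{k}}\!\bra{\mathbf{k}}P_\pi^\dagger U^{\otimes N}=P_\pi E_{U,\mathbf{k}}^{(N)}P_\pi^\dagger .
\end{equation}

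The second step is to average. Integration over $dU$ is linear and $P_\pi$ does not depend on $U$, so $\overline{E}_{\mathbf{k}'}^{(N)}=P_\pi\overline{E}_{\mathbf{k}}^{(N)}P_\pi^\dagger$. Cyclicity of the trace together with the third fact then gives
\begin{equation}
\mathrm{tr}\!\left(\overline{E}_{\mathbf{k}'}^{(N)}\ket{\mathbf{0}}\!\bra{\mathbf{0}}\right)=\mathrm{tr}\!\left(\overline{E}_{\mathbf{k}}^{(N)}P_\pi^\dagger\ket{\mathbf{0}}\!\bra{\mathbf{0}}P_\pi\right)=\mathrm{tr}\!\left(\overline{E}_{\mathbf{k}}^{(N)}\ket{\mathbf{0}}\!\bra{\mathbf{0}}\right).
\end{equation}
In fact the identity already holds pointwise in $U$, before averaging.

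For the consequence: two strings have the same multiplicity vector $(m_0,\dots,m_{d-1})$, with $m_j=\#\{i:k_i=j\}$, exactly when one is a permutation of the other. The probability is constant on each $S_N$-orbit, and the orbits are in bijection with histograms, so the probability depends only on the multiplicities.

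There is no real obstacle here. The only point needing care is the convention for $P_\pi$ (whether it sends $\mathbf{k}$ to $\mathbf{k}_\pi$ or to $\mathbf{k}_{\pi^{-1}}$). This is harmless, since the argument works for any $\pi$ in the orbit.
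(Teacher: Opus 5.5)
Your proposal is correct and follows essentially the same route as the paper: you conjugate by the permutation operator $P_\sigma$, use that it commutes with $U^{\otimes N}$ and fixes $\ket{\mathbf{0}}$, and then apply cyclicity of the trace (the paper's appendix inserts $P_\sigma P_\sigma^T$ to the same effect). Your remark that the identity already holds pointwise in $U$ agrees with the paper's explicit expression $\int\prod_r|\bra{k_r}U\ket{0}|^2\,dU$, so the main-text appeal to Haar invariance is not actually needed.
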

\begin{proof}
Let $\sigma\in S_N$ be a permutation such that $\mathbf{k}'=\sigma(\mathbf{k})$,
and denote by $P_\sigma$ the corresponding permutation operator on
$\mathcal{H}^{\otimes N}$. Using the Haar invariance and the fact that
$P_\sigma\ket{\mathbf{0}}=\ket{\mathbf{0}}$, we obtain
\begin{equation}
\overline{E}_{\mathbf{k}'}^{(N)}=P_\sigma\,\overline{E}_{\mathbf{k}}^{(N)}\,P_\sigma^\dagger,
\end{equation}
which implies
\begin{equation}
\mathrm{tr}\!\left(\overline{E}_{\mathbf{k}}^{(N)}\ket{\mathbf{0}}\!\bra{\mathbf{0}}\right) = \mathrm{tr}\!\left(\overline{E}_{\mathbf{k}'}^{(N)}\ket{\mathbf{0}}\!\bra{\mathbf{0}}\right).
\end{equation}
The detailed permutation-based argument is given in Appendix~\ref{permutacjeobl}.
\end{proof}
Motivated by Lemma~\ref{permutacje}, we introduce the histogram associated with
an outcome string $\mathbf{k}$. We denote by $\Lambda$ the random histogram of
measurement outcomes obtained from the channel $Q_U^{(N)}$ acting on the input
state $|\mathbf{0}\rangle$. Thus, $\Lambda$ is a random variable taking values
in the set of integer vectors $\lambda=(\lambda_0,\dots,\lambda_{d-1}) \in
\mathbb{N}^d$ such that $\sum_{j=0}^{d-1} \lambda_j=N.$ A particular realization
of this random variable will be denoted by $\lambda$.
\begin{df}[Outcome histogram]
Given an outcome string $\mathbf{k}=(k_0,\dots,k_{N-1})$, the associated
histogram is the vector
\begin{equation}
    \begin{split}
        & \lambda=(\lambda_0,\dots,\lambda_{d-1}), \\
        &\lambda_j:=\#\{r: k_r=j\},
    \end{split}
\end{equation}
satisfying $\sum_{j=0}^{d-1}\lambda_j=N$.
\end{df}

\begin{lem}[Conditional multinomial distribution]
\label{multi}
Conditioned on a fixed unitary $U$, the histogram $\Lambda$ of outcomes produced
by $Q_U^{(N)}$ acting on $\ket{\mathbf{0}}$ follows a multinomial distribution
with parameters $(N,W)$, where
\begin{equation*}
W=(W_0,W_1,\dots,W_{d-1}), \quad \text{and} \quad W_j := |\langle j|U|0\rangle|^2 \quad\text{are the probabilities of observing outcome $j$}.
\end{equation*}
That is,
\begin{equation}
\mathbb{P}(\Lambda=\lambda\,|\,U) = \frac{N!}{\prod_{j=0}^{d-1}\lambda_j!} \prod_{j=0}^{d-1} W_j^{\lambda_j}.
\end{equation}
\end{lem}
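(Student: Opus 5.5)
The plan is to compute the conditional law of the outcome string first, and only then pass to histograms. Fix $U$. By the definition of $Q_U^{(N)}$, the probability of observing the string $\mathbf{k}=(k_0,\dots,k_{N-1})$ on input $\ket{\mathbf{0}}=\ket{0}^{\otimes N}$ is
\begin{equation*}
\mathbb{P}(\mathbf{k}\,|\,U)=\mathrm{tr}\!\left(E^{(N)}_{U,\mathbf{k}}\ket{\mathbf{0}}\!\bra{\mathbf{0}}\right)=\left|\bra{\mathbf{k}}U^{\otimes N}\ket{\mathbf{0}}\right|^2 .
\end{equation*}
Since both $U^{\otimes N}$ and the vectors are tensor products, the amplitude factorizes as $\prod_{r=0}^{N-1}\bra{k_r}U\ket{0}$. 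Hence
\begin{equation*}
\mathbb{P}(\mathbf{k}\,|\,U)=\prod_{r=0}^{N-1}W_{k_r},\qquad W_j=|\bra{j}U\ket{0}|^2 .
\end{equation*}
So, conditionally on $U$, the outcomes $k_0,\dots,k_{N-1}$ are i.i.d.\ with law $W$. This law is a genuine probability vector, because $W_j\ge 0$ and $\sum_j W_j=\|U\ket{0}\|^2=1$ by unitarity.

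Next, I group the terms by histogram. Writing $\lambda_j=\#\{r:k_r=j\}$, the product becomes $\prod_{r}W_{k_r}=\prod_{j}W_j^{\lambda_j}$, which depends on $\mathbf{k}$ only through its histogram $\lambda$. This is the conditional (fixed-$U$) analogue of Lemma~\ref{permutacje}, and it holds trivially here with no averaging needed. Therefore
\begin{equation*}
\mathbb{P}(\Lambda=\lambda\,|\,U)=\sum_{\mathbf{k}:\,\mathrm{hist}(\mathbf{k})=\lambda}\ \prod_{j}W_j^{\lambda_j}=\#\{\mathbf{k}:\mathrm{hist}(\mathbf{k})=\lambda\}\cdot\prod_{j}W_j^{\lambda_j}.
\end{equation*}

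Finally, I count the strings. The number of strings in $[d]^N$ with histogram $\lambda$ is the multinomial coefficient $N!/\prod_j\lambda_j!$, since we choose which $\lambda_0$ positions carry symbol $0$, then which $\lambda_1$ of the remaining positions carry symbol $1$, and so on. Substituting this count gives the claimed formula.

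No step is a real obstacle. The only point needing care is the factorization of $\bra{\mathbf{k}}U^{\otimes N}\ket{\mathbf{0}}$. It relies on the product input state, and it explains why the conditional law is multinomial. The later Haar average then mixes these multinomials over the random vector $W$, which is uniform on the simplex. That averaging step is not needed for the present lemma.
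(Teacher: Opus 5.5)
Your proposal is correct and follows essentially the same route as the paper. Both arguments use the product input to factorize $|\bra{\mathbf{k}}U^{\otimes N}\ket{\mathbf{0}}|^2=\prod_r W_{k_r}=\prod_j W_j^{\lambda_j}$, then multiply by the multinomial count $N!/\prod_j\lambda_j!$ of strings with histogram $\lambda$; your check that $\sum_j W_j=1$ by unitarity is a small addition the paper leaves implicit.
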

Since the measurement outcomes are independent and identically distributed
according to the probability vector $W$, the claim follows directly from
standard multinomial statistics. A detailed derivation is provided in
Appendix~\ref{multinomial}.
\begin{propo}[Haar-averaged histogram distribution]
\label{rozkladPN}
Let $\mathbb{P}(\Lambda=\lambda)$ denote the Haar-averaged distribution of the
histogram $\Lambda$ obtained from $Q_U^{(N)}$ acting on $\ket{\mathbf{0}}$. Then
$\mathbb{P}(\Lambda =\lambda)$ is uniform over all compositions of $N$ into $d$
parts:
\begin{equation}
\mathbb{P}(\Lambda =\lambda)=\frac{(d-1)!N!}{(N+d-1)!}.
\end{equation}
\end{propo}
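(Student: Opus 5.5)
The plan is to average the conditional multinomial law of Lemma~\ref{multi} over the Haar measure. Conditioning on $U$ gives
\begin{equation*}
\mathbb{P}(\Lambda=\lambda)=\frac{N!}{\prod_j\lambda_j!}\,\mathbb{E}_U\Big[\prod_{j=0}^{d-1}W_j^{\lambda_j}\Big].
\end{equation*}
So everything reduces to computing the mixed moments of the vector $W=(|U_{00}|^2,\dots,|U_{d-1,0}|^2)$.

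\textbf{Step 1: identify the law of $W$.} The first column $U\ket{0}$ of a Haar unitary is uniformly distributed on the unit sphere of $\mathbb{C}^d$. Write it as $g/\|g\|$ with $g$ a standard complex Gaussian vector. Then $|g_j|^2$ are i.i.d.\ $\mathrm{Exp}(1)=\mathrm{Gamma}(1,1)$, so $W=(|g_j|^2/\sum_i|g_i|^2)_j$ is Dirichlet$(1,\dots,1)$, i.e.\ uniform on the simplex $\Delta_{d-1}$ with density $(d-1)!$ with respect to Lebesgue measure on the simplex.

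\textbf{Step 2: compute the moments.} By the Dirichlet integral,
\begin{equation*}
\mathbb{E}\Big[\prod_j W_j^{\lambda_j}\Big]=(d-1)!\,\frac{\prod_j\lambda_j!}{(N+d-1)!}.
\end{equation*}
This can also be checked from the Gaussian representation. By independence of $\|g\|^2\sim\mathrm{Gamma}(d,1)$ and $W$,
\begin{equation*}
\mathbb{E}\Big[\prod_j|g_j|^{2\lambda_j}\Big]=\prod_j\lambda_j!=\mathbb{E}\big[\|g\|^{2N}\big]\,\mathbb{E}\Big[\prod_j W_j^{\lambda_j}\Big],
\end{equation*}
and $\mathbb{E}\|g\|^{2N}=(N+d-1)!/(d-1)!$.

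\textbf{Step 3: conclude.} Multiplying by the multinomial coefficient, the factor $\prod_j\lambda_j!$ cancels. This leaves $(d-1)!\,N!/(N+d-1)!=\binom{N+d-1}{d-1}^{-1}$, independent of $\lambda$, which is exactly uniformity over the $\binom{N+d-1}{d-1}$ weak compositions. As a consistency check, the probabilities sum to one.

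The only step needing care is Step 1, the justification that Haar measure induces the uniform (Dirichlet$(1^d)$) law on $W$. This follows from the unitary invariance of the column distribution together with the Gaussian polar decomposition, i.e.\ the independence of $g/\|g\|$ and $\|g\|$. Alternatively, one could bypass it with Weingarten calculus: the integral $\int\prod_j|U_{j0}|^{2\lambda_j}\,dU$ involves only one column. For such an integral the permutation sum collapses to $\prod_j\lambda_j!\,(d-1)!/(N+d-1)!$, using the known formula for moments of a single column.
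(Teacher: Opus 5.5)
Your proof is correct and follows the paper's route: the multinomial coefficient times the Haar moment $\mathbb{E}_U[\prod_j W_j^{\lambda_j}]$, the identification of $W$ as $\mathrm{Dir}(1,\dots,1)$, and the Dirichlet integral, which the paper evaluates in Appendix~\ref{dirichlet} by the same Gamma-function change of variables that underlies your Gaussian check. Your one addition is Step 1, where the complex-Gaussian normalization argument justifies the uniform law of $W$ on the simplex; the paper only asserts this.
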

\begin{proof}
We compute the Haar average:
\begin{equation}
    \begin{split}
        & \mathbb{P}(\Lambda =\lambda) = m(\lambda) \mathbb{E}_U \left[ \prod_{j=0}^{d-1} W_j^{\lambda_j} \right] = m(\lambda) \int \prod_{j=0}^{d-1} W_j^{\lambda_j} dU.\\
    \end{split}
\end{equation}
where $m(\lambda)$ is the number of distinct sequences $\mathbf{k}$ with counts
specified by $\lambda$, i.e.
\begin{equation}
m(\lambda):=\binom{N}{\lambda_0,\dots,\lambda_{d-1}}=\frac{N!}{\prod_j \lambda_j!}.
\end{equation}
For a fixed vector $\lambda = (\lambda_0,\dots,\lambda_{d-1})$, denote
\begin{equation}
I(\lambda) := \int \prod_{j=0}^{d-1} W_j^{\lambda_j} dU .
\end{equation}
For a Haar-distributed unitary $U$, the probability vector
$W=(W_0,\dots,W_{d-1})$ is uniformly distributed over the probability simplex.
Thus, the vector $W$ has the Dirichlet distribution $\text{Dir}(1,\dots,1)$ on
the simplex $\Delta_{d-1}$, since under the Haar measure all probability
configurations are equally likely. Here
\begin{equation}
\Delta_{d-1} = \{ p \in \mathbb{R}^d: p_i \ge 0, \sum_{i=0}^{d-1} p_i =1\}.
\end{equation}
Therefore
\begin{equation}
\mathbb{P}(\Lambda =\lambda) = m(\lambda)I(\lambda)=\frac{(d-1)!N!}{(N+d-1)!}.
\end{equation}
The explicit value of $I(\lambda)$ is calculated in Appendix~\ref{dirichlet}.
\end{proof}
Since the distribution is uniquely determined, we denote
\begin{equation}
P^{(N)}(\lambda):=\mathbb{P}(\Lambda =\lambda).
\end{equation}

\begin{lem}[Blockwise Haar–averaged histogram distribution] \label{wspolny}
Consider the random measurement channel $Q_{U,V}^{(n_1,n_2)}$ acting on the product input state
\begin{equation}
\ket{\mathbf{0}}=\ket{0}^{\otimes n_1}\otimes \ket{0}^{\otimes n_2}.
\end{equation}
Let $\lambda^{(1)}$ and $\lambda^{(2)}$ denote the outcome histograms associated
with the first and second block, respectively. Then the following holds:
\begin{enumerate}
    \item Conditioned on fixed unitaries $U$ and $V$, the outcomes within each
    block are independent and identically distributed, and the corresponding
    histograms follow multinomial distributions.

    \item After averaging over the Haar measure, the joint distribution of
    $(\lambda^{(1)},\lambda^{(2)})$ factorizes because the two Haar-distributed unitaries are independent:
    \begin{equation}
    \mathbb{P}(\Lambda^{(1)}=\lambda^{(1)},\Lambda^{(2)}=\lambda^{(2)})
    = P^{(n_1)}(\lambda^{(1)}) \cdot P^{(n_2)}(\lambda^{(2)})=\frac{n_1!n_2!(d-1)!(d-1)!}{(n_1+d-1)!(n_2+d-1)!},
    \end{equation}
    for all histograms satisfying $ \sum_j \lambda^{(1)}_j=n_1 $ and $\sum_j \lambda^{(2)}_j=n_2.$
\end{enumerate}
In particular, all admissible pairs of histograms occur with equal probability.
\end{lem}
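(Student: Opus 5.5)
The plan is to reduce both claims of Lemma~\ref{wspolny} to the single-unitary results already established, namely Lemma~\ref{multi} and Proposition~\ref{rozkladPN}. The first step is to fix $U,V$ and expand the conditional outcome probability. Since the input is the product state $\ket{0}^{\otimes n_1}\otimes\ket{0}^{\otimes n_2}$ and the unitary is the product $U^{\otimes n_1}\otimes V^{\otimes n_2}$, the probability of an outcome string $(\mathbf{l},\mathbf{m})$ factorizes as
\begin{equation*}
\mathrm{tr}\!\left(E^{(n_1)}_{U,\mathbf{l}}\otimes E^{(n_2)}_{V,\mathbf{m}}\,\ket{\mathbf{0}}\!\bra{\mathbf{0}}\right)=\prod_{r=0}^{n_1-1}W^{U}_{l_r}\prod_{s=0}^{n_2-1}W^{V}_{m_s},
\end{equation*}
where $W^U_j=|\bra{j}U\ket{0}|^2$ and $W^V_j=|\bra{j}V\ket{0}|^2$. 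This exhibits the outcomes as conditionally independent. Within the first block they are i.i.d.\ with law $W^U$, and within the second block they are i.i.d.\ with law $W^V$. Summing over strings with prescribed counts, exactly as in Lemma~\ref{multi}, shows that conditionally on $(U,V)$ the pair $(\Lambda^{(1)},\Lambda^{(2)})$ is distributed as a product of the multinomials $\mathrm{Mult}(n_1,W^U)$ and $\mathrm{Mult}(n_2,W^V)$. This gives item 1.

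For item 2, the next step is to average over the product Haar measure $dU\,dV$. Because the conditional probability is a product of a function of $U$ alone and a function of $V$ alone, Fubini's theorem splits the double integral into a product of two single integrals. Each factor is precisely the Haar-averaged histogram probability of Proposition~\ref{rozkladPN}, with $N$ replaced by $n_1$ and $n_2$ respectively. Substituting $P^{(n_i)}(\lambda^{(i)})=\frac{(d-1)!\,n_i!}{(n_i+d-1)!}$ then gives the stated closed form. That expression does not depend on $(\lambda^{(1)},\lambda^{(2)})$, so the final remark that all admissible pairs are equally likely follows at once.

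I do not expect a genuine obstacle, since everything is inherited from the single-block case. The only point needing care is bookkeeping. Ordering the outcome string as $\mathbf{l}$ followed by $\mathbf{m}$ keeps the block labels intact, so the multinomial coefficient factorizes as $\binom{n_1}{\lambda^{(1)}}\binom{n_2}{\lambda^{(2)}}$ rather than becoming a single coefficient over $N$. This separation also guarantees that the Dirichlet moment computation behind Proposition~\ref{rozkladPN} applies to each block separately.
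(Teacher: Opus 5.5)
Your proposal is correct and follows essentially the same route as the paper's proof in Appendix~\ref{Pn1Pn2}. Both arguments factorize the conditional outcome probability, for the product input and product unitary, into two block multinomials with weights $W_j=|\bra{j}U\ket{0}|^2$ and $Z_j=|\bra{j}V\ket{0}|^2$, and then use the independence of $U$ and $V$ to split the Haar average into two single-block Dirichlet integrals, each giving $P^{(n_i)}(\lambda^{(i)})=\frac{(d-1)!\,n_i!}{(n_i+d-1)!}$.
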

\begin{proof}
The result follows from the permutation invariance of the Haar measure and the
independence of the two unitary blocks. A detailed derivation, including the
multinomial structure and the Haar integration leading to the Dirichlet
distribution, is given in Appendix~\ref{Pn1Pn2}.
\end{proof}
\begin{propo}[Histogram decompositions]
\label{wyliczenieL}
Fix a histogram $\lambda=(\lambda_0,\dots,\lambda_{d-1})$ with
$\sum_{j=0}^{d-1}\lambda_j=N$. Let $L_{\lambda}(n_1,n_2)$ denote the number of
vectors
\begin{equation}
\lambda^{(1)}=(k_0,\dots,k_{d-1})\in\mathbb{N}^d
\end{equation}
satisfying
\begin{equation}
\sum_{j=0}^{d-1} k_j = n_1, \quad \text{and} \quad 0\le k_j \le \lambda_j\ \text{for all } j.
\end{equation}
Equivalently, $L_{\lambda}(n_1,n_2)$ counts the number of ways in which the
histogram $\lambda$ can be decomposed as $\lambda=\lambda^{(1)}+\lambda^{(2)}$
with $\sum_j \lambda^{(1)}_j=n_1$ and $\sum_j \lambda^{(2)}_j=n_2$. Then
$L_{\lambda}(n_1,n_2)$ admits the inclusion–exclusion representation
\begin{equation}
\label{wzorL}
L_{\lambda}(n_1,n_2)=\sum_{S\subseteq\{0,\dots,d-1\}}(-1)^{|S|}\binom{n_1-\sum_{j\in S}(\lambda_j+1)+d-1}{d-1},
\end{equation}
where $\binom{a}{b}=0$ for $a<b$.
\end{propo}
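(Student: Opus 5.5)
The plan has two parts. First I will check the bijection behind the ``equivalently'' clause. Then I will prove \eqref{wzorL} by inclusion--exclusion applied to a stars-and-bars count.

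\textbf{The bijection.} Given $\lambda^{(1)}=(k_0,\dots,k_{d-1})$ with $\sum_j k_j=n_1$ and $0\le k_j\le\lambda_j$, set $\lambda^{(2)}:=\lambda-\lambda^{(1)}$. This vector is nonnegative and sums to $N-n_1=n_2$. Conversely, any decomposition $\lambda=\lambda^{(1)}+\lambda^{(2)}$ into nonnegative integer vectors with the prescribed sums forces $0\le\lambda^{(1)}_j\le\lambda_j$. So the two descriptions of $L_\lambda(n_1,n_2)$ count the same set.

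\textbf{Setting up inclusion--exclusion.} Let $\mathcal{A}$ be the set of all $k\in\mathbb{N}^d$ with $\sum_j k_j=n_1$. By stars and bars,
\begin{equation*}
|\mathcal{A}|=\binom{n_1+d-1}{d-1}.
\end{equation*}
For each $j$ define the ``bad'' set $A_j:=\{k\in\mathcal{A}: k_j\ge\lambda_j+1\}$. Then
\begin{equation*}
L_\lambda(n_1,n_2)=\Big|\mathcal{A}\setminus\bigcup_j A_j\Big|=\sum_{S\subseteq\{0,\dots,d-1\}}(-1)^{|S|}\Big|\bigcap_{j\in S}A_j\Big|,
\end{equation*}
where the empty intersection is understood as $\mathcal{A}$.

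\textbf{Evaluating the intersections.} To compute $|\bigcap_{j\in S}A_j|$, substitute $k_j'=k_j-(\lambda_j+1)$ for $j\in S$ and $k_j'=k_j$ otherwise. This is a bijection onto nonnegative integer vectors with $\sum_j k_j'=n_1-\sum_{j\in S}(\lambda_j+1)$. Stars and bars then gives
\begin{equation*}
\Big|\bigcap_{j\in S}A_j\Big|=\binom{n_1-\sum_{j\in S}(\lambda_j+1)+d-1}{d-1}
\end{equation*}
whenever the reduced total $n_1-\sum_{j\in S}(\lambda_j+1)$ is nonnegative. Substituting this into the inclusion--exclusion sum yields \eqref{wzorL}.

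\textbf{Main obstacle: negative reduced totals.} The only delicate point is the case where the reduced total $m:=n_1-\sum_{j\in S}(\lambda_j+1)$ is negative. Then the intersection is empty, so the binomial must vanish. The top entry is $a=m+d-1$, so $m<0$ gives exactly $a<d-1$. The convention $\binom{a}{b}=0$ for $a<b$ therefore assigns zero, as required. This holds even when $a$ is negative, which the convention in the statement explicitly covers. Hence the formula is valid for every $S$ and the proof is complete.
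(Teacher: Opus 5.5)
Your proposal is correct and follows the same route as the paper's proof: stars and bars for the unconstrained count, inclusion--exclusion over the bad sets $A_j=\{k_j\ge\lambda_j+1\}$, the shift $k_j'=k_j-(\lambda_j+1)$ for $j\in S$, and the convention $\binom{a}{b}=0$ for $a<b$ to make the terms with negative reduced total vanish. Your explicit checks of the bijection behind the ``equivalently'' clause and of the case of a negative top entry are slightly more careful than the paper's appendix.
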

The formula follows from a standard inclusion–exclusion argument for integer
compositions with upper bounds. Details are given in Appendix~\ref{liczbapar}.

\begin{propo}[Effective aggregate histogram distribution]
\label{lambdarozklad}
Let $P^{(n_1,n_2)}(\lambda)$ denote the Haar-averaged distribution of the total
outcome histogram $\Lambda$ of size $N=n_1+n_2$ obtained from
$Q_{U,V}^{(n_1,n_2)}$ acting on the input state
$\ket{\mathbf{0}}=\ket{0}^{\otimes n_1} \otimes \ket{0}^{\otimes n_2}$. This is
the block-unresolved distribution obtained after the coarse-graining
\begin{equation}
(\lambda^{(1)},\lambda^{(2)})\longmapsto \lambda=\lambda^{(1)}+\lambda^{(2)}.
\end{equation}
Then
\begin{equation}
    \begin{split}
        & P^{(n_1,n_2)}(\lambda):=\frac{n_1!n_2!(d-1)!(d-1)!}{(n_1+d-1)!(n_2+d-1)!}L_{\lambda}(n_1,n_2), \\
    \end{split}
\end{equation}
where $L_{\lambda}(n_1,n_2)$ is defined in Proposition~\ref{wyliczenieL}.
\end{propo}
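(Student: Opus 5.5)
The plan is to obtain $P^{(n_1,n_2)}$ as the pushforward of the joint block law of Lemma~\ref{wspolny} under the addition map $(\lambda^{(1)},\lambda^{(2)})\mapsto\lambda^{(1)}+\lambda^{(2)}$. First I would make precise what the aggregate histogram is. The outcome string of $Q_{U,V}^{(n_1,n_2)}$ is $\mathbf{k}=(\mathbf{k}^{(1)},\mathbf{k}^{(2)})$, and its histogram counts symbols over all $N$ positions. It therefore equals the sum of the block histograms, since $\#\{r:k_r=j\}$ splits over the two disjoint index sets. Consequently the event $\{\Lambda=\lambda\}$ is the disjoint union, over pairs $(\lambda^{(1)},\lambda^{(2)})$ with $\lambda^{(1)}+\lambda^{(2)}=\lambda$, $|\lambda^{(1)}|=n_1$ and $|\lambda^{(2)}|=n_2$, of the events $\{\Lambda^{(1)}=\lambda^{(1)},\Lambda^{(2)}=\lambda^{(2)}\}$. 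By additivity of probability, and after exchanging the Haar average with this finite sum, $P^{(n_1,n_2)}(\lambda)$ is the sum of the Haar-averaged joint probabilities over these pairs.

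Next I would invoke Lemma~\ref{wspolny}. After Haar averaging, every admissible pair has the same probability,
\[
\frac{n_1!\,n_2!\,(d-1)!\,(d-1)!}{(n_1+d-1)!\,(n_2+d-1)!},
\]
independently of the particular pair. The sum therefore collapses to this constant times the number of admissible pairs.

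The remaining step is to identify that number with $L_\lambda(n_1,n_2)$. A pair is determined by $\lambda^{(1)}$, because $\lambda^{(2)}=\lambda-\lambda^{(1)}$. The constraint $\lambda^{(2)}\in\mathbb{N}^d$ is equivalent to $0\le\lambda^{(1)}_j\le\lambda_j$ for every $j$. Once $|\lambda^{(1)}|=n_1$ holds, $|\lambda^{(2)}|=N-n_1=n_2$ is automatic. This is exactly the set counted in Proposition~\ref{wyliczenieL}, so the bijection $\lambda^{(1)}\leftrightarrow(\lambda^{(1)},\lambda-\lambda^{(1)})$ gives the claimed formula.

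No step is a real obstacle. The one point that needs care is the first: the aggregate histogram must be stated as a deterministic function of the block outcomes, so that the marginalization is legitimate. One should also note, as a consistency check, that summing the formula over $\lambda$ yields the total number of pairs $\binom{n_1+d-1}{d-1}\binom{n_2+d-1}{d-1}$ times the constant, which equals $1$.
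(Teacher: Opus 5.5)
Your proposal is correct and follows the paper's proof essentially verbatim: you sum the uniform joint block law of Lemma~\ref{wspolny} over pairs with $\lambda^{(1)}+\lambda^{(2)}=\lambda$ and count those pairs by $L_\lambda(n_1,n_2)$. Your explicit bijection $\lambda^{(1)}\leftrightarrow(\lambda^{(1)},\lambda-\lambda^{(1)})$ and the normalization check (the constant equals $1/(M_1M_2)$, and $\sum_\lambda L_\lambda(n_1,n_2)=M_1M_2$) are small details the paper leaves implicit.
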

\begin{proof}
We can obtain the effective distribution $P^{(n_1,n_2)}(\lambda)$ for $\lambda =
\lambda^{(1)}+\lambda^{(2)}$ by summing over all admissible pairs
$(\lambda^{(1)},\lambda^{(2)})$ satisfying this constraint:
\begin{equation}
    \begin{split}
        &P^{(n_1,n_2)}(\lambda) = \sum_{\lambda^{(1)},\lambda^{(2)}} 
        \mathbf{1}_{\{\lambda=\lambda^{(1)}+\lambda^{(2)}\}} \mathbb{P}(\Lambda^{(1)}=\lambda^{(1)}, \Lambda^{(2)}=\lambda^{(2)}).
    \end{split}
\end{equation}
Since each pair contributes the same constant value, we obtain from
Lemma~\ref{wspolny}
\begin{equation}
    \begin{split}
        & P^{(n_1,n_2)}(\lambda) = \frac{n_1!n_2!(d-1)!(d-1)!}{(n_1+d-1)!(n_2+d-1)!}L_{\lambda}(n_1,n_2). \\
    \end{split}
\end{equation}
This quantity counts the number of ways in which the total histogram $\lambda$
can be decomposed into two component histograms $\lambda^{(1)}$ and
$\lambda^{(2)}$ of sizes $n_1$ and $n_2$, respectively.
\end{proof}

\subsection{Total variation distance}
\begin{cor}[Aggregate total variation distance]
We quantify the discrepancy between the single–block distribution
$P^{(N)}(\lambda)$ and the block–separated aggregate distribution
$P^{(n_1,n_2)}(\lambda)$ by the total variation distance
\begin{equation}
\mathrm{TVD}\left(P^{(N)},P^{(n_1,n_2)}\right)=\frac12\sum_{\lambda}\big|P^{(N)}(\lambda)-P^{(n_1,n_2)}(\lambda)\big|.
\end{equation}
Let
\begin{equation}
M=\binom{N+d-1}{d-1},\qquad M_1=\binom{n_1+d-1}{d-1},\qquad M_2=\binom{n_2+d-1}{d-1}.
\end{equation}
and define
\begin{equation}
\overline{M}=\frac{M_1M_2}{M}.
\end{equation}
Using Proposition~\ref{lambdarozklad} and the fact that
\begin{equation}
P^{(N)}(\lambda)=\frac{1}{M}, \qquad P^{(n_1,n_2)}(\lambda)=\frac{L_{\lambda}(n_1,n_2)}{M_1M_2}, \qquad \frac{P^{(n_1,n_2)}(\lambda)}{P^{(N)}(\lambda)}=\frac{L_{\lambda}(n_1,n_2)}{\overline{M}}
\end{equation}
we obtain
\begin{equation}
\mathrm{TVD}\left(P^{(N)},P^{(n_1,n_2)}\right)=\frac12 \mathbb{E}_{\lambda \sim P^{(N)}}\left[\left| 1- \frac{L_{\lambda}(n_1,n_2)}{\overline{M}}\right|\right].
\end{equation}
\end{cor}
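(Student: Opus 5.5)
The plan is to substitute the closed forms from Propositions~\ref{rozkladPN} and~\ref{lambdarozklad} into the definition of the total variation distance, and then rewrite the sum over histograms as an expectation under the uniform law $P^{(N)}$. The first step is to identify the constants. Proposition~\ref{rozkladPN} gives $P^{(N)}(\lambda)=\frac{(d-1)!N!}{(N+d-1)!}$, which is exactly $1/\binom{N+d-1}{d-1}=1/M$. Likewise, the prefactor in Proposition~\ref{lambdarozklad} equals $\frac{n_1!(d-1)!}{(n_1+d-1)!}\cdot\frac{n_2!(d-1)!}{(n_2+d-1)!}=\frac{1}{M_1M_2}$. Hence
\[
P^{(n_1,n_2)}(\lambda)=\frac{L_\lambda(n_1,n_2)}{M_1M_2},
\qquad
\frac{P^{(n_1,n_2)}(\lambda)}{P^{(N)}(\lambda)}=\frac{M\,L_\lambda(n_1,n_2)}{M_1M_2}=\frac{L_\lambda(n_1,n_2)}{\overline M}.
\]
This step is elementary, since both sides are products of factorials.

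The second step is to check that the two laws live on the same finite set. That set is the collection of all histograms $\lambda\in\mathbb N^d$ with $\sum_j\lambda_j=N$, and it has $M$ elements. $P^{(N)}$ is strictly positive on every such $\lambda$. $P^{(n_1,n_2)}$ is supported on the same set: it vanishes off it, and on it $L_\lambda(n_1,n_2)\ge 1$, because one can always split $\lambda$ greedily into parts of sizes $n_1$ and $n_2$. So the sum in the total variation distance runs over exactly these $M$ histograms, and no mass of $P^{(n_1,n_2)}$ lies outside the support of $P^{(N)}$. As a consistency check, summing $L_\lambda$ over $\lambda$ counts all pairs $(\lambda^{(1)},\lambda^{(2)})$, which gives $M_1M_2$, so $P^{(n_1,n_2)}$ is indeed normalized.

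The final step is to factor out $P^{(N)}(\lambda)=1/M$:
\[
\tfrac12\sum_\lambda\bigl|P^{(N)}(\lambda)-P^{(n_1,n_2)}(\lambda)\bigr|
=\tfrac12\sum_\lambda P^{(N)}(\lambda)\Bigl|1-\frac{P^{(n_1,n_2)}(\lambda)}{P^{(N)}(\lambda)}\Bigr|
=\tfrac12\,\mathbb E_{\lambda\sim P^{(N)}}\Bigl[\Bigl|1-\frac{L_\lambda(n_1,n_2)}{\overline M}\Bigr|\Bigr].
\]
None of these steps is a real obstacle. The only point that needs care is the support argument in the second step, which justifies dividing by $P^{(N)}$ everywhere and ensures nothing is lost outside its support. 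Since $P^{(N)}$ is uniform and positive on the full composition set, this point is immediate.
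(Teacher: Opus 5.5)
Your proposal is correct and follows essentially the same route as the paper: read off $P^{(N)}(\lambda)=1/M$ and $P^{(n_1,n_2)}(\lambda)=L_\lambda(n_1,n_2)/(M_1M_2)$ from Propositions~\ref{rozkladPN} and~\ref{lambdarozklad}, then factor out the uniform weight $1/M$ to write the sum as an expectation under $P^{(N)}$. Your support and normalization checks are sound extras; strictly, only the positivity of $P^{(N)}$ on all compositions of $N$ is needed, not $L_\lambda\ge 1$.
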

Although both distributions admit explicit formulas, the evaluation of the
aggregate total variation distance requires averaging the combinatorial quantity
$L_{\lambda}(n_1,n_2)$ over all histograms. This motivates the asymptotic
analysis carried out in the next section.

\subsection{Block-resolved histograms and operational interpretation}
\label{blockresolved}

The preceding aggregate total variation distance refers to the statistic
\begin{equation}
\lambda=\lambda^{(1)}+\lambda^{(2)}.
\end{equation}
This is the natural statistic when the measurement record preserves outcome
labels but not the block of origin of each occurrence. If the observer instead
has access to the ordered pair of block histograms
$(\lambda^{(1)},\lambda^{(2)})$, then the relevant experiment is different and
generally more informative.

Let
\begin{equation}
\mathcal C_{m,d}:=\left\{a\in\mathbb N^d:\sum_{j=0}^{d-1}a_j=m\right\},\qquad
|\mathcal C_{m,d}|=\binom{m+d-1}{d-1}.
\end{equation}
For $a\in\mathcal C_{n_1,d}$ and $b\in\mathcal C_{n_2,d}$, write $\lambda=a+b$.
In the shared-unitary model, corresponding to the collective channel split into
two observed blocks, both histograms are sampled from the same Haar-random
probability vector
\begin{equation}
W=(W_0,\dots,W_{d-1})\sim\mathrm{Dirichlet}(1,\dots,1).
\end{equation}
Conditioned on $W$, the two block histograms are independent multinomials with
parameters $(n_1,W)$ and $(n_2,W)$:
\begin{equation}
P(a,b\mid W)=\left(\frac{n_1!}{\prod_j a_j!}\prod_j W_j^{a_j}\right)\left(\frac{n_2!}{\prod_j b_j!}\prod_j W_j^{b_j}\right).
\end{equation}
Hence
\begin{equation}
P(a,b\mid W)=\frac{n_1!n_2!}{\prod_j a_j!b_j!}\prod_{j=0}^{d-1}W_j^{a_j+b_j}.
\end{equation}
Averaging over the Haar measure and using the Dirichlet integral from
Appendix~\ref{dirichlet}, we obtain
\begin{equation}
\int_{\Delta_{d-1}}
\prod_j W_j^{a_j+b_j}\,dU
=
(d-1)!
\frac{\prod_j (a_j+b_j)!}{(N+d-1)!}.
\end{equation}
Therefore
\begin{equation}
\label{pairshared}
P_{\mathrm{sh}}(a,b)
=\frac{n_1!n_2!(d-1)!}{(N+d-1)!}
\prod_{j=0}^{d-1}\frac{(a_j+b_j)!}{a_j!b_j!}.
\end{equation}
Equivalently,
\begin{equation}
P_{\mathrm{sh}}(a,b)=\frac{1}{M}\,\frac{\prod_j\binom{a_j+b_j}{a_j}}{\binom{N}{n_1}}.
\end{equation}
In the independent-block model, the two Haar averages factorize. By the uniform
histogram law derived earlier, each block histogram is uniformly distributed on
its own composition set $P(a)=\frac1{M_{1}}$, $P(b)=\frac1{M_{2}}$. Since the two blocks are independent,
\begin{equation}
\label{pairind}
P_{\mathrm{ind}}(a,b)=\frac{1}{M_{1}M_{2}}.
\end{equation}
Thus the block-resolved total variation distance is
\begin{equation}
\label{pairtvd}
\mathrm{TVD}\left(P_{\mathrm{sh}},P_{\mathrm{ind}}\right)=\frac12 \sum_{a,b}\left|P_{\mathrm{sh}}(a,b)-P_{\mathrm{ind}}(a,b)\right|=\frac12\sum_{a\in\mathcal C_{n_1,d}}\sum_{b\in\mathcal C_{n_2,d}}\left|\frac{1}{M}\frac{\prod_j\binom{a_j+b_j}{a_j}}{\binom{N}{n_1}}-\frac{1}{M_{1}M_{2}}\right|.
\end{equation}
This distance is also the operational distance for full block-labeled outcome
strings after Haar averaging: both models are exchangeable within each block, so
conditional on the pair $(a,b)$ all strings with those histograms have the same
probability under both hypotheses, and the likelihood ratio depends only on
$(a,b)$. Since under the independent-block model the pair $(a,b)$ is uniformly
distributed on $\mathcal C_{n_1,d}\times \mathcal C_{n_2,d}$, we may rewrite the
total variation distance as an expectation with respect to the uniform law.
\begin{equation}
\label{pairtvdlr}
\mathrm{TVD}\left(P_{\mathrm{sh}},P_{\mathrm{ind}}\right)
=\frac12\,\mathbb E_{a,b\,\mathrm{uniform}}\left[\left|\frac{M_{1}M_{2}}{M\binom{N}{n_1}} \prod_{j=0}^{d-1}\binom{a_j+b_j}{a_j}-1\right|\right].
\end{equation}
The aggregate formulas above are recovered by applying the coarse-graining map
$(a,b)\mapsto a+b$. Indeed, by Vandermonde's identity,
\begin{equation}
\sum_{a+b=\lambda}\prod_{j=0}^{d-1}\binom{\lambda_j}{a_j}=\binom{N}{n_1},
\end{equation}
and therefore
\begin{equation}
\sum_{a+b=\lambda}P_{\mathrm{sh}}(a,b)=\frac{1}{M}=P^{(N)}(\lambda),
\qquad
\sum_{a+b=\lambda}P_{\mathrm{ind}}(a,b)=\frac{L_\lambda(n_1,n_2)}{M_{1}M_{2}}=P^{(n_1,n_2)}(\lambda).
\end{equation}
The map $(a,b)\mapsto a+b=\lambda$ forgets the information about which block
produced a given outcome count. In other words, it is a coarse-graining from the
pair histogram to the aggregate histogram. Since total variation distance cannot
increase under deterministic post-processing, we obtain
\begin{equation}
\label{dataprocessing}
\mathrm{TVD}\left(P^{(N)},P^{(n_1,n_2)}\right)\le \mathrm{TVD}\left(P_{\mathrm{sh}},P_{\mathrm{ind}}\right).
\end{equation}
Thus the total-histogram TVD studied in the sequel is a block-unresolved
distinguishability, or equivalently a lower bound on the distinguishability
available from block-resolved data.

The extra information in $(a,b)$ can also be seen conditionally on the total
histogram. Under the shared-unitary model,
\begin{equation}
P_{\mathrm{sh}}(a\mid \lambda)
=\frac{\prod_j\binom{\lambda_j}{a_j}}{\binom{N}{n_1}},
\end{equation}
which is a multivariate hypergeometric law: the first block looks like a random
subset of $n_1$ samples from the $N$ aggregate samples. Under the
independent-block model,
\begin{equation}
P_{\mathrm{ind}}(a\mid \lambda)=\frac{1}{L_\lambda(n_1,n_2)}.
\end{equation}
Hence the two conditional laws coincide for collision-free histograms but differ
once some $\lambda_j\ge2$. For instance, if $\lambda_j=2$, the shared-unitary
model assigns relative weights $1,2,1$ to the local decompositions
$(a_j,b_j)=(2,0),(1,1),(0,2)$, while the independent-block model weights
admissible decompositions uniformly. The pair statistic therefore detects
whether a collision occurred inside a single block or across the two blocks,
whereas the aggregate statistic only records that a collision occurred.

\section{\texorpdfstring{Asymptotics of TVD between $P^{(N)}$ and
$P^{(n_1,n_2)}$}{Asymptotics of TVD between P(N) and P(n1,n2)}}

\subsection{Overview of asymptotic regimes}

In this section we investigate the asymptotic behaviour of the aggregate total
variation distance between the Haar–averaged total-histogram laws
$P^{(N)}(\lambda)$ and $P^{(n_1,n_2)}(\lambda)$ in the joint limit
$N,d\to\infty$. Thus the statistic in this section is the block-unresolved
histogram $\lambda=\lambda^{(1)}+\lambda^{(2)}$. Depending on the relative
growth of the number of copies $N$ and the Hilbert space dimension $d$, we
distinguish several asymptotic regimes.

\begin{itemize}
    \item for fixed $N$ and $d\to\infty$, we show that the aggregate total
    variation distance vanishes asymptotically with leading term
    $\frac{n_1n_2}{d}$.
    \item for fixed $d$ and $N\to\infty$, we show that the aggregate total
    variation distance converges to a nontrivial simplex-limit depending on $d$
    and $\alpha=\frac{n_1}{N}$.
    \item In the sparse joint scaling regime $N=o(\sqrt{d})$, we recover the
    same leading collision-driven asymptotic $\frac{n_1n_2}{d}$,
    \item in the critical scaling regime $N/\sqrt d\to c$, we obtain an exact
    limiting formula for the aggregate $\mathrm{TVD}$ in terms of a
    Poisson-distributed collision count, leading to a closed probabilistic
    expression for the distinguishability. 
\end{itemize}
Each regime is analyzed separately in the subsections below. Depending on the
relative scaling of $N$ and $d$, the distinguishability exhibits qualitatively
different asymptotic behaviours. The regimes analyzed in this section are
summarized in the table below.

\begin{center}
\begin{tabular}{|c|c|c|c|} 
\hline
\textbf{Statistic} & \textbf{Regime} & \textbf{Limiting behaviour} & \textbf{Main result}  \\
\hline
Aggregate $\mathrm{TVD}$ & fixed $N$, $d \to \infty $& $\frac{n_1n_2}{d}+O(d^{-2})$ & Proposition~\ref{fixedN} \\
\hline
Aggregate $\mathrm{TVD}$ & fixed $d$, $N\to \infty $& simplex integral & Proposition~\ref{fixedd}\\
\hline
Aggregate $\mathrm{TVD}$ & $N=o(\sqrt{d})$& $\frac{n_1n_2}{d}+O\left(\frac{N^4}{d^2}\right)$ & Proposition~\ref{JSsparse} \\
\hline
Aggregate $\mathrm{TVD}$ & $N/\sqrt d\to c$& \makecell{$\frac{1}{2}\mathbb{E} \left| 1 - e^{\alpha(1-\alpha) c^2}(1-\alpha(1-\alpha))^K \right|$\\ $K\sim\mathrm{Poisson(c^2)}$} & Theorem~\ref{twCritScal} \\
\hline
Block-resolved $\mathrm{TVD}$ & fixed $N$, $d \to \infty $& $\frac{n_1n_2}{d}+O(d^{-2})$& Proposition~\ref{SparseResolved} \\
\hline
Block-resolved $\mathrm{TVD}$ & $N=o(\sqrt{d})$& $\frac{n_1n_2}{d}+O\left(\frac{N^4}{d^2}\right)$ & Proposition~\ref{SparseResolved} \\
\hline
Block-resolved $\mathrm{TVD}$ & $N/\sqrt d\to c$& \makecell{$\frac12\,\mathbb E\left[\left|e^{-\alpha(1-\alpha)c^2}2^C-1\right|\right]$\\ $C\sim \mathrm{Poisson}(\alpha(1-\alpha)c^2)$} & Theorem~\ref{CriticalResolved} \\
\hline
Block-resolved $\mathrm{TVD}$ & fixed $d$, $N\to \infty $& tends to $1$& Corollary~\ref{ResolvedFixedd}\\
\hline
\end{tabular}
\end{center}
Throughout this section we use the representation of the aggregate total
variation distance in terms of the histogram functional $L_{\lambda}(n_1,n_2)$
derived in the previous section. Recall that
\begin{equation}
\mathrm{TVD}\left(P^{(N)},P^{(n_1,n_2)}\right)=\frac12\,\mathbb E_{\lambda\sim P^{(N)}}\!\left[\left|1-\frac{L_{\lambda}(n_1,n_2)}{\overline{M}}\right|\right],\qquad\overline{M}=\frac{M_1M_2}{M},
\end{equation}
where the expectation is taken with respect to the uniform distribution over stars–and–bars histograms and 
\begin{equation}
M=\binom{N+d-1}{d-1},\qquad M_1=\binom{n_1+d-1}{d-1},\qquad M_2=\binom{n_2+d-1}{d-1}.
\end{equation}

\subsection{\texorpdfstring{Fixed $N$, large $d$ for aggregate model}{Fixed N, large d for aggregate model}} \label{collision}

We analyze the behavior of the aggregate total variation distance between the
two Haar–averaged sampling models:
\begin{equation*}
\mathrm{TVD}\left(P^{(N)},P^{(n_1,n_2)}\right),
\end{equation*}
in the limit of large local Hilbert-space dimension $d$, while $N$ is fixed. Due
to symmetry under Haar averaging, the probability of an $N$-shot aggregate
measurement record depends only on its histogram
$\lambda=(\lambda_0,\dots,\lambda_{d-1})$ of occupation numbers. We evaluate the
aggregate TVD by separating histograms according to their collision structure,
because collision events constitute the dominant source of distinguishability
between the two models. We split histograms by the number of repeated
occupations, distinguishing collision-free and single-collision cases:
\begin{equation}
    \begin{split}
        &S_0 = \{\lambda:\lambda_j\in\{0,1\},\ \sum_j\lambda_j=N\},\\
        &S_1 = \{\lambda:\text{exactly one} ,
        \lambda_j=2,\text{ all others in }\{0,1\}\}.
    \end{split}
\end{equation}
Histograms with at least two collisions contribute only $O(d^{-2})$ to the aggregate $\mathrm{TVD}$.
Let
\begin{equation}
S_{\ge2}=\{\lambda :\text{$\lambda$ contains at least two collision pairs}\}.
\end{equation}
For fixed $N$, every additional collision reduces by one the number of freely chosen occupied bins. Hence
\begin{equation}
|S_{\ge2}| = O(d^{N-2}).
\end{equation}
Moreover, uniformly in $\lambda$,
\begin{equation}
P^{(N)}(\lambda)=O(d^{-N}),\qquad
P^{(n_1,n_2)}(\lambda)=O(d^{-N}).
\end{equation}
Therefore
\begin{equation}
\sum_{\lambda\in S_{\ge2}}\left|P^{(N)}(\lambda)-P^{(n_1,n_2)}(\lambda)\right|=O(d^{N-2})O(d^{-N})=O(d^{-2}).
\end{equation}
\begin{propo}[Fixed $N$, large $d$]\label{fixedN} Fix integers $N\ge1$ and
    $n_1,n_2\ge0$ with $n_1+n_2=N$. As $d\to\infty$, we obtain
    \begin{equation}
\mathrm{TVD}\left(P^{(N)},P^{(n_1,n_2)}\right)= \mathrm{TVD}_{S_0} + \mathrm{TVD}_{S_1} + O(d^{-2}) = \frac{n_1n_2}{d}+O(d^{-2})
    \end{equation}
\end{propo}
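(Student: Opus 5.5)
The plan is to evaluate the three pieces of the decomposition directly, using the representation $P^{(n_1,n_2)}(\lambda)/P^{(N)}(\lambda)=L_\lambda(n_1,n_2)/\overline{M}$ from the aggregate TVD corollary. The $S_{\ge2}$ contribution is already $O(d^{-2})$ by the counting argument given just before the statement, so only $S_0$ and $S_1$ need work. If $n_1n_2=0$ the two laws coincide and there is nothing to prove, so I assume $n_1,n_2\ge1$.

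\textbf{Step 1 (expansion of $\overline{M}$).} Write $M=d^{(N)}/N!$ with the rising factorial $d^{(m)}=d(d+1)\cdots(d+m-1)=d^m\bigl(1+\binom{m}{2}d^{-1}+O(d^{-2})\bigr)$. Then
\begin{equation*}
\frac{\binom{N}{n_1}}{\overline{M}}=\frac{d^{(N)}}{d^{(n_1)}d^{(n_2)}}
=1+\frac{\binom{N}{2}-\binom{n_1}{2}-\binom{n_2}{2}}{d}+O(d^{-2})
=1+\frac{n_1n_2}{d}+O(d^{-2}).
\end{equation*}

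\textbf{Step 2 (collision-free histograms $S_0$).} For $\lambda\in S_0$ every admissible $\lambda^{(1)}$ is a choice of $n_1$ of the $N$ occupied bins, so $L_\lambda=\binom{N}{n_1}$. By Step 1, the ratio $P^{(n_1,n_2)}/P^{(N)}$ equals $1+n_1n_2/d+O(d^{-2})$ uniformly on $S_0$. Moreover,
\begin{equation*}
P^{(N)}(S_0)=\frac{\binom{d}{N}}{M}=1-O(d^{-1}).
\end{equation*}
Hence
\begin{equation*}
\sum_{\lambda\in S_0}\bigl|P^{(N)}(\lambda)-P^{(n_1,n_2)}(\lambda)\bigr|=\frac{n_1n_2}{d}+O(d^{-2}),
\end{equation*}
and on $S_0$ the block model carries the excess mass.

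\textbf{Step 3 (single-collision histograms $S_1$).} If $\lambda\in S_1$, the generating function for admissible $\lambda^{(1)}$ is $(1+x)^{N-2}(1+x+x^2)$. Extracting the coefficient of $x^{n_1}$ gives
\begin{equation*}
L_\lambda=\binom{N}{n_1}-\binom{N-2}{n_1-1}.
\end{equation*}
Therefore $L_\lambda/\overline{M}=1-\binom{N-2}{n_1-1}/\binom{N}{n_1}+O(d^{-1})$, with
\begin{equation*}
\frac{\binom{N-2}{n_1-1}}{\binom{N}{n_1}}=\frac{n_1n_2}{N(N-1)}.
\end{equation*}
Counting gives $|S_1|=d\binom{d-1}{N-2}$, so $P^{(N)}(S_1)=N(N-1)/d+O(d^{-2})$. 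The $S_1$ contribution is then
\begin{equation*}
\frac{N(N-1)}{d}\cdot\frac{n_1n_2}{N(N-1)}+O(d^{-2})=\frac{n_1n_2}{d}+O(d^{-2}),
\end{equation*}
with the opposite sign to the $S_0$ contribution.

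\textbf{Step 4 (assembly and check).} Taking half the sum of the $S_0$, $S_1$ and $S_{\ge2}$ contributions gives
\begin{equation*}
\tfrac12\Bigl(\frac{n_1n_2}{d}+\frac{n_1n_2}{d}\Bigr)+O(d^{-2})=\frac{n_1n_2}{d}+O(d^{-2}).
\end{equation*}
As a consistency check, both laws have total mass $1$, so the positive and negative parts of $P^{(n_1,n_2)}-P^{(N)}$ must balance. The $S_0$ excess is indeed compensated, to order $d^{-1}$, by the $S_1$ deficit, and the TVD equals the $S_0$ excess alone up to $O(d^{-2})$. I would include this remark because it makes Step 3 essentially forced once Step 2 is done.

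The main obstacle is keeping the error terms honest. In Step 3 the $O(d^{-1})$ correction to $L_\lambda/\overline{M}$ is multiplied by a mass of order $d^{-1}$, so it contributes only $O(d^{-2})$. The real care is needed in the $d^{-1}$ correction to $|S_1|/M$ and in Step 1, where expanding the rising factorials correctly is what produces exactly $n_1n_2$.
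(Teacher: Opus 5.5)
Your proposal is correct and follows essentially the same route as the paper's proof: the same $S_0$/$S_1$/$S_{\ge2}$ stratification, the same decomposition counts $L_\lambda=\binom{N}{n_1}$ and $L_\lambda=\binom{N}{n_1}-\binom{N-2}{n_1-1}$, and the same stratum sizes, with each of $S_0$ and $S_1$ contributing $\frac{n_1n_2}{2d}$ to the TVD. The differences are cosmetic: you organize the computation through the likelihood ratio $L_\lambda/\overline{M}$ and the single expansion $\binom{N}{n_1}/\overline{M}=d^{(N)}/(d^{(n_1)}d^{(n_2)})$, as the paper itself does in its sparse-regime proof, rather than expanding $P^{(N)}(\lambda)$ and $P^{(n_1,n_2)}(\lambda)$ separately, and you explicitly dispose of the degenerate case $n_1n_2=0$.
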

\begin{proof}
We decompose the set of histograms according to the number of collisions.
Collision-free histograms $S_0$ and single-collision histograms $S_1$ give the
leading contributions, while all histograms outside $S_0\cup S_1$ contribute
only $O(d^{-2})$. The detailed combinatorial evaluation is given in
Appendix~\ref{proofpropo}.
\end{proof}
Hence, the aggregate distinguishability between collective and block-separated
Haar-random measurements decays inversely with $d$, and is proportional to the
number of cross-block outcome pairs $n_1n_2$, reflecting the suppression of
global correlations in the large-dimension limit.

\subsection{\texorpdfstring{Fixed $d$, large $N$: simplex limit ($n_1=\lfloor
\alpha N \rfloor$ and $\alpha \in (0,1)$ fixed) for aggregate model}{Fixed d, large N: simplex limit
(n1=floor(alpha N) and alpha in (0,1) fixed) for aggregate model}}

In this section we study the asymptotic behavior of the aggregate total
variation distance in the regime where the local dimension $d$ is fixed and the
number of measurement outcomes $N$ tends to infinity. We consider a splitting
$n_1=\lfloor \alpha N \rfloor$, $n_2=N-n_1$ with $\alpha\in(0,1)$ fixed. As $N
\to \infty$ with $d$ fixed, the aggregate total variation distance admits a
finite limit depending only on $d$ and $\alpha$. We denote this limit by
\begin{equation}
    \mathrm{TVD}_{\infty}(d,\alpha):= \lim_{N \to \infty}\mathrm{TVD}\left(P^{(N)},P^{(n_1,n_2)}\right).
\end{equation}
Below we show that this limit exists and can be expressed explicitly. We then
evaluate the general limiting formula for low-dimensional cases.

\subsubsection{\texorpdfstring{Asymptotic expansion of
$L_{\lambda}(n_1,n_2)$}{Asymptotic expansion of L(lambda)(n1,n2)}}

For fixed $\lambda$ and large $N$, $L_{\lambda}(n_1,n_2)$ counts the lattice points in a truncated simplex.
\begin{equation}
\mathcal{S}(\lambda)=\{x\in\mathbb R_{\ge0}^d: \sum_j x_j=n_1,\; x_j\le\lambda_j\}
\end{equation}
Let $\ell=\frac{\lambda}{N} \in\Delta_{d-1}$, $t=\frac{x}{n_1}\in\Delta_{d-1}$. Then $\mathcal{S}(\lambda)$ is $n_1^{d-1}$
times the fixed polytope
\begin{equation}
\left\{t\in\Delta_{d-1}: t_j\le u_j(\ell):=\min\left(1,\frac{\ell_j}{\alpha}\right)\right\}.
\end{equation}
It is convenient to express $L_{\lambda}(n_1,n_2)$ in terms of generating
functions. Writing $\lambda=(\lambda_0,\dots,\lambda_{d-1})$, we have
\begin{equation}
L_{\lambda}(n_1,n_2) = [x^{n_1}] \prod_{j=0}^{d-1} (1 + x + \cdots + x^{\lambda_j}).
\end{equation}
This representation allows us to recover the inclusion–exclusion formula and
provides an alternative route to the asymptotic analysis used below.
\begin{lem}[Truncated-simplex asymptotics]\label{Postacpsi}
Fix $d\ge2$ and $\alpha\in(0,1)$. Let $N\to\infty$,
$n_1=\lfloor\alpha N\rfloor$, $n_2=N-n_1$, and let
$\lambda=\lambda(N)$ be any histogram with $\sum_j\lambda_j=N$ and
$\ell=\lambda/N$. Then $L_{\lambda}(n_1,n_2)$ admits the asymptotic expansion
\begin{equation}
L_{\lambda}(n_1,n_2)=\frac{n_1^{d-1}}{(d-1)!}\,\varphi_\alpha(\ell)+O(n_1^{d-2}),
\end{equation}
uniformly in $\lambda$, where
\begin{equation}
\varphi_\alpha(\ell)=\sum_{S\subseteq[d]}(-1)^{|S|}\left(1-\sum_{j\in S}\min\left\{1,\frac{\ell_j}{\alpha}\right\}\right)_{+}^{d-1}
\end{equation}
is the normalized $(d-1)$-dimensional volume of the continuous polytope obtained
by scaling $x$ and $\lambda$. Its explicit inclusion–exclusion expression is
derived below.
\end{lem}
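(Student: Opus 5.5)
The plan is to start from the exact inclusion--exclusion representation of Proposition~\ref{wyliczenieL},
\begin{equation*}
L_{\lambda}(n_1,n_2)=\sum_{S\subseteq[d]}(-1)^{|S|}\binom{m_S+d-1}{d-1},\qquad m_S:=n_1-\sum_{j\in S}(\lambda_j+1),
\end{equation*}
and to compare it term by term with the corresponding term of $\frac{n_1^{d-1}}{(d-1)!}\varphi_\alpha(\ell)$. Since $d$ is fixed, there are only $2^d$ terms. It therefore suffices to show that each term differs from its continuous counterpart by $O(n_1^{d-2})$, with a constant depending only on $d$ and $\alpha$ and not on $\lambda$. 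This immediately gives the claimed uniformity.

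\textbf{Step 1: polynomial versus truncated power.} With the convention $\binom{a}{b}=0$ for $a<b$, the coefficient $\binom{m+d-1}{d-1}$ equals $p(m):=\frac{(m+1)\cdots(m+d-1)}{(d-1)!}$ for $m\ge0$ and equals $0$ for $m<0$. Moreover, $p$ vanishes at $m=-1,\dots,-(d-1)$. For $0\le m\le n_1$, expanding $p$ gives $p(m)=\frac{m^{d-1}}{(d-1)!}+O((m+1)^{d-2})=\frac{m^{d-1}}{(d-1)!}+O(n_1^{d-2})$. Hence
\begin{equation*}
\binom{m_S+d-1}{d-1}=\frac{(m_S)_+^{d-1}}{(d-1)!}+O(n_1^{d-2})
\end{equation*}
uniformly. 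Here I use only that $m_S\le n_1$, and that both sides vanish when $m_S<0$.

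\textbf{Step 2: rescaling.} Write $(m_S)_+^{d-1}=n_1^{d-1}\bigl(1-\sum_{j\in S}\lambda_j/n_1-|S|/n_1\bigr)_+^{d-1}$. The function $x\mapsto x_+^{d-1}$ is Lipschitz on $(-\infty,1]$ with constant $d-1$. Dropping the shift $|S|/n_1\le d/n_1$ therefore costs $O(n_1^{d-2})$.

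Next, I replace $\lambda_j/n_1$ by $\ell_j/\alpha=\lambda_j/(\alpha N)$. Since $n_1=\lfloor\alpha N\rfloor$, we have $|1/n_1-1/(\alpha N)|=O(N^{-2})$. Together with $\lambda_j\le N$, each replacement changes the argument by $O(1/N)=O(1/n_1)$. By the same Lipschitz bound, this again costs $O(n_1^{d-2})$ after multiplication by $n_1^{d-1}$.

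Finally, I insert the $\min\{1,\cdot\}$. If some $j\in S$ has $\ell_j/\alpha\ge1$, then $1-\sum_{j\in S}\ell_j/\alpha\le0$ and also $1-\sum_{j\in S}\min\{1,\ell_j/\alpha\}\le0$, because all summands are nonnegative. So both truncated powers vanish, and the minimum does not change any term. Summing over $S$ yields $L_\lambda(n_1,n_2)=\frac{n_1^{d-1}}{(d-1)!}\varphi_\alpha(\ell)+O(n_1^{d-2})$.

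\textbf{Main obstacle: uniformity in $\lambda$.} The delicate point is uniformity over histograms near the boundary, namely terms with $m_S$ close to $0$ or negative, and coordinates $\lambda_j$ comparable to $N$. The plan handles this by working with the exact identity $\binom{m+d-1}{d-1}=p(m)\mathbf 1_{m\ge0}$, using the vanishing of $p$ on $\{-(d-1),\dots,-1\}$, and relying only on the global Lipschitz bound for $x_+^{d-1}$ on $(-\infty,1]$. None of these ingredients depends on the position of $\ell$ in the simplex.

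The identification of $\varphi_\alpha(\ell)$ with the normalized volume of $\{t\in\Delta_{d-1}:t_j\le u_j(\ell)\}$ follows from the same inclusion--exclusion applied to the continuous simplex, whose volume is the $(d-1)$-th truncated power. This gives the geometric interpretation, but it is not needed for the asymptotic estimate.
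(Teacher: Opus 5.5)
Your proposal is correct and follows essentially the same route as the paper. Both arguments start from the inclusion--exclusion formula, compare each binomial with $(m_S)_+^{d-1}/(d-1)!$ up to $O(n_1^{d-2})$, use a Lipschitz bound to absorb both the $|S|/n_1$ shift and the $n_1$ versus $\alpha N$ discrepancy, and finally observe that inserting $\min\{1,\cdot\}$ changes no term. Your explicit Lipschitz constant on $(-\infty,1]$ and the bound $|1/n_1-1/(\alpha N)|=O(N^{-2})$ make the uniformity in $\lambda$ slightly more transparent than the paper's appeal to $\alpha_N\to\alpha$.
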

\begin{proof}
See Appendix~\ref{erhart}
\end{proof}

\begin{propo}[Fixed $d$, large $N$]\label{fixedd}
    Fix $d\ge2$ and $\alpha\in(0,1)$. Let $N\to\infty$,
    $n_1=\lfloor\alpha N\rfloor$, and $n_2=N-n_1$. Then
    \begin{equation}
        \mathrm{TVD}\left(P^{(N)},P^{(n_1,n_2)}\right)= \frac12 \mathbb E_{\ell\sim \mathrm{Unif}(\Delta_{d-1})} \left[\left|1-\frac{\varphi_\alpha(\ell)}{(1-\alpha)^{d-1}}\right|\right] +o(1).
    \end{equation}
    In particular, the above limit exists and is given by
    \begin{equation}
        \mathrm{TVD}_{\infty}(d,\alpha)=\frac12 \mathbb E_{\ell\sim \mathrm{Unif}(\Delta_{d-1})} \left[\left|1-\frac{\varphi_\alpha(\ell)}{(1-\alpha)^{d-1}}\right|\right].
    \end{equation}
\end{propo}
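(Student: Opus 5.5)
We start from the representation of the aggregate distance recalled at the start of this section,
\begin{equation*}
\mathrm{TVD}\left(P^{(N)},P^{(n_1,n_2)}\right)=\frac12\,\mathbb E_{\lambda\sim P^{(N)}}\!\left[\left|1-\frac{L_{\lambda}(n_1,n_2)}{\overline{M}}\right|\right],
\end{equation*}
and we treat the fixed-$d$, large-$N$ limit as a Riemann-sum (weak convergence) problem on the simplex. There are three ingredients: the asymptotics of the normalizer $\overline{M}$, the uniform asymptotics of $L_\lambda$ from Lemma~\ref{Postacpsi}, and the convergence of the uniform law on compositions to the uniform law on $\Delta_{d-1}$.

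\textbf{Step 1 (normalizer).} For fixed $d$ we have $\binom{m+d-1}{d-1}=\frac{m^{d-1}}{(d-1)!}(1+O(1/m))$. Hence
\begin{equation*}
\overline{M}=\frac{M_1M_2}{M}=\frac{n_1^{d-1}n_2^{d-1}}{(d-1)!\,N^{d-1}}\bigl(1+O(1/N)\bigr)=\frac{n_1^{d-1}}{(d-1)!}(1-\alpha)^{d-1}\bigl(1+O(1/N)\bigr),
\end{equation*}
where we used $n_2/N\to1-\alpha$ together with $n_1=\lfloor\alpha N\rfloor$. Combining this with Lemma~\ref{Postacpsi} gives, uniformly in $\lambda$,
\begin{equation*}
\frac{L_\lambda(n_1,n_2)}{\overline{M}}=\frac{\varphi_\alpha(\lambda/N)}{(1-\alpha)^{d-1}}+O(1/N),
\end{equation*}
since $\varphi_\alpha$ is bounded (it is a normalized volume and is at most $1$). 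The absolute value is $1$-Lipschitz, so we may replace the integrand by $\tfrac12\bigl|1-\varphi_\alpha(\lambda/N)/(1-\alpha)^{d-1}\bigr|$ at a cost of $O(1/N)$ in the TVD.

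\textbf{Step 2 (limit of the empirical average).} It remains to show that
\begin{equation*}
\mathbb E_{\lambda\sim P^{(N)}}\bigl[g(\lambda/N)\bigr]\to\mathbb E_{\ell\sim\mathrm{Unif}(\Delta_{d-1})}\bigl[g(\ell)\bigr],\qquad g(\ell)=\left|1-\frac{\varphi_\alpha(\ell)}{(1-\alpha)^{d-1}}\right|.
\end{equation*}
The function $g$ is continuous on $\Delta_{d-1}$. Each summand $(1-\sum_{j\in S}\min\{1,\ell_j/\alpha\})_+^{d-1}$ is a composition of continuous maps, so $\varphi_\alpha$ is continuous and, by compactness, uniformly continuous. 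The points $\lambda/N$ with $\lambda$ uniform over the $M$ compositions form the lattice $\frac1N\mathbb Z^d\cap\Delta_{d-1}$ with uniform weights. This measure converges weakly to the normalized Lebesgue measure on $\Delta_{d-1}$: partition the simplex into cells of mesh $1/N$, each carrying asymptotically equal mass $1/M\sim(d-1)!/N^{d-1}$. An alternative route is to note that the empirical law is the Dirichlet--multinomial limit, i.e.\ $\lambda/N$ given $W$ concentrates at $W\sim\mathrm{Dir}(1,\dots,1)$ by the law of large numbers. The second description, which uses Proposition~\ref{rozkladPN}, is cleaner. Together with Step 1, weak convergence for the continuous bounded $g$ yields the claimed formula, and the limit $\mathrm{TVD}_\infty(d,\alpha)$ exists.

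\textbf{Main obstacle.} The delicate point is the uniformity of Lemma~\ref{Postacpsi} near the boundary of the truncated polytope, where some $\lambda_j$ are small or where $\varphi_\alpha$ is near zero. There the relative error of the lattice-point count could be large. Because we only need an \emph{absolute} error $O(n_1^{d-2})$, which becomes $O(1/N)$ after dividing by $\overline{M}$, this suffices. If the lemma's uniformity were in doubt, the fallback is to restrict to $\ell$ in a compact subset away from the faces $\{\ell_j=0\}$ and the kinks $\{\ell_j=\alpha\}$, and to bound the excluded region's contribution. That region has uniform mass $O(\varepsilon)$ and an integrand bounded by $1+1/(1-\alpha)^{d-1}$, so letting $\varepsilon\to0$ afterwards completes the argument.
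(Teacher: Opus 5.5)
Your proposal is correct and follows essentially the same route as the paper: asymptotics of $\overline{M}$, the uniform expansion of $L_\lambda(n_1,n_2)$ from Lemma~\ref{Postacpsi}, and then a Riemann-sum (weak convergence) argument for the uniform law on the lattice points of $N\cdot\Delta_{d-1}$. Two of your remarks slightly sharpen the paper's ``piecewise continuous'' Riemann-sum step without changing the approach: $\varphi_\alpha$ is in fact continuous, and the weak convergence can be read off from the Dirichlet--multinomial representation in Proposition~\ref{rozkladPN}.
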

\begin{proof} 
To pass to the limit, we use the representation of the aggregate total variation
distance as an average over histograms
\begin{equation}
 \mathrm{TVD}\left(P^{(N)},P^{(n_1,n_2)}\right) = \frac{1}{2}\frac{1}{M} \sum_{\lambda}\left|1-\frac{L_{\lambda}(n_1,n_2)}{\overline{M}}\right|.
\end{equation}
For fixed $d$, we also have
\begin{equation}
M \sim \frac{N^{d-1}}{(d-1)!}+O\left(N^{d-2}\right), \quad M_i \sim \frac{n_i^{d-1}}{(d-1)!}+O\left(N^{d-2}\right) \quad \text{for} \quad i=1,2.
\end{equation}
Hence
\begin{equation}
\overline{M} =\frac{M_1M_2}{M}=\frac{n_1^{d-1}}{(d-1)!}(1-\alpha)^{d-1}(1+o(1)).
\end{equation}
Since $d$ is fixed, the inclusion-exclusion formula contains finitely many terms
only. Moreover,
\begin{equation}
    0\le\left(1-\sum_{j\in S}\frac{\ell_j}{\alpha}\right)_+^{d-1}\le 1,
\end{equation}
uniformly in $\ell\in\Delta_{d-1}$. Hence the remainder term in the asymptotic
expansion of $L_\lambda(n_1,n_2)$ is uniform over all histograms $\lambda$.
Combining the asymptotic expansion of $L_{\lambda}(n_1,n_2)$ from
Lemma~\ref{Postacpsi} with the asymptotics of $\overline{M}$, we obtain
\begin{equation}
\frac{L_{\lambda}(n_1,n_2)}{\overline{M}}=\frac{\varphi_\alpha(\ell)}{(1-\alpha)^{d-1}} + o(1),
\end{equation}
uniformly over all $\lambda$ such that $\ell=\frac{\lambda}{N} \in \Delta_{d-1}$.
Since the absolute value is a continuous function, it follows that
\begin{equation}
\left|1-\frac{L_{\lambda}(n_1,n_2)}{\overline{M}}\right|=\left|1 - \frac{\varphi_\alpha(\ell)}{(1-\alpha)^{d-1}}\right| + o(1).
\end{equation}
Therefore,
\begin{equation}
\frac{1}{M} \sum_{\lambda}\left|1-\frac{L_{\lambda}(n_1,n_2)}{\overline{M}}\right|=\frac{1}{M} \sum_{\lambda}\left|1 - \frac{\varphi_\alpha(\ell)}{(1-\alpha)^{d-1}}\right| + o(1).
\end{equation}
We now interpret the sum as a Riemann sum over the simplex. Histograms $\lambda$
satisfying $\sum_j\lambda_j=N$ correspond to lattice points in the dilated
simplex $N\cdot\Delta_{d-1}$. After normalization $\ell=\frac{\lambda}{N}$,
these points form a grid in $\Delta_{d-1}$ with mesh size $\frac1N$, which
becomes dense as $N\to\infty$. Since all histograms are equally weighted, the
normalized counting measure on the lattice points converges to the uniform
probability measure on $\Delta_{d-1}$.  Moreover, the function
\begin{equation}
\ell\mapsto\left|1-\frac{\varphi_\alpha(\ell)}{(1-\alpha)^{d-1}}\right|
\end{equation}
is bounded and piecewise continuous on the compact simplex $\Delta_{d-1}$.
Therefore, by the standard Riemann-sum approximation theorem,
\begin{equation}
\frac1M\sum_\lambda\left|1-\frac{\varphi_\alpha(\ell)}{(1-\alpha)^{d-1}}\right|\longrightarrow\mathbb E_{\ell\sim\mathrm{Unif}(\Delta_{d-1})}\left[\left|1-\frac{\varphi_\alpha(\ell)}{(1-\alpha)^{d-1}}\right|\right].
\end{equation}
Hence
\begin{equation}
\mathrm{TVD}\left(P^{(N)},P^{(n_1,n_2)}\right)=\frac12\mathbb E_{\ell\sim \mathrm{Unif}(\Delta_{d-1})}\left[\left|1-\frac{\varphi_\alpha(\ell)}{(1-\alpha)^{d-1}}\right|\right]+o(1).
\end{equation}
\end{proof}
\begin{uw}
    The model is invariant under exchanging the two blocks of sizes $n_1$ and
    $n_2$. Hence the aggregate total variation distance depends only on the
    unordered pair $\{\alpha,1-\alpha\}$:
    \begin{equation}
        \mathrm{TVD}_{\infty}(d,\alpha)=\mathrm{TVD}_{\infty}(d,1-\alpha)
    \end{equation}
\end{uw}
\subsubsection{Low-dimensional examples}

\paragraph{Case $d=2$}
\begin{tw}
    Fix $\alpha\in(0,1)$. In the fixed-$d$, large-$N$ regime with $d=2$, a
    normalized histogram can be parametrized as $\ell=(t,1-t)$ with $t \in [0,1]$.
    Then
    \begin{equation}
    \mathrm{TVD}_\infty(2,\alpha)=\alpha(1-\alpha)
    \end{equation}
\end{tw}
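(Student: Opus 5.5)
By the Remark on block exchange symmetry, $\mathrm{TVD}_\infty(2,\alpha)=\mathrm{TVD}_\infty(2,1-\alpha)$, so I assume $\alpha\le \tfrac12$ throughout. The plan is to start from Proposition~\ref{fixedd} with $d=2$. Then $(1-\alpha)^{d-1}=1-\alpha$, and the uniform law on $\Delta_1$ is the uniform law of $t\in[0,1]$ under $\ell=(t,1-t)$. So
\begin{equation*}
\mathrm{TVD}_\infty(2,\alpha)=\frac12\int_0^1\left|1-\frac{\varphi_\alpha(t,1-t)}{1-\alpha}\right|dt .
\end{equation*}

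First I compute $\varphi_\alpha$ explicitly. For $d=2$, the polytope of Lemma~\ref{Postacpsi} is the segment $\{t_0\in[0,1]: t_0\le u_0,\ 1-t_0\le u_1\}$. Its length is
\begin{equation*}
\varphi_\alpha=\min(1,u_0)-\max(0,1-u_1),\qquad u_0=\min\{1,t/\alpha\},\quad u_1=\min\{1,(1-t)/\alpha\}.
\end{equation*}
Equivalently, one can expand the four terms $S=\emptyset,\{0\},\{1\},\{0,1\}$ of the inclusion--exclusion sum directly. With $\alpha\le\tfrac12$ there are three regions:
\begin{itemize}
\item for $t<\alpha$ one has $u_1=1$, hence $\varphi_\alpha=t/\alpha$;
\item for $\alpha\le t\le 1-\alpha$ one has $\varphi_\alpha=1$;
\item for $t>1-\alpha$ one has $\varphi_\alpha=(1-t)/\alpha$, by the symmetry $t\leftrightarrow 1-t$.
\end{itemize}

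Next I integrate piecewise. On the middle interval, of length $1-2\alpha$, the integrand is the constant $\alpha/(1-\alpha)$, contributing $\alpha(1-2\alpha)/(1-\alpha)$. The two tails are equal. On each tail, the integrand is $|1-t/c|$ with $c:=\alpha(1-\alpha)$. It changes sign at $t=c<\alpha$, so I split the integral at $c$ to get
\begin{equation*}
\int_0^\alpha\left|1-\frac tc\right|dt=\frac c2+\frac{(\alpha-c)^2}{2c}=\frac{\alpha(1-\alpha)}2+\frac{\alpha^3}{2(1-\alpha)},
\end{equation*}
using $\alpha-c=\alpha^2$.

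Summing the three pieces gives
\begin{equation*}
\alpha(1-\alpha)+\frac{\alpha^3+\alpha-2\alpha^2}{1-\alpha}=\alpha(1-\alpha)+\alpha(1-\alpha)=2\alpha(1-\alpha),
\end{equation*}
and halving yields $\alpha(1-\alpha)$. For $\alpha>\tfrac12$ the same value follows from the block-exchange symmetry.

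The only step requiring care is the case analysis of $\varphi_\alpha$. One must place the breakpoints correctly using $\alpha\le 1-\alpha$, and handle the sign change of the integrand inside the tails, which lies at $t=\alpha(1-\alpha)$ rather than at the region boundary. The remaining steps are elementary integrals.
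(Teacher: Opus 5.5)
Your proposal is correct and follows the same route as the paper: reduce to $\alpha\le\tfrac12$ by block-exchange symmetry, specialize Proposition~\ref{fixedd} to $d=2$, identify $\varphi_\alpha$ as $t/\alpha$, $1$, $(1-t)/\alpha$ on the three intervals, and integrate piecewise. Your explicit handling of the tails (splitting at $t=\alpha(1-\alpha)$ and using $\alpha-\alpha(1-\alpha)=\alpha^2$) supplies the arithmetic the paper leaves implicit, and it checks out.
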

\begin{proof}
    In this case we obtain
\begin{equation}
\varphi_{2,\alpha}(t)=\max \left\{0,\min \left(1,\frac{t}{\alpha} \right)+\min \left(1,\frac{1-t}{\alpha} \right)-1 \right\}.
\end{equation}
Plugging this into the limiting total variation expression yields the
one-dimensional integral
\begin{equation}
\mathrm{TVD}_\infty(2,\alpha)=\frac12\int_0^1\left|1-\frac{\varphi_{2,\alpha}(t)}{1-\alpha}\right|\,dt.
\end{equation}
By symmetry, assume $\alpha \le \frac12$. Then
\begin{equation}
\varphi_{2,\alpha}(t)=
\begin{cases}
\frac{t}{\alpha}, & 0 \le t \le \alpha, \\
1, & \alpha \le t \le 1-\alpha, \\
\frac{1-t}{\alpha}, & 1-\alpha \le t \le 1.
\end{cases}
\end{equation}
Splitting the integral over the three intervals gives
\begin{equation}
\begin{split}
&\mathrm{TVD}_\infty(2,\alpha) =\frac12 \left(\int_0^\alpha \left|1 - \frac{t}{\alpha(1-\alpha)}\right| dt+ \int_\alpha^{1-\alpha} \frac{\alpha}{1-\alpha} \, dt+ \int_{1-\alpha}^1 \left|1 - \frac{1-t}{\alpha(1-\alpha)}\right| dt\right) \\
&= \alpha(1-\alpha).
\end{split}
\end{equation}
\end{proof}
To validate this expression, we compare it to exact finite-$N$ computations for
several values of $\alpha$. Numerically one observes the convergence
\begin{itemize}
\item $\alpha=0.5$: $N=5000$ gives $\mathrm{TVD}\approx 0.2498 \to 0.25 =
0.5\cdot 0.5$.
\item $\alpha=0.25$: $N=5000$ gives $\mathrm{TVD}\approx 0.1874 \to 0.1875 =
0.25\cdot 0.75$.
\item $\alpha=0.1$: $N=5000$ gives $\mathrm{TVD}\approx 0.09107 \to 0.09 =
0.1\cdot 0.9$.
\end{itemize}
This confirms consistency between the exact discrete model and the asymptotic
analysis.

\paragraph{Case $d=3$ (triangle)}

\begin{propo}
    For $d=3$ we parameterize a normalized histogram by $\ell=(x,y,z)$ with
    $x,y,z \ge 0$ and $x+y+z=1$. Then
    \begin{equation}
        \mathrm{TVD}_\infty(3,\alpha) = \frac{2}{(1-\alpha)^2} \int_{\Delta_2} \!\bigl(\varphi_{3,\alpha}(x,y)-(1-\alpha)^2\bigr)_+\,dx\,dy.
    \end{equation}
\end{propo}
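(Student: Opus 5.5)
Starting from Proposition~\ref{fixedd} with $d=3$,
\begin{equation*}
\mathrm{TVD}_\infty(3,\alpha)=\frac12\,\mathbb E_{\ell\sim\mathrm{Unif}(\Delta_2)}\left[\left|1-\frac{\varphi_{3,\alpha}(\ell)}{(1-\alpha)^2}\right|\right],
\end{equation*}
the plan is to remove the absolute value with the elementary identity $|f|=2f_+-f$, which holds for any real $f$, applied to $f=\varphi_{3,\alpha}/(1-\alpha)^2-1$. The term $-f$ then contributes $1-\mathbb E[\varphi_{3,\alpha}]/(1-\alpha)^2$. The proof therefore reduces to showing the normalization identity
\begin{equation*}
\mathbb E_{\ell\sim\mathrm{Unif}(\Delta_{d-1})}[\varphi_\alpha(\ell)]=(1-\alpha)^{d-1},
\end{equation*}
which makes that contribution vanish. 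What remains is $\mathrm{TVD}_\infty=\mathbb E[f_+]$.

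To get the displayed form, I will convert the expectation into an integral over the triangle. Parametrize $\Delta_2$ by $(x,y)$ with $z=1-x-y$. The uniform density on the triangle $\{x,y\ge0,\ x+y\le1\}$ is $2$. Hence
\begin{equation*}
\mathbb E[f_+]=2\int_{\Delta_2}\frac{\bigl(\varphi_{3,\alpha}(x,y)-(1-\alpha)^2\bigr)_+}{(1-\alpha)^2}\,dx\,dy,
\end{equation*}
which is exactly the claim.

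The main step is the normalization identity, and I expect to prove it in one of two ways.
\begin{itemize}
\item \emph{Via the discrete laws.} For every $N$, $\sum_\lambda P^{(n_1,n_2)}(\lambda)=1$. This is equivalent to $\frac1M\sum_\lambda L_\lambda(n_1,n_2)/\overline M=1$, i.e.\ $\mathbb E_{P^{(N)}}[L_\lambda/\overline M]=1$. By Lemma~\ref{Postacpsi} the ratio $L_\lambda/\overline M$ converges uniformly to $\varphi_\alpha(\ell)/(1-\alpha)^{d-1}$, which is bounded. The Riemann-sum argument from the proof of Proposition~\ref{fixedd} then gives $\mathbb E_{\mathrm{Unif}}[\varphi_\alpha]/(1-\alpha)^{d-1}=1$.
\item \emph{Directly, as a check.} The quantity $\varphi_\alpha(\ell)$ is the normalized volume of $\{t\in\Delta_{d-1}:\alpha t_j\le\ell_j\}$. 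Its average over uniform $\ell$ is the probability that $\ell-\alpha t\ge0$ for independent uniform $t,\ell$. This can be computed as a Dirichlet/Lebesgue volume: the set of pairs $(\alpha t,\ell-\alpha t)$ is a product of scaled simplices of sizes $\alpha$ and $1-\alpha$, which yields $(1-\alpha)^{d-1}$.
\end{itemize}

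The only delicate point I anticipate is justifying the interchange of limit and expectation in the first route. Uniform convergence and boundedness, both already established in Lemma~\ref{Postacpsi} and Proposition~\ref{fixedd}, suffice for this. The rest is bookkeeping.
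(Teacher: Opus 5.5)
Your proposal is correct and follows the paper's proof. It removes the absolute value using $\mathbb{E}[\varphi_{3,\alpha}]=(1-\alpha)^2$ (your identity $|f|=2f_+-f$ is the paper's $\mathbb{E}|X-m|=2\,\mathbb{E}(X-m)_+$) and then inserts the uniform density $2$ on $\Delta_2$. The paper only asserts $\mathbb{E}_{\ell\sim\mathrm{Unif}(\Delta_{d-1})}[\varphi_\alpha(\ell)]=(1-\alpha)^{d-1}$, whereas you justify it, and both of your routes are sound: in the direct one, the map $(t,\ell)\mapsto(\alpha t,\ell-\alpha t)$ has Jacobian $\alpha^{d-1}$, which exactly cancels the volume factor of $\alpha\Delta_{d-1}$ and leaves $(1-\alpha)^{d-1}$.
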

\begin{proof}
    Using coordinates $(x,y)$ on the standard triangle $\Delta_2 = \{(x,y): x\ge
    0,\, y\ge 0,\, x+y \le 1\}$ and setting $z = 1 - x - y$, the appearance of
    the $(\cdot)_+$ operator follows from the identity $1- \min(1,u) = (1-u)_+$,
    so the truncation implemented by the $\min$ in the Ehrhart counting formula
    becomes the positive part in the piecewise-polynomial limit. This allows us
    to write all terms in inclusion–exclusion in a uniform form
\begin{equation}
\varphi_{3,\alpha}(x,y,z)= \sum_{S\subseteq\{1,2,3\}} (-1)^{|S|} \Bigl(1 - \tfrac{1}{\alpha}\!\sum_{j\in S}\ell_j\Bigr)^2_+.
\end{equation}
Formally the sum runs over all subsets $S\subseteq\{1,2,3\}$, including
$S=\{1,2,3\}$. The three-element term equals $\bigl(1-
\frac{1}{\alpha}\bigr)^2_+=0$ for all $\alpha\in(0,1)$ because $\frac1\alpha>1$,
hence it is omitted in the compact form. Explicitly,
\begin{equation}
    \begin{split}
        &\varphi_{3,\alpha}(x,y,z)=1-\sum_{j=1}^3 \Bigl(1-\frac{\ell_j}{\alpha}\Bigr)_+^2+\sum_{1\le i<j\le3}\Bigl(1-\frac{\ell_i+\ell_j}{\alpha}\Bigr)_+^2 \\
&=1-\Bigl(1-\frac{x}{\alpha}\Bigr)_+^2-\Bigl(1-\frac{y}{\alpha}\Bigr)_+^2-\Bigl(1-\frac{1-x-y}{\alpha}\Bigr)_+^2 + \\& + \Bigl(1-\frac{x+y}{\alpha}\Bigr)_+^2+\Bigl(1-\frac{1-y}{\alpha}\Bigr)_+^2+\Bigl(1-\frac{1-x}{\alpha}\Bigr)_+^2.
    \end{split}
\end{equation}
Since each $(\cdot)_+$ switches between zero and a linear expression at the
thresholds $\ell_j=\alpha$ and $\ell_i+\ell_j=\alpha$, $\varphi_{3,\alpha}$ is a
piecewise quadratic function on $\Delta_2$.

The limiting aggregate total variation distance can be expressed as
\begin{equation}
\mathrm{TVD}_\infty(3,\alpha)=\frac12\,\mathbb{E}_{\ell\sim\mathrm{Unif}(\Delta_{2})}\!\Bigl[\bigl|1 - \tfrac{\varphi_{3,\alpha}(\ell)}{(1-\alpha)^{2}}\bigr|\Bigr]=\frac12 \int_{\Delta_2}\left|1-\frac{\varphi_{3,\alpha}(x,y)}{(1-\alpha)^2}\right|\,2\,dx\,dy.
\end{equation}
Since $\mathbb{E}[\varphi_{3,\alpha}]=(1-\alpha)^2$, we can rewrite the absolute
deviation as
\begin{equation}
\left|1 - \frac{\varphi_{3,\alpha}}{(1-\alpha)^2}\right|=\frac{\bigl|(1-\alpha)^2 - \varphi_{3,\alpha}\bigr|}{(1-\alpha)^2}.
\end{equation}
For any random variable $X$ with $\mathbb{E}[X]=m$ we have the identity
\begin{equation}
\mathbb{E}[|X-m|] = 2\,\mathbb{E}[(X-m)_+],
\end{equation}
because positive and negative deviations from the mean have equal expectation.
Applying this with $X=\varphi_{3,\alpha}$ and $m=(1-\alpha)^2$ gives
\begin{equation}
\mathrm{TVD}_\infty(3,\alpha)=\frac{1}{(1-\alpha)^2}\,\mathbb{E}\!\bigl[(\varphi_{3,\alpha}-(1-\alpha)^2)_+\bigr].
\end{equation}
Finally, using that the density of $\mathrm{Unif}(\Delta_2)$ is $2$ on
$\Delta_2$ (whose area is $\frac12$), we obtain
\begin{equation}
\mathrm{TVD}_\infty(3,\alpha)=\frac{2}{(1-\alpha)^2}\int_{\Delta_2}\!\bigl(\varphi_{3,\alpha}(x,y)-(1-\alpha)^2\bigr)_+\,dx\,dy.
\end{equation}
\end{proof}

\begin{figure}[H]
    \centering
\includegraphics[width=0.6\textwidth]{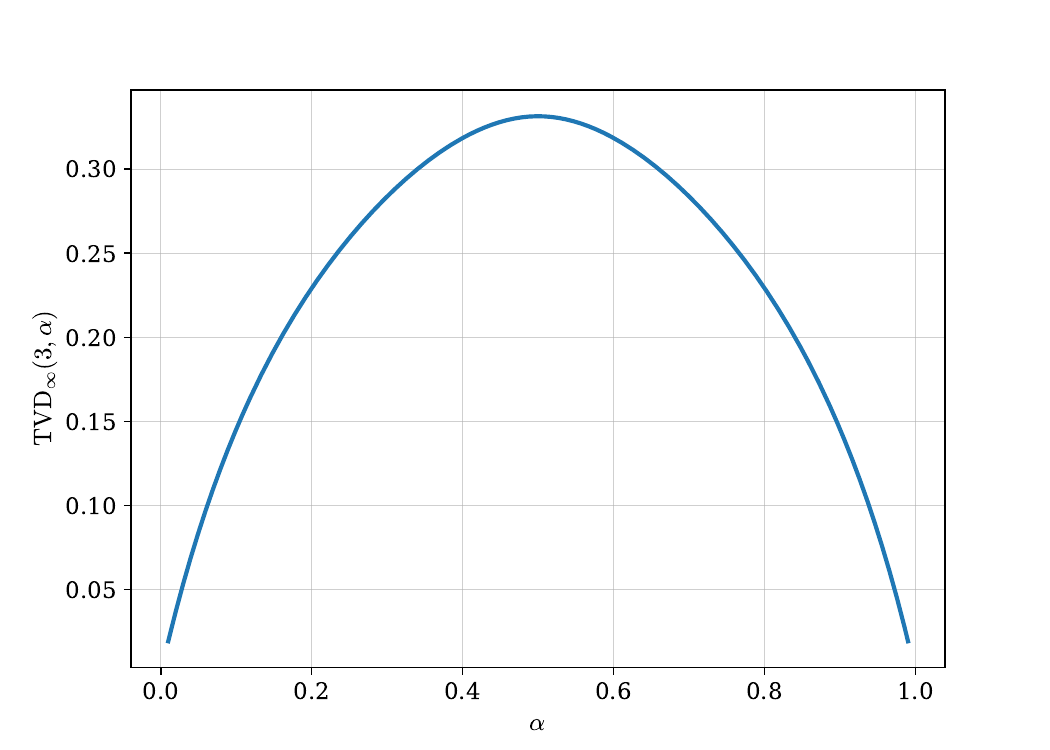}
\caption{Limiting aggregate total variation distance between $P^{(N)}(\lambda)$
and $P^{(n_1,n_2)}(\lambda)$, i.e. $\mathrm{TVD}_{\infty}(3,\alpha)$, as a
function of $\alpha$.}
    \label{d=3}
\end{figure}
The integral appearing in the definition of $\mathrm{TVD}_\infty(3,\alpha)$ does
not admit a simple closed-form expression because the number of polyhedral
regions grows rapidly. We illustrate this quantity numerically as a function of
$\alpha$. The integral was evaluated using a regular grid with $N=500$, and the
parameter $\alpha$ was sampled at $200$ equally spaced values. Numerically one
finds, for example,
\begin{equation*}
\mathrm{TVD}_\infty\!\left(3,\tfrac12\right)\approx 0.332,\qquad
\mathrm{TVD}_\infty\!\left(3,\tfrac13\right)\approx 0.298,\qquad
\mathrm{TVD}_\infty(3,0.1)\approx 0.145.
\end{equation*}

\paragraph{Case $d=4$ (tetrahedron)}
\begin{propo}
    For $d=4$ we write a normalized histogram as $\ell=(x,y,z,w)$ with
    $x,y,z,w\ge0$ and $x+y+z+w=1$. Then 
    \begin{equation}
        \mathrm{TVD}_\infty(4,\alpha)=\frac{6}{(1-\alpha)^3}\int_{\Delta_3}\big(\varphi_{4,\alpha}(x,y,z)-(1-\alpha)^3\big)_+\,dx\,dy\,dz.
\end{equation}
\end{propo}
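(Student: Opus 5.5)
The plan is to follow the $d=3$ argument verbatim, starting from Proposition~\ref{fixedd} with $d=4$:
\begin{equation*}
\mathrm{TVD}_\infty(4,\alpha)=\frac12\,\mathbb E_{\ell\sim\mathrm{Unif}(\Delta_3)}\Bigl[\bigl|1-\tfrac{\varphi_{4,\alpha}(\ell)}{(1-\alpha)^3}\bigr|\Bigr].
\end{equation*}
Here $\varphi_{4,\alpha}$ is given by the inclusion--exclusion formula of Lemma~\ref{Postacpsi}. Using $1-\min(1,u)=(1-u)_+$, every term can be written uniformly as $(1-\frac1\alpha\sum_{j\in S}\ell_j)_+^3$. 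The subset $S=\{1,2,3,4\}$ contributes $(1-1/\alpha)_+^3=0$ and can be dropped. With $w=1-x-y-z$, this makes $\varphi_{4,\alpha}$ an explicit piecewise cubic function on the tetrahedron $\Delta_3=\{x,y,z\ge0,\ x+y+z\le1\}$.

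The key step is the mean identity $\mathbb E_{\ell\sim\mathrm{Unif}(\Delta_3)}[\varphi_{4,\alpha}(\ell)]=(1-\alpha)^3$, which I would derive from the discrete model rather than by integrating the piecewise cubic. Since $P^{(n_1,n_2)}$ is a probability distribution, $\sum_\lambda P^{(n_1,n_2)}(\lambda)=1$. By Proposition~\ref{lambdarozklad} this gives
\begin{equation*}
\frac1M\sum_\lambda \frac{L_\lambda(n_1,n_2)}{\overline M}=1 .
\end{equation*}
This identity is also Vandermonde/coarse-graining of the pair law: summing $L_\lambda$ over $\lambda$ counts all pairs $(a,b)$, which number $M_1M_2$. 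Next I apply the uniform expansion $L_\lambda/\overline M=\varphi_\alpha(\ell)/(1-\alpha)^{3}+o(1)$ from the proof of Proposition~\ref{fixedd}, together with the Riemann-sum convergence for the bounded, piecewise continuous $\varphi_{4,\alpha}$. Letting $N\to\infty$ then yields $\mathbb E[\varphi_{4,\alpha}]/(1-\alpha)^3=1$. The same reasoning justifies the identity that was simply asserted for $d=3$. Making sure the uniformity and the Riemann-sum limit carry over is the only real obstacle, and it is already provided by Lemma~\ref{Postacpsi} and Proposition~\ref{fixedd}.

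With $X=\varphi_{4,\alpha}(\ell)$ and $m=(1-\alpha)^3=\mathbb E X$, the identity $\mathbb E|X-m|=2\,\mathbb E(X-m)_+$ holds because $\mathbb E(X-m)=0$. This gives
\begin{equation*}
\mathrm{TVD}_\infty(4,\alpha)=\frac{1}{(1-\alpha)^3}\,\mathbb E\bigl[(\varphi_{4,\alpha}-(1-\alpha)^3)_+\bigr].
\end{equation*}
Finally, $\Delta_3$ has volume $1/3!=1/6$ in the coordinates $(x,y,z)$, so the uniform density is $6$. Writing the expectation as $6\int_{\Delta_3}(\cdot)\,dx\,dy\,dz$ gives the stated formula.
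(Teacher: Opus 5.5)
Your proposal is correct and follows the paper's proof step for step: the inclusion--exclusion form of $\varphi_{4,\alpha}$ with the full-set term dropped, the mean identity $\mathbb{E}[\varphi_{4,\alpha}]=(1-\alpha)^3$, the identity $\mathbb{E}|X-m|=2\,\mathbb{E}(X-m)_+$, and the uniform density $6$ on $\Delta_3$. The one addition is that you actually prove the mean identity, via $\sum_\lambda L_\lambda(n_1,n_2)=M_1M_2$ together with the uniform expansion and Riemann-sum limit from Proposition~\ref{fixedd}; the paper uses it without proof in this argument and elsewhere only attributes it to the normalization of the truncated simplex volume.
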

\begin{proof}
Eliminating $w$ by $w=1-x-y-z$ we work on the standard $3$-simplex (tetrahedron)
\begin{equation}
\Delta_3=\{(x,y,z): x\ge0,\ y\ge0,\ z\ge0,\ x+y+z\le1\}.
\end{equation}
By the same inclusion–exclusion argument as before, the truncated-simplex volume
admits the uniform description
\begin{equation}
\varphi_{4,\alpha}(\ell) =\sum_{S\subseteq\{1,2,3,4\}}(-1)^{|S|} \Bigl(1-\frac{1}{\alpha}\sum_{j\in S}\ell_j\Bigr)_+^{3},
\end{equation}
i.e. each summand is \((1-\frac{1}{\alpha}\sum_{j\in S}\ell_j)_+^{d-1}\) with \(d-1=3\). The full-set term \(S=\{1,2,3,4\}\) equals \(\big(1-\frac{1}{\alpha}\big)^3_+=0\) for every \(\alpha\in(0,1)\) and is therefore omitted in the compact display. Explicitly the expansion contains singletons, pairs, triples and the empty set (the leading \(1\)). Then
\begin{equation}
    \begin{split}
        &\varphi_{4,\alpha}(x,y,z,w)= 1 -\sum_{j=1}^4\Bigl(1-\frac{\ell_j}{\alpha}\Bigr)_+^3 +\sum_{1\le i<j\le4}\Bigl(1-\frac{\ell_i+\ell_j}{\alpha}\Bigr)_+^3 -\sum_{1\le i<j<k\le4}\Bigl(1-\frac{\ell_i+\ell_j+\ell_k}{\alpha}\Bigr)_+^3\\
    \end{split}
\end{equation}

The limiting aggregate total variation distance is given by
\begin{equation}
\begin{split}
    &\mathrm{TVD}_\infty(4,\alpha)=\frac12\,\mathbb{E}_{\ell\sim\mathrm{Unif}(\Delta_3)}\!\Bigl[\Big|1-\frac{\varphi_{4,\alpha}(\ell)}{(1-\alpha)^3}\Big|\Bigr] = \frac{1}{2(1-\alpha)^3}\mathbb{E}_{\ell\sim\mathrm{Unif}(\Delta_3)}\left[\big|(1-\alpha)^3-\varphi_{4,\alpha}(\ell) \big|\right]\\
    &=\frac{1}{(1-\alpha)^3}\mathbb{E}_{\ell\sim\mathrm{Unif}(\Delta_3)}\!\left[\big(\varphi_{4,\alpha}(\ell)-(1-\alpha)^3\big)_+\right].
\end{split}
\end{equation}
Equivalently (writing the uniform expectation as an integral over the
tetrahedron), since the uniform density on \(\Delta_3\) is constant and equals
\(6\) (because $\operatorname{Vol}(\Delta_3)=\frac{1}{6}$), we have
\begin{equation}
\mathrm{TVD}_\infty(4,\alpha)=\frac12\int_{\Delta_3}\left|1-\frac{\varphi_{4,\alpha}(x,y,z)}{(1-\alpha)^3}\right|\,6\,dx\,dy\,dz.
\end{equation}
Replacing the expectation by the integral over \(T_3\) with density \(6\) gives
the computationally useful form
\begin{equation}
\mathrm{TVD}_\infty(4,\alpha)=\frac{6}{(1-\alpha)^3}\int_{\Delta_3}\big(\varphi_{4,\alpha}(x,y,z)-(1-\alpha)^3\big)_+\,dx\,dy\,dz.
\end{equation}
\end{proof}
Since the integral appearing in the definition of
$\mathrm{TVD}_\infty(4,\alpha)$ does not admit a simple closed-form expression,
we compute this quantity numerically as a function of $\alpha$. The integral
over the three-dimensional simplex was approximated using a Monte Carlo
integration scheme, and the parameter $\alpha$ was sampled at $100$ equally
spaced values.
\begin{figure}[H]
    \centering
    \includegraphics[width=0.5\textwidth]{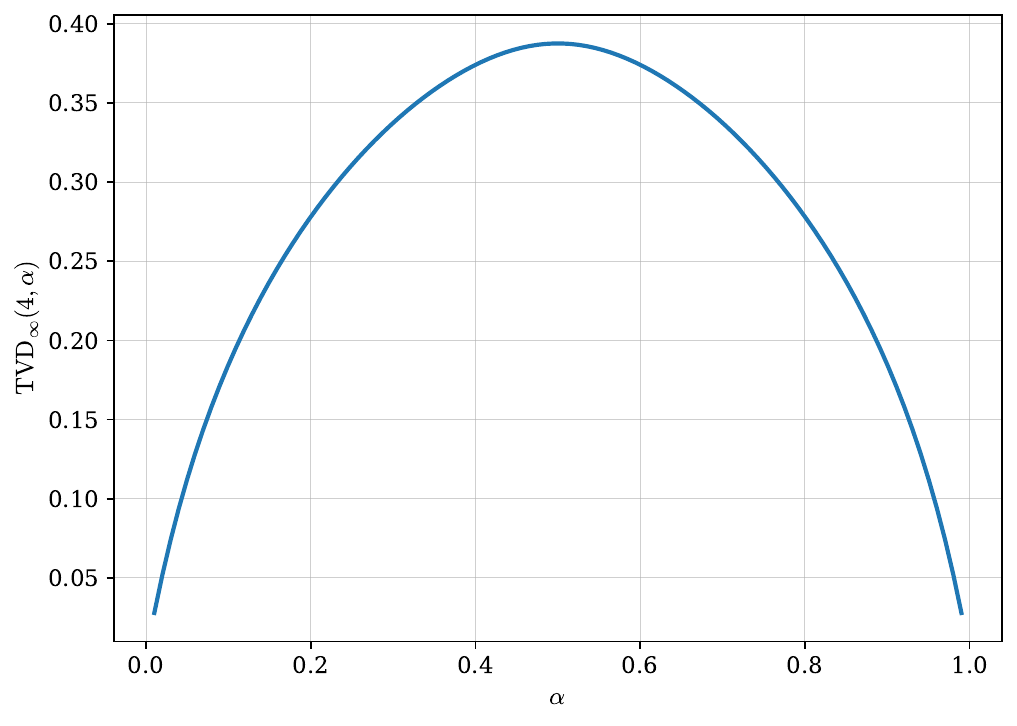}
    \caption{Limiting aggregate total variation distance between
    $P^{(N)}(\lambda)$ and $P^{(n_1,n_2)}(\lambda)$, i.e.
    $\mathrm{TVD}_{\infty}(4,\alpha)$, as a function of $\alpha$.}
    \label{d=4}
\end{figure}
Numerically:
\begin{equation*}
\mathrm{TVD}_\infty\!\left(4,\tfrac12\right)\approx 0.388,\qquad
\mathrm{TVD}_\infty\!\left(4,\tfrac13\right)\approx 0.352,\qquad
\mathrm{TVD}_\infty(4,0.1)\approx 0.184.
\end{equation*}

\subsubsection{Summary}
In summary, for $d=2,3,4$ we have derived a closed-form expression in the case
$d=2$ and integral representations together with numerical evaluations for
$d=3,4$.  The piecewise-polynomial structure of $\varphi_{d,\alpha}$ leads to
increasingly complex integrals as $d$ grows. Our numerical results illustrate
how $\mathrm{TVD}_\infty(d,\alpha)$ depends on the parameter $\alpha$. We also
observe that the plots of the aggregate total variation distance as a function
of $\alpha$ are symmetric.

\subsubsection{\texorpdfstring{The symmetric split $\alpha=\frac{1}{2}$}{The
symmetric split alpha=1/2}}

We now focus on the case
\begin{equation}
n_1=n_2=\frac{N}{2},
\end{equation}
which plays a distinguished role in the fixed-$d$, large-$N$ regime. In this
case the two blocks have equal size, and the limiting aggregate total variation
distance exhibits additional symmetry and structural simplifications. Recall
that for fixed $d$ the limiting aggregate total variation distance is given by
\begin{equation}
\mathrm{TVD}_\infty(d,\alpha)=\frac12\,\mathbb E_{\ell\sim\mathrm{Unif}(\Delta_{d-1})}\left|1-\frac{\varphi_\alpha(\ell)}{(1-\alpha)^{d-1}}\right|.
\end{equation}
Setting $\alpha=\frac{1}{2}$ yields the universal representation
\begin{equation}
    \mathrm{TVD}_\infty\!\left(d,\tfrac12\right) =\frac12\, \mathbb E_{\ell\sim\mathrm{Unif}(\Delta_{d-1})} \left| 1-2^{d-1}\, \varphi_{\frac{1}{2}}(\ell) \right|,
\end{equation}
where
\begin{equation}
    \varphi_{\frac{1}{2}}(\ell)=\sum_{S\subseteq[d]}(-1)^{|S|} \Bigl(1-\!\sum_{j\in S} \min\{1,2\ell_j\}\Bigr)_+^{\,d-1}= \sum_{S\subseteq[d]}(-1)^{|S|} \Bigl(1-2\!\sum_{j\in S}\ell_j\Bigr)_+^{\,d-1}.
\end{equation}
Here the second equality follows from the fact that whenever $\ell_j>\tfrac12$
for some $j\in S$, the corresponding term vanishes due to the positive-part
truncation. We next observe that the mean value of $\varphi_{\frac12}$ can be
computed explicitly. Indeed, for general $\alpha$ one has
\begin{equation}
\mathbb E_{\ell\sim\mathrm{Unif}(\Delta_{d-1})}[\varphi_\alpha(\ell)]=(1-\alpha)^{d-1},
\end{equation}
which follows from the normalization of the truncated simplex volume.
Specializing to $\alpha=\tfrac12$ yields
\begin{equation}
\mathbb E_{\ell\sim\mathrm{Unif}(\Delta_{d-1})}\bigl[\varphi_{\frac12}(\ell)\bigr]=2^{-(d-1)}.
\end{equation}
Using the identity
\begin{equation}
\mathbb E\bigl[\,|X-\mathbb E X|\,\bigr]=2\,\mathbb E\bigl[(X-\mathbb E X)_+\bigr],
\end{equation}
valid for any integrable random variable $X$, together with
$\mathbb E[\varphi_{\frac12}(\ell)]=2^{-(d-1)}$, we obtain
\begin{equation}\label{TVDpositive}
    \begin{split}
    &\mathrm{TVD}_\infty\!\left(d,\tfrac12\right) = \frac{1}{2}2^{d-1} \mathbb{E}_{\ell\sim\mathrm{Unif}(\Delta_{d-1})} \left[ \left| \varphi_{\frac12}(\ell) - 2^{-(d-1)} \right|\right]\\ & = \frac{1}{2}2^{d-1} \mathbb{E}_{\ell\sim\mathrm{Unif}(\Delta_{d-1})}\left[ \left| \varphi_{\frac12}(\ell) -\mathbb{E}_{\ell\sim\mathrm{Unif}(\Delta_{d-1})}[\varphi_{\frac12}(\ell)] \right|\right] =2^{d-1} \mathbb{E}_{\ell\sim\mathrm{Unif}(\Delta_{d-1})} \left[ \bigl(\varphi_{\frac12}(\ell)-2^{-(d-1)}\bigr)_+ \right].\\
    \end{split}
\end{equation}
The values of $\mathrm{TVD}_{\infty}(d,\frac12)$ were computed using Monte Carlo
sampling from the Dirichlet distribution. For each dimension $d$, we generate
$n=10^5$ independent samples. The aggregate total variation distance was then
estimated as
\begin{equation}
2^{d-1} \mathbb{E}_{\ell\sim\mathrm{Unif}(\Delta_{d-1})} \left[ \bigl(\varphi_{\frac12}(\ell)-2^{-(d-1)}\bigr)_+ \right].
\end{equation}
Error bars represent $95\%$ confidence intervals, computed as twice the standard
error of the Monte Carlo estimator, 
\begin{equation}
\pm \frac{2 \sigma}{\left(\frac{1}{2}\right)^{d-1}\sqrt{n}} = \pm 2^d\frac{ \sigma}{\sqrt{n}},
\end{equation}
where $\sigma$ denotes the empirical standard deviation of the sample values.
\begin{figure}[H]
    \centering
\includegraphics[width=0.6\textwidth]{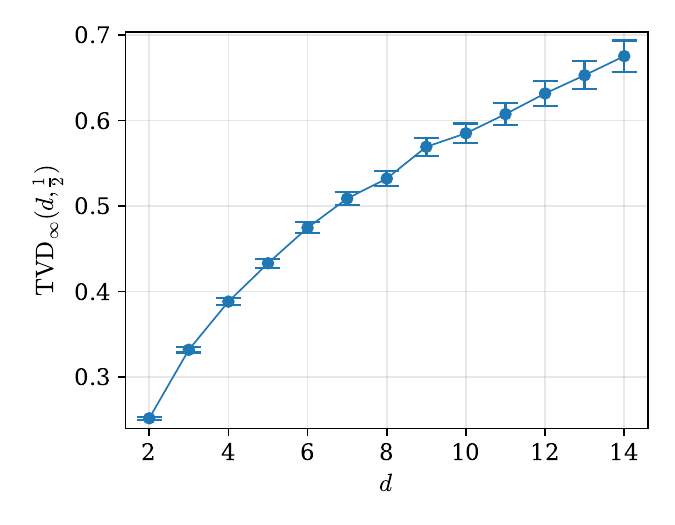}
\caption{The values of $\mathrm{TVD}_{\infty}(d,\frac12)$ with error bars.}
    \label{alpha12}
\end{figure}

\subsection{Joint scaling limits for aggregate model}

\subsubsection{\texorpdfstring{(JS-sparse) $N=o(\sqrt d)$}{(JS-sparse) N=o(sqrt d)}}
In the sparse joint-scaling regime $N=o(\sqrt d)$, the total number of
measurement outcomes grows sufficiently slowly compared to the local dimension
$d$ so that collisions of outcomes remain rare. As in the fixed-$N$, large-$d$
analysis, the dominant contribution to the aggregate total variation distance
arises from histograms with at most one collision. This allows us to reduce the
problem to a finite stratification of the histogram space and to exploit the
combinatorial structure developed in Section~\ref{collision}. More precisely, we
partition the set of histograms according to their collision structure:
\begin{itemize}
  \item $S_0$: no collisions (all counts $0$ or $1$),
  \item $S_1$: exactly one doubleton.
\end{itemize}
\begin{propo}[Sparse joint scaling]\label{JSsparse}
    Let $N=N(d)$ satisfy $N=o(\sqrt{d})$, and let
    $n_1$, $n_2$ be integers with $n_1+n_2=N$. Then, as
    $d \to \infty$, we have,
    \begin{equation}
        \mathrm{TVD}\left(P^{(N)},P^{(n_1,n_2)}\right)= \frac{n_1n_2}{d}+O\!\left(\frac{N^4}{d^2}\right).
    \end{equation}
\end{propo}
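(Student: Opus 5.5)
The plan is to redo the collision stratification of Proposition~\ref{fixedN}, but now tracking every error term explicitly in both $N$ and $d$. The starting point is the representation
\begin{equation*}
\mathrm{TVD}\left(P^{(N)},P^{(n_1,n_2)}\right)=\frac{1}{2M}\sum_{\lambda}\left|1-\frac{L_\lambda(n_1,n_2)}{\overline M}\right|.
\end{equation*}
I split the sum over $\lambda$ into $S_0$, $S_1$ and the remainder $S_{\ge2}$. On each of $S_0$ and $S_1$ the quantity $L_\lambda$ is explicit:
\begin{equation*}
L_\lambda=\binom{N}{n_1}\quad(\lambda\in S_0),
\end{equation*}
since every bin has capacity one and we choose which $n_1$ of the $N$ occupied bins go to block one. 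For $\lambda\in S_1$, the doubled bin may contribute $0$, $1$ or $2$ counts to block one, so
\begin{equation*}
L_\lambda=\binom{N-2}{n_1}+\binom{N-2}{n_1-1}+\binom{N-2}{n_1-2}=\binom{N}{n_1}-\binom{N-2}{n_1-1}\quad(\lambda\in S_1).
\end{equation*}

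The second step is a uniform expansion of $\overline M$. I write
\begin{equation*}
M_1=\frac{1}{n_1!}\prod_{i=0}^{n_1-1}(d+i),
\end{equation*}
and similarly $M_2$ and $M$. Taking logarithms and using $\log(1+i/d)=i/d+O(i^2/d^2)$ gives
\begin{equation*}
\prod_{i<n}(1+i/d)=1+\frac{n(n-1)}{2d}+O\!\left(\frac{n^4}{d^2}\right),
\end{equation*}
valid uniformly because $n^2/d\to0$. Hence
\begin{equation*}
\overline M=\binom{N}{n_1}\Bigl(1+\tfrac{n_1(n_1-1)+n_2(n_2-1)-N(N-1)}{2d}+O(N^4/d^2)\Bigr)=\binom{N}{n_1}\Bigl(1-\tfrac{n_1n_2}{d}+O(N^4/d^2)\Bigr).
\end{equation*}
In the same way I obtain the uniform masses of the strata:
\begin{equation*}
\frac{|S_0|}{M}=\frac{\binom dN}{M}=1-\frac{N(N-1)}{d}+O\!\left(\frac{N^4}{d^2}\right),
\qquad
\frac{|S_1|}{M}=\frac{d\binom{d-1}{N-2}}{M}=\frac{N(N-1)}{d}+O\!\left(\frac{N^4}{d^2}\right).
\end{equation*}
The second estimate follows because $(N-1)d\binom{d-1}{N-2}/M=N(N-1)/d\cdot(1+O(N^2/d))$. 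Consequently
\begin{equation*}
\frac{|S_{\ge2}|}{M}=O\!\left(\frac{N^4}{d^2}\right).
\end{equation*}

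The third step assembles the contributions. On $S_0$ we have
\begin{equation*}
\left|1-\frac{L_\lambda}{\overline M}\right|=\frac{n_1n_2}{d}+O\!\left(\frac{N^4}{d^2}\right),
\end{equation*}
so $S_0$ contributes $\tfrac12\,\tfrac{n_1n_2}{d}+O(N^4/d^2)$. On $S_1$, the identity
\begin{equation*}
\frac{\binom{N-2}{n_1-1}}{\binom{N}{n_1}}=\frac{n_1n_2}{N(N-1)}
\end{equation*}
gives
\begin{equation*}
\left|1-\frac{L_\lambda}{\overline M}\right|=\frac{n_1n_2}{N(N-1)}+O\!\left(\frac{N^2}{d}\right).
\end{equation*}
Multiplied by the mass $N(N-1)/d$ of $S_1$, this yields a further $\tfrac12\,\tfrac{n_1n_2}{d}+O(N^4/d^2)$. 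For $S_{\ge2}$ I only need a uniform bound on the integrand. Every decomposition $a\le\lambda$ with $\sum_j a_j=n_1$ is induced by at least one choice of $n_1$ of the $N$ samples, so $L_\lambda\le\binom{N}{n_1}$. Together with the expansion of $\overline M$ this gives $0\le L_\lambda/\overline M\le 1+o(1)$, so the integrand is bounded by $2$. The $S_{\ge2}$ contribution is therefore $O(N^4/d^2)$, and summing the three pieces gives the claim.

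I expect the main obstacle to be the uniformity of the expansions when $N$ grows with $d$. Each product expansion must be checked with explicit second-order remainders (e.g.\ via $\log(1+x)=x+O(x^2)$ and $e^{y}=1+y+O(y^2)$ with $y=O(N^2/d)\to0$). One must also make sure that the cross terms, such as $(n_1n_2/d)\cdot N^2/d$ in $S_0$ and $N(N-1)/d\cdot O(N^2/d)$ in $S_1$, are genuinely $O(N^4/d^2)$ rather than merely $o(n_1n_2/d)$. Both products are indeed $O(N^4/d^2)$, so the only real care needed is in tracking the constants consistently through these expansions.
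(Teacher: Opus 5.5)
Your proposal is correct and follows essentially the same route as the paper's proof: the same $S_0/S_1/S_{\ge2}$ stratification, the same explicit values of $L_\lambda$ on $S_0$ and $S_1$, and the same expansions of $\overline M$ and of the stratum masses. Your handling of $S_{\ge2}$ is cleaner than the paper's sketch: you get its mass by complement from the $S_0$ and $S_1$ masses, and you bound the integrand there via $L_\lambda\le\binom{N}{n_1}$, whereas the paper's counts $|S_2|,|S_3|$ cover only exactly two doubletons or exactly one tripleton and it never bounds the conditional expectation on $S_{\ge2}$. Just fix the stray factor $(N-1)$ in your $|S_1|/M$ justification, which should read $d\binom{d-1}{N-2}/M=\tfrac{N(N-1)}{d}\bigl(1+O(N^2/d)\bigr)$.
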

\begin{proof}
The argument follows the same collision-based stratification as in
Proposition~\ref{fixedN}. In the regime $N=o(\sqrt{d})$, the contribution of
configurations with two or more collisions is negligible, and the aggregate
$\mathrm{TVD}$ is captured by the strata $S_0$ and $S_1$. The leading
asymptotics are obtained by combining explicit formulas for
$L_{\lambda}(n_1,n_2)$ with expansions of binomial coefficients and of the
normalization factor $\overline{M}$. Full details are provided in
Appendix~\ref{sparseproof}.
\end{proof}
\begin{uw}
    In particular, since $N=o(\sqrt{d})$, we have 
    \begin{equation}
       \frac{N^4}{d^2}=o(1),
    \end{equation}
    where the rate of convergence depends on the scaling of $N$ relative to $d$.
    If $n_1/N\to\alpha$, we obtain
    \begin{equation}
        \mathrm{TVD}\left(P^{(N)},P^{(n_1,n_2)}\right) = \frac{\alpha(1-\alpha)N^2}{d}+o(1).
    \end{equation}
\end{uw}
\begin{cor}
    Under the assumptions of Proposition~\ref{JSsparse}, we have
    \begin{equation}
    \mathrm{TVD}\left(P^{(N)},P^{(n_1,n_2)}\right) \to 0 \quad \text{as} \quad d \to \infty
    \end{equation}
\end{cor}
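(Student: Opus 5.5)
The corollary follows directly from Proposition~\ref{JSsparse}. That result gives, under $N=o(\sqrt d)$,
\begin{equation*}
\mathrm{TVD}\left(P^{(N)},P^{(n_1,n_2)}\right)=\frac{n_1n_2}{d}+O\!\left(\frac{N^4}{d^2}\right).
\end{equation*}
It therefore suffices to show that both terms on the right-hand side tend to zero as $d\to\infty$, and then to use the nonnegativity of the total variation distance to conclude.

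For the leading term, we use the AM--GM inequality: $n_1n_2\le N^2/4$. Hence
\begin{equation*}
0\le \frac{n_1n_2}{d}\le \frac{N^2}{4d}=\frac14\left(\frac{N}{\sqrt d}\right)^2.
\end{equation*}
The hypothesis $N/\sqrt d\to0$ makes this tend to zero. For the remainder, $N^4/d^2=(N/\sqrt d)^4\to0$, which is exactly the observation in the preceding remark. Since the $O(\cdot)$ in Proposition~\ref{JSsparse} hides a constant independent of $d$, $N$, $n_1$, $n_2$, the remainder term also vanishes.

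Combining the two bounds, the total variation distance lies in $[0,\,o(1)]$, so it tends to $0$. This holds for every choice of splitting $n_1+n_2=N$, with no assumption that $n_1/N$ converges.

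The only point needing care is that the implied constant in Proposition~\ref{JSsparse} is uniform in the splitting and in the sequence $N(d)$. I would check that the appendix bounds on the strata $S_{\ge2}$ depend only on $N^2/d$, not on $n_1$ separately. Given the statement as written, this is assumed, and no other obstacle arises.
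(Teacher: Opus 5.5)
Your proof is correct and follows the paper's own reasoning: the corollary is read off from Proposition~\ref{JSsparse} together with the preceding remark that $N^4/d^2=(N/\sqrt d)^4\to0$. Your AM--GM bound $n_1n_2\le N^2/4$ makes the leading term vanish without the assumption $n_1/N\to\alpha$ that the paper's remark uses, which is a small but clean improvement.
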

This shows that in the sparse joint-scaling regime, the aggregate total
variation distance is asymptotically governed by single-collision events, while
configurations with multiple collisions contribute only at lower order.

\subsubsection{\texorpdfstring{Critical scaling regime $N/\sqrt d\to c$}{Critical
scaling regime $N/ \sqrt{d} \to c$}}

We now investigate the critical joint-scaling regime in which the number of
measurement outcomes grows proportionally to the square root of the local
Hilbert space dimension, $N/\sqrt d\to c$, with a fixed constant $c>0$. As
before, we consider a block decomposition with $n_1/N\to\alpha$ and
$n_2/N\to1-\alpha$, where $\alpha \in (0,1)$ is independent of $d$. Let
$\lambda=(\lambda_0,\dots,\lambda_{d-1})$ be a random histogram of size $N$,
drawn uniformly from all weak compositions of $N$ into $d$ parts. Define the
collision count by
\begin{equation}
C(\lambda) := \sum_{i=1}^d \binom{\lambda_i}{2}.
\end{equation}
\begin{propo}[Poisson limit for the collision count]\label{Poisson}
    Let $N$ be an integer sequence such that $N/\sqrt d\to c$ for some
    $c>0$. Then the collision count $C(\lambda)$ converges in
    distribution to a Poisson random variable $K$ with mean $c^2$:
    \begin{equation}
        C(\lambda)\xrightarrow[d\to \infty]{\mathrm{d}} K,\qquad K\sim \mathrm{Poisson}(c^2).
   \end{equation}
   Equivalently, for each fixed $k\ge 0$,
   \begin{equation}
       \mathbb P(C(\lambda)=k)\longrightarrow e^{-c^2}\frac{(c^2)^k}{k!}.
   \end{equation}
\end{propo}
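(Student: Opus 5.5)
Our approach is the method of factorial moments. For a uniformly random weak composition $\lambda$ of $N$ into $d$ parts (Bose--Einstein statistics), we show that for every fixed $r\ge1$,
\begin{equation}
\mathbb E\big[(C(\lambda))_r\big]\longrightarrow (c^2)^r,
\end{equation}
where $(x)_r=x(x-1)\cdots(x-r+1)$. Convergence of all factorial moments to those of $\mathrm{Poisson}(c^2)$ implies convergence in distribution, because the Poisson law is determined by its moments. We will use the exact counting formula underlying Proposition~\ref{rozkladPN}: the number of weak compositions of $m$ into $d$ parts is $\binom{m+d-1}{d-1}$. The uniform law on compositions is the Haar-averaged law, so every probability below is a ratio of such binomials.

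\textbf{Step 1 (first moment).} By symmetry,
\[
\mathbb E\,C=d\,\mathbb E\binom{\lambda_1}{2}.
\]
Equivalently, $\mathbb E\,C$ counts ordered pairs of balls sharing a bin in the P\'olya-urn/Dirichlet--multinomial representation. Conditional on $W\sim\mathrm{Dir}(1,\dots,1)$, the count $\lambda$ is multinomial, so
\[
\mathbb E\,C=\binom N2\sum_j\mathbb E W_j^2=\binom N2\cdot\frac{2}{d+1}.
\]
Since $N^2/d\to c^2$, this gives $\mathbb E\,C\to c^2$.

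\textbf{Step 2 ($r$-th factorial moment).} Write
\[
C=\sum_{\{a,b\}}\mathbf 1[\text{balls } a,b \text{ land in the same bin}],
\]
summing over pairs of the $N$ labelled samples. Then $\mathbb E(C)_r$ is the sum over ordered $r$-tuples of distinct pairs of the probability that all $r$ pairs collide. Split the tuples into two classes:
\begin{itemize}
\item[(i)] The $r$ pairs are vertex-disjoint. There are $\sim N^{2r}/2^r$ such tuples. For each, the probability is
\[
\sum_{j_1,\dots,j_r}\mathbb E\prod_i W_{j_i}^2 .
\]
Using the Dirichlet moments $\mathbb E\prod W_j^{m_j}=(d-1)!\prod m_j!/(\sum m_j+d-1)!$, the distinct-index terms give $(2/d^2)^r d^r(1+o(1))=(2/d)^r(1+o(1))$. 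Coinciding indices contribute $O(d^{-r-1})$. So class (i) contributes $(c^2)^r(1+o(1))$.
\item[(ii)] The pairs share vertices, so the union graph on balls has $v$ vertices and $e=r$ edges with $v<2r$, and collisions force its $\kappa$ connected components into single bins. The number of such tuples is $O(N^{v})$. The collision probability is $O(d^{-(v-\kappa)})$, because each component of size $s$ costs $\sum_j\mathbb E W_j^{s}=O(d^{-(s-1)})$, and bounded moment orders keep the constants uniform. Hence the contribution is $O(N^{v}d^{\kappa-v})=O(d^{\kappa-v/2})$. Every component has at least two vertices, and some component has at least three (otherwise the pairs would be disjoint), so $v\ge2\kappa+1$. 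The contribution is therefore $O(d^{-1/2})\to0$.
\end{itemize}

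\textbf{Step 3.} Conclude $\mathbb E(C)_r\to c^{2r}$ for every fixed $r$. This yields $\mathbb P(C=k)\to e^{-c^2}c^{2k}/k!$, for instance by the Bonferroni/inclusion--exclusion inequalities $\mathbb P(C=k)=\sum_{r\ge k}(-1)^{r-k}\binom rk\mathbb E(C)_r/r!$ truncated at even or odd $r$.

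The main obstacle is the bookkeeping in Step 2(ii): we need a uniform bound on the Dirichlet moments $\mathbb E\prod_j W_j^{m_j}$ in terms of the component structure. These moments carry the factor $(d-1)!/(2r+d-1)!$, and it must be controlled with constants depending only on $r$. We also need to check that the positively correlated Bose--Einstein weights ($\mathbb E W_j^2=2/(d(d+1))$, versus $1/d^2$ for iid uniform sampling) affect only the mean and not the Poisson structure.
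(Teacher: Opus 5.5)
Your proposal is correct, and it reaches the result by a genuinely different decomposition than the paper. Both arguments use factorial moments, but the paper applies them to the bin-level doubleton count $D_2(\lambda)=\#\{i:\lambda_i=2\}$ rather than to $C$ itself. There $\mathbb E[(D_2)_r]=(d)_r\binom{N-2r+d-r-1}{d-r-1}/\binom{N+d-1}{d-1}$ is an exact stars-and-bars count. Triple occupancies are handled separately by the first-moment bound $\mathbb P(\max_i\lambda_i\ge 3)\le\binom N3\frac{6}{(d+1)(d+2)}=O(d^{-1/2})$, and the result is transferred to $C$ because $C=D_2$ off that event.

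You instead expand $C$ as a sum of pair indicators over labelled samples in the Dirichlet--multinomial picture. Overlapping pairs, which are exactly the triple-or-higher occupancies, then appear inside the moment computation. Your graph bound $O(N^v d^{\kappa-v})=O(d^{\kappa-v/2})$, with $v\ge 2\kappa+1$, handles them correctly. The paper's route is shorter and avoids graph bookkeeping. Yours gives convergence of all factorial moments of $C$ itself, whereas the paper's transfer gives only convergence in distribution. Yours also does not depend on a closed form for the factorial moments, so it adapts more easily to other collision-type statistics.

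Two small repairs. First, the components of a collision pattern are not independent, so the per-component heuristic ``each component costs $\sum_j\mathbb E W_j^{s}$'' should be replaced by the direct estimate
\[
\sum_{j_1,\dots,j_\kappa}\mathbb E\prod_{c=1}^{\kappa} W_{j_c}^{s_c}\le v!\,d^{\kappa-v}.
\]
This follows at once from $\mathbb E\prod_j W_j^{m_j}=(d-1)!\prod_j m_j!/(m+d-1)!$ together with $(d-1)!/(m+d-1)!\le d^{-m}$. It resolves the ``main obstacle'' you flag. Second, the Bose--Einstein correlations need no separate check: they are already encoded in the exact Dirichlet moments, and they are precisely why the mean is $c^2$ rather than $c^2/2$. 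Also, in Step 1 the pairs are unordered, consistent with your factor $\binom N2$.
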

\begin{proof}
    This type of Poisson approximation is closely related to classical occupancy
    problems and Chen-Stein techniques, see \cite{arratia1989two}. The proof is
    given in Appendix~\ref{lkolizji}.
\end{proof}
\begin{propo}[Decomposition count for finitely many doubletons]\label{grLn}
    Fix $k\ge0$. Let $\lambda$ be a histogram containing $k$ bins of
    multiplicity $2$ and no higher multiplicities. If $n_1/N\to\alpha\in(0,1)$,
    then
    \begin{equation}
    \frac{L_{\lambda}(n_1,n_2)}{\binom{N}{n_1}} \longrightarrow (1-\alpha(1-\alpha))^k.
    \end{equation}
\end{propo}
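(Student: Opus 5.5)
The plan is to compute $L_{\lambda}(n_1,n_2)$ exactly for such a histogram, using the generating-function representation
\[
L_{\lambda}(n_1,n_2)=[x^{n_1}]\prod_j(1+x+\cdots+x^{\lambda_j})
\]
recorded before Lemma~\ref{Postacpsi}, and then take the limit term by term. Since $\lambda$ has $k$ bins equal to $2$, $N-2k$ bins equal to $1$ and the rest $0$, the product is $(1+x+x^2)^k(1+x)^{N-2k}$. We condition on the total $s\in\{0,\dots,2k\}$ that the first block draws from the doubleton bins. Writing $c_s:=[x^s](1+x+x^2)^k$, this gives
\[
L_{\lambda}(n_1,n_2)=\sum_{s=0}^{2k} c_s\binom{N-2k}{n_1-s}.
\]
This is a finite sum whose length $2k+1$ does not depend on $N$ or $d$. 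Implicitly $N\to\infty$, which the hypothesis $n_1/N\to\alpha\in(0,1)$ forces.

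Next, divide each term by $\binom{N}{n_1}$. For fixed $k$ and $s$, the ratio $\binom{N-2k}{n_1-s}/\binom{N}{n_1}$ is a product of $2k$ factors: $s$ of them are of the form $(n_1-i)/(N-i)$ and $2k-s$ are of the form $(n_2-i)/(N-i)$. Each factor converges to $\alpha$ or $1-\alpha$ respectively, so
\[
\frac{\binom{N-2k}{n_1-s}}{\binom{N}{n_1}}\longrightarrow \alpha^s(1-\alpha)^{2k-s}.
\]
This is the only analytic step. Because the sum has finitely many terms, limits can be exchanged with the sum without any uniformity issue. The boundary cases where $n_1-s<0$ or $n_1-s>N-2k$ do not occur for large $N$, since $n_1\sim\alpha N$ and $n_2\sim(1-\alpha)N$ both tend to infinity.

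Finally, evaluate the limit:
\[
\sum_{s} c_s\,\alpha^s(1-\alpha)^{2k-s}=(1-\alpha)^{2k}\Bigl(1+r+r^2\Bigr)^k,\qquad r=\frac{\alpha}{1-\alpha}.
\]
This equals $\bigl((1-\alpha)^2+\alpha(1-\alpha)+\alpha^2\bigr)^k=(1-\alpha+\alpha^2)^k=(1-\alpha(1-\alpha))^k$, which is the claim.

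As a sanity check, the same decomposition with $(1+x)^{2k}$ in place of $(1+x+x^2)^k$ reproduces $\binom{N}{n_1}$ by Vandermonde, and gives limit $1$. This shows the deficit comes precisely from each doubleton counting the split $(1,1)$ once rather than twice, matching the $1,2,1$ versus uniform weighting noted in Section~\ref{blockresolved}. I do not expect a genuine obstacle here; the only care needed is to keep $k$ fixed so the sum stays finite.
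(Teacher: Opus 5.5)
Your proof is correct and follows essentially the paper's route: the same generating function $[x^{n_1}](1+x)^{N-2k}(1+x+x^2)^k$, a finite sum whose length depends only on $k$, and a term-by-term limit of binomial ratios. The only difference is cosmetic. The paper rewrites $1+x+x^2=(1+x)^2-x$ to get the alternating sum $\sum_{j=0}^k(-1)^j\binom{k}{j}\binom{N-2j}{n_1-j}$ and reads off the limit by the binomial theorem, whereas you expand $(1+x+x^2)^k$ directly and resum by evaluating the polynomial at $\alpha/(1-\alpha)$.
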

\begin{proof}
    See Appendix~\ref{L/binom}.
\end{proof}
\begin{lem}[Critical normalization]\label{normalizationlem}
    Let $\overline{M}=\frac{M_1M_2}{M}$. If $N=N(d)$ satisfies
    $N/\sqrt d\to c>0$ and $n_1/N\to\alpha\in(0,1)$, then
    \begin{equation}
    \frac{\binom{N}{n_1}}{\overline{M}} \sim e^{\alpha(1-\alpha)c^2}.
    \end{equation}
\end{lem}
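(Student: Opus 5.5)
We have to show that $\binom{N}{n_1}M/(M_1M_2)\to e^{\alpha(1-\alpha)c^2}$. The plan is to write each stars-and-bars count as a product over its parts, so that the whole ratio becomes a product of factors close to $1$, and then expand logarithms.

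For any $m$,
\begin{equation}
\binom{m+d-1}{d-1}=\frac{d^m}{m!}\prod_{i=0}^{m-1}\Bigl(1+\frac{i}{d}\Bigr).
\end{equation}
Applying this to $M$, $M_1$ and $M_2$, the factors $d^N$ and $d^{n_1}d^{n_2}$ cancel, since $n_1+n_2=N$. The factorials $N!$ and $n_1!n_2!$ combine into $\binom{N}{n_1}^{-1}$. Hence
\begin{equation}
\frac{\binom{N}{n_1}}{\overline{M}}=\frac{\binom{N}{n_1}M}{M_1M_2}=\frac{\prod_{i=0}^{N-1}(1+i/d)}{\prod_{i=0}^{n_1-1}(1+i/d)\,\prod_{i=0}^{n_2-1}(1+i/d)}.
\end{equation}
This is an exact identity, so all the remaining work is asymptotic analysis of the right-hand side.

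Next we take logarithms and use $\log(1+x)=x+O(x^2)$, which is uniform for $0\le x\le N/d\to0$. A product $\prod_{i<m}(1+i/d)$ has logarithm
\begin{equation}
\frac{m(m-1)}{2d}+O\Bigl(\frac{m^3}{d^2}\Bigr).
\end{equation}
With $N\sim c\sqrt d$, the error term is $O(d^{-1/2})\to0$ for each of the three products.

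The main term of the logarithm is then
\begin{equation}
\frac{1}{2d}\bigl[N(N-1)-n_1(n_1-1)-n_2(n_2-1)\bigr]=\frac{1}{2d}\bigl[N^2-n_1^2-n_2^2\bigr]=\frac{n_1n_2}{d}.
\end{equation}
The linear terms cancel because $n_1+n_2=N$. Since $n_1/N\to\alpha$ and $N^2/d\to c^2$, we get $n_1n_2/d\to\alpha(1-\alpha)c^2$. Exponentiating gives $\binom{N}{n_1}/\overline M\to e^{\alpha(1-\alpha)c^2}$, which is the claimed asymptotic equivalence.

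No step here is a real obstacle. The only point needing care is that the cubic remainder $O(N^3/d^2)$ vanishes, and this holds precisely because $N=O(\sqrt d)$. At the critical scale $N/\sqrt d\to c$ only the quadratic term survives, which is why the limit is a nontrivial exponential rather than $1$.
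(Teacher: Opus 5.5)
Your proof is correct and follows essentially the same route as the paper: both write each stars-and-bars count as $\frac{d^k}{k!}\prod_{i<k}(1+i/d)$ and expand its logarithm as $\frac{k(k-1)}{2d}+O(k^3/d^2)$. Both then note that the quadratic terms for $k=N,n_1,n_2$ combine to $n_1n_2/d\to\alpha(1-\alpha)c^2$, while the cubic remainder vanishes because $N=O(\sqrt d)$. Your exact cancellation identity, written before any asymptotics, is only a slightly tidier bookkeeping of the same argument.
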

\begin{proof}
    The derivation is given in Appendix~\ref{normalization}.
\end{proof}
\begin{lem}[Critical aggregate domination]\label{criticaldomination}
Under the assumptions of Lemma~\ref{normalizationlem}, the likelihood ratios
\begin{equation}
R_d(\lambda)=\frac{L_\lambda(n_1,n_2)}{\overline M}
\end{equation}
are uniformly bounded for all sufficiently large $d$. Consequently,
$\{|1-R_d(\lambda)|\}_d$ is uniformly integrable under
$\lambda\sim P^{(N)}$.
\end{lem}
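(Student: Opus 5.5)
Our plan is to bound $R_d(\lambda)=L_\lambda(n_1,n_2)/\overline M$ uniformly by factoring it as
\begin{equation*}
R_d(\lambda)=\frac{L_\lambda(n_1,n_2)}{\binom{N}{n_1}}\cdot\frac{\binom{N}{n_1}}{\overline M}.
\end{equation*}
The second factor is deterministic, and Lemma~\ref{normalizationlem} gives $\binom{N}{n_1}/\overline M\to e^{\alpha(1-\alpha)c^2}$. Hence it is bounded by $2e^{\alpha(1-\alpha)c^2}$ for all sufficiently large $d$. It therefore suffices to show that $L_\lambda(n_1,n_2)\le \binom{N}{n_1}$ for every histogram $\lambda$ with $\sum_j\lambda_j=N$.

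This inequality is purely combinatorial, and we prove it with Vandermonde's identity, already used in Section~\ref{blockresolved}:
\begin{equation*}
\sum_{a+b=\lambda}\prod_{j}\binom{\lambda_j}{a_j}=\binom{N}{n_1}.
\end{equation*}
The sum on the left runs over exactly the admissible decompositions counted by $L_\lambda(n_1,n_2)$ (Proposition~\ref{wyliczenieL}). Each summand is a product of binomial coefficients with $0\le a_j\le\lambda_j$, so each is at least $1$. Consequently
\begin{equation*}
L_\lambda(n_1,n_2)\le\binom{N}{n_1}.
\end{equation*}
Equivalently, $P_{\mathrm{sh}}(a\mid\lambda)\le 1$ for each decomposition while $P_{\mathrm{ind}}(a\mid\lambda)=1/L_\lambda(n_1,n_2)$. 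Combining the two bounds gives
\begin{equation*}
0\le R_d(\lambda)\le \frac{\binom{N}{n_1}}{\overline M}\le 2e^{\alpha(1-\alpha)c^2}
\end{equation*}
uniformly in $\lambda$ for large $d$. Hence $|1-R_d(\lambda)|\le 1+2e^{\alpha(1-\alpha)c^2}$.

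Uniform integrability then follows immediately: a family of random variables bounded by a fixed constant is uniformly integrable under any sequence of laws, in particular under $\lambda\sim P^{(N)}$.

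The only real input is the asymptotic normalization from Lemma~\ref{normalizationlem}. Everything else is elementary, so we do not expect a genuine obstacle. The one point to watch is that the bound must hold for all $\lambda$, including histograms with high multiplicities that fall outside the scope of Proposition~\ref{grLn}. The Vandermonde argument covers these as well, because it makes no assumption on the collision structure of $\lambda$.
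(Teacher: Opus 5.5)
Your proof is correct and matches the paper's argument: you bound $L_\lambda(n_1,n_2)\le\binom{N}{n_1}$ combinatorially, which is the counting form of the paper's ``each decomposition is determined by choosing which $n_1$ occurrences go to the first block,'' and then bound $\binom{N}{n_1}/\overline M$ via Lemma~\ref{normalizationlem}. One minor slip: the parenthetical ``equivalently, $P_{\mathrm{sh}}(a\mid\lambda)\le 1$'' is not the right reformulation, since that inequality is trivial and does not imply the bound; the relevant fact is $P_{\mathrm{sh}}(a\mid\lambda)\ge 1/\binom{N}{n_1}$ for every admissible $a$, but nothing in your argument depends on this remark.
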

\begin{proof}
Every admissible decomposition of a histogram is determined by choosing which
$n_1$ of the $N$ occurrences belong to the first block. Hence
$L_\lambda(n_1,n_2)\le\binom{N}{n_1}$ for every $\lambda$. Lemma~\ref{normalizationlem}
shows that $\binom{N}{n_1}/\overline M$ converges to
$e^{\alpha(1-\alpha)c^2}$, and is therefore bounded for large $d$.
\end{proof}
Intuitively, in the critical regime the histogram consists mostly of singletons
together with a Poisson number of double collisions.  Each collision modifies
the number of admissible histogram decompositions by a factor
$(1-\alpha(1-\alpha))$, leading to the expectation formula above.
\begin{tw}\label{twCritScal}
    Let $N$ satisfy $N/\sqrt d\to c>0$, and let
    $n_1/N\to\alpha\in(0,1)$ with $n_2=N-n_1$. Then the aggregate total variation
    distance converges to
    \begin{equation}
    \lim_{d \to \infty} \mathrm{TVD}\left(P^{(N)},P^{(n_1,n_2)}\right)=\frac{1}{2}\mathbb{E} \left| 1 - e^{\alpha(1-\alpha) c^2}(1-\alpha(1-\alpha))^K \right|,
    \end{equation}
    where 
    \begin{equation}
    K \sim \mathrm{Poisson}(c^2).
    \end{equation}
\end{tw}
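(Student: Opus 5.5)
We start from the representation
\[
\mathrm{TVD}\left(P^{(N)},P^{(n_1,n_2)}\right)=\tfrac12\,\mathbb E_{\lambda\sim P^{(N)}}\bigl|1-R_d(\lambda)\bigr|,\qquad R_d(\lambda)=\frac{L_\lambda(n_1,n_2)}{\overline M},
\]
and factor the likelihood ratio as
\[
R_d(\lambda)=\frac{L_\lambda(n_1,n_2)}{\binom{N}{n_1}}\cdot\frac{\binom{N}{n_1}}{\overline M}.
\]
By Lemma~\ref{normalizationlem} the second factor converges to the deterministic constant $e^{\alpha(1-\alpha)c^2}$. The plan is therefore to show that the first factor, viewed as a random variable under $\lambda\sim P^{(N)}$, converges in distribution to $(1-\alpha(1-\alpha))^K$ with $K\sim\mathrm{Poisson}(c^2)$. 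We then pass the limit through the expectation using the uniform bound of Lemma~\ref{criticaldomination}.

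\textbf{Step 1: reduction to the good event.} We first restrict to the event
\[
G_d=\{\lambda:\ \max_j\lambda_j\le 2\},
\]
on which $C(\lambda)$ is exactly the number of doubletons. Under the uniform (Bose--Einstein) law, a fixed bin has $\mathbb P(\lambda_j\ge3)=\binom{N-3+d-1}{d-1}/M=O(N^3/d^3)$. A union bound over the $d$ bins gives
\[
\mathbb P(G_d^c)=O(N^3/d^2)=O(d^{-1/2})\to0.
\]
Since $|1-R_d|$ is uniformly bounded by Lemma~\ref{criticaldomination}, the contribution of $G_d^c$ to the expectation vanishes.

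\textbf{Step 2: convergence in distribution on the good event.} On $G_d$ we need a version of Proposition~\ref{grLn}: for $\lambda$ with $k$ doubletons and otherwise singletons,
\[
\frac{L_\lambda(n_1,n_2)}{\binom{N}{n_1}}\to(1-\alpha(1-\alpha))^k .
\]
Here $k$ is fixed and $\lambda$ is arbitrary, so this must hold uniformly in the positions of the doubletons. That uniformity is automatic, because $L_\lambda$ depends only on the multiset of multiplicities; in fact $L_\lambda=[x^{n_1}](1+x)^{N-2k}(1+x+x^2)^k$. Combining this with Proposition~\ref{Poisson} ($C(\lambda)\Rightarrow K$) and $\mathbb P(G_d^c)\to0$, we get for each fixed $k$:
\[
\mathbb P\bigl(C(\lambda)=k,\ \lambda\in G_d\bigr)\to e^{-c^2}\frac{c^{2k}}{k!}.
\]
Hence $R_d(\lambda)\mathbf 1_{G_d}$ converges in distribution to
\[
e^{\alpha(1-\alpha)c^2}(1-\alpha(1-\alpha))^K .
\]
To make this rigorous, we split the expectation as $\sum_{k\le k_0}\mathbb E\bigl[\,\cdot\,;\,C=k,\,G_d\bigr]$ plus a tail. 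The tail is bounded by $\sup|1-R_d|\cdot\mathbb P(C>k_0)$, which is small uniformly in $d$ by tightness of $C$. This tightness follows from the Poisson limit, or directly from $\mathbb E C=\binom N2\cdot\frac{2}{d+1}\to c^2$ and Markov's inequality. Each of the finitely many remaining terms converges by the two limit statements above.

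\textbf{Step 3: conclusion.} Since $x\mapsto|1-x|$ is continuous and everything is uniformly bounded, bounded convergence yields the stated limit $\tfrac12\,\mathbb E\bigl|1-e^{\alpha(1-\alpha)c^2}(1-\alpha(1-\alpha))^K\bigr|$.

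\textbf{Main obstacle.} I expect the main obstacle to be the uniform-in-$\lambda$ form of Proposition~\ref{grLn} in Step 2, together with the control of the higher-multiplicity event. If the cited proposition is only pointwise, I would prove the uniform statement directly by writing
\[
\frac{L_\lambda}{\binom{N}{n_1}}=\sum_{j}\binom{k}{j}\frac{\binom{N-2k}{n_1-j}\,[\ldots]}{\binom{N}{n_1}}
\]
and using $\binom{N-2k}{n_1-m}/\binom{N}{n_1}\to\alpha^{m}(1-\alpha)^{2k-m}$ for fixed $k$ and $m$. Only finitely many $k\le k_0$ are needed, so uniformity is not actually an issue.
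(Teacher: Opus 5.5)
Your proposal is correct and follows essentially the same route as the paper: you discard the event $\{\max_j\lambda_j\ge 3\}$, combine the Poisson limit of the doubleton count with Proposition~\ref{grLn} for each fixed $k$ and Lemma~\ref{normalizationlem}, and control the remaining terms with the uniform bound of Lemma~\ref{criticaldomination}. The only difference is presentational: you sum directly over the finitely many strata $\{C=k\}\cap G_d$, on which $R_d$ is constant, plus a tail, whereas the paper phrases the same step as convergence in probability followed by Slutsky's theorem and uniform integrability.
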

\begin{proof}
    Combining Propositions~\ref{Poisson}-\ref{grLn} with
    Lemma~\ref{normalizationlem} reduces the aggregate $\mathrm{TVD}$ expression
    to an expectation over the collision count $K$. The details are given in
    Appendix~\ref{ProofCritical}.
\end{proof}

\subsection{Block-resolved asymptotics}
In this subsection we briefly compare the asymptotic behaviour of the
block-resolved total variation distance with the aggregate statistic analyzed
above. As shown in Section~\ref{blockresolved}, the block-resolved experiment
retains additional information about the location of collisions across the two
blocks. Consequently, the corresponding $\mathrm{TVD}$ is generally larger than
the aggregate $\mathrm{TVD}$.
\begin{propo}[Sparse block-resolved asymptotics]\label{SparseResolved}
The leading sparse signal is unchanged. Fix integers $N\ge1$ and
$n_1,n_2\ge0$ with $n_1+n_2=N$. As $d\to\infty$,
\begin{equation}
\mathrm{TVD}\left(P_{\mathrm{sh}},P_{\mathrm{ind}}\right)=\frac{n_1n_2}{d}+O(d^{-2}),
\end{equation}
with constants depending only on $N$. More generally, if $N=o(\sqrt d)$ and
$n_1$, $n_2=N-n_1$, then, uniformly over the block split,
\begin{equation}
\mathrm{TVD}\left(P_{\mathrm{sh}},P_{\mathrm{ind}}\right)=\frac{n_1n_2}{d}+O\!\left(\frac{N^4}{d^2}\right).
\end{equation}
\end{propo}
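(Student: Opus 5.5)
We would work with the likelihood-ratio representation \eqref{pairtvdlr}, written as
\begin{equation*}
\mathrm{TVD}\left(P_{\mathrm{sh}},P_{\mathrm{ind}}\right)=\sum_{a,b}\bigl(P_{\mathrm{ind}}(a,b)-P_{\mathrm{sh}}(a,b)\bigr)_+=\mathbb E_{(a,b)\sim P_{\mathrm{ind}}}\bigl[(1-R(a,b))_+\bigr],
\end{equation*}
where
\begin{equation*}
R(a,b)=R_0\prod_{j}\binom{a_j+b_j}{a_j},\qquad R_0:=\frac{M_1M_2}{M\binom{N}{n_1}}.
\end{equation*}
The positive-part form is convenient for two reasons. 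It only needs an upper bound $(1-R)_+\le 1$ on rare events. It also automatically discards configurations where $R>1$.

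\textbf{Step 1: expand $R_0$.} Using
\begin{equation*}
\binom{m+d-1}{d-1}=\frac{d^m}{m!}\prod_{i=0}^{m-1}\Bigl(1+\frac{i}{d}\Bigr),
\end{equation*}
the factorials and powers of $d$ cancel against $\binom{N}{n_1}$. This leaves
\begin{equation*}
R_0=\frac{\prod_{i<n_1}(1+i/d)\prod_{i<n_2}(1+i/d)}{\prod_{i<N}(1+i/d)}.
\end{equation*}
Taking logarithms with $\log(1+x)=x+O(x^2)$ gives
\begin{equation*}
\log R_0=\frac{\binom{n_1}{2}+\binom{n_2}{2}-\binom{N}{2}}{d}+O\!\left(\frac{N^3}{d^2}\right)=-\frac{n_1n_2}{d}+O\!\left(\frac{N^3}{d^2}\right).
\end{equation*}
Hence $R_0=1-\frac{n_1n_2}{d}+O(N^4/d^2)$, where the quadratic term $(n_1n_2/d)^2$ has been absorbed into the error. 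The implied constants depend on nothing but absolute quantities, which gives uniformity over the split. In particular $R_0<1$ for large $d$.

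\textbf{Step 2: stratify by the collision structure of $\lambda=a+b$.}
\begin{itemize}
\item If $\lambda$ is collision-free, then every binomial factor equals $1$. So $R=R_0$ and each such pair contributes $P_{\mathrm{ind}}(a,b)(1-R_0)$.
\item If $\lambda$ has exactly one doubleton split across the blocks ($a_j=b_j=1$), then $R=2R_0>1$ for large $d$. These pairs contribute nothing.
\item If $\lambda$ has exactly one doubleton inside a single block ($a_j=2$ or $b_j=2$), then $R=R_0$. The contribution is at most $P_{\mathrm{ind}}(\text{collision})\cdot(1-R_0)$.
\item All remaining pairs (two or more collision pairs, or a multiplicity $\ge3$) are bounded using $(1-R)_+\le1$.
\end{itemize}

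\textbf{Step 3: collision probabilities under $P_{\mathrm{ind}}$.} Under $P_{\mathrm{ind}}$, $a$ and $b$ are independent uniform (Bose--Einstein) compositions. For a uniform composition of $m$ into $d$ parts, direct stars-and-bars counting gives
\begin{equation*}
\mathbb E\binom{\lambda_j}{2}=\binom{m}{2}\frac{2}{d(d+1)},
\end{equation*}
and analogous formulas hold for the mixed moments. From these we would show:
\begin{itemize}
\item $P_{\mathrm{ind}}(\lambda\notin S_0)=O(N^2/d)$;
\item $P_{\mathrm{ind}}(\text{some }\lambda_j\ge3)=O(N^3/d^2)$;
\item $P_{\mathrm{ind}}(\ge2\text{ collision pairs})\le \tfrac12\mathbb E[C(C-1)]=O(N^4/d^2)$, where $C(\lambda)=\sum_j\binom{\lambda_j}{2}$ is computed for the aggregate $\lambda=a+b$.
\end{itemize}
For the aggregate, $\lambda_j=a_j+b_j$ with independent uniform factors, so the moments are controlled by those of each block plus cross terms of order $n_1n_2/d^2$ and $n_1^2n_2^2/d^4\cdot d^2$.

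\textbf{Step 4: combine.} Adding the strata:
\begin{itemize}
\item the collision-free stratum gives $(1-O(N^2/d))(n_1n_2/d+O(N^4/d^2))=n_1n_2/d+O(N^4/d^2)$;
\item the within-block single doubleton gives $O(N^2/d)\cdot O(N^2/d)=O(N^4/d^2)$;
\item the rare strata give $O(N^4/d^2)$.
\end{itemize}
This yields the stated formula. The fixed-$N$ statement is the special case in which $N^4$ is a constant.

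\textbf{Main obstacle.} We expect the hardest part to be Step 3, specifically the second factorial-moment bound for $C(\lambda)$ when $\lambda=a+b$ is a sum of two independent Bose--Einstein compositions rather than a single uniform composition. The bound must hold with absolute constants so that the $O(N^4/d^2)$ is uniform in the split. We would first try computing $\mathbb E[\binom{\lambda_i}{2}\binom{\lambda_j}{2}]$ and $\mathbb E\binom{\lambda_j}{2}^2$ via the Dirichlet--multinomial representation, i.e.\ by conditioning on $W$ and $W'$ and using the Dirichlet moments from Appendix~\ref{dirichlet}.
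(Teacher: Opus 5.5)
Your argument is correct, and it takes a genuinely different route from the paper. The paper works with the two-sided form $\frac12\mathbb{E}_{\mathrm{ind}}|R-1|$ over the sectors $S_0$, $S_\times$, $S_{\mathrm{int}}$, $S_{\ge2}$. There the leading term splits evenly:
\begin{itemize}
\item half comes from $S_0$, where $|R-1|\approx n_1n_2/d$ with probability close to one;
\item half comes from $S_\times$, where $|R-1|\approx 1$ with probability $\approx n_1n_2/d$.
\end{itemize}
This forces the paper to establish the sharp asymptotic $\mathbb{P}_{\mathrm{ind}}(S_\times)=n_1n_2/d+O(N^4/d^2)$.

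Your one-sided form $\mathbb{E}_{\mathrm{ind}}[(1-R)_+]$ is legitimate because $\mathbb{E}_{\mathrm{ind}}[R]=1$. With it, the cross-collision sector drops out entirely once $2R_0>1$. The whole signal then comes from the deficit $1-R_0$ on collision-free pairs, and you need only crude upper bounds on collision probabilities. You also bound multi-collision events by $\mathbb{P}(C\ge2)\le\mathbb{E}\binom{C}{2}$, computed from Dirichlet moments. This gives an $O(N^4/d^2)$ estimate with absolute constants. It is more robust than the paper's counting bound $|S_{\ge2}|=O(d^{N-2})$ combined with $P=O(d^{-N})$, which is a fixed-$N$ argument that the paper merely asserts carries over to $N=o(\sqrt d)$.

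What the paper's route buys in exchange is interpretation. It shows explicitly that the distinguishing mass sits on cross-block collisions, each carrying likelihood factor $2$. This is the same mechanism that produces the $e^{-\beta}2^C$ limit in the critical regime. In your version that mass is visible only implicitly, through the constraint $\mathbb{E}_{\mathrm{ind}}[R]=1$ that balances the deficit on collision-free configurations.
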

The argument follows the same collision-sector decomposition as in the aggregate case. The leading contribution comes from simple cross-block collisions. Full details are provided in Appendix~\ref{ProofSR}.

\begin{tw} \label{CriticalResolved}
Let $N$ satisfy $N/\sqrt d\to c>0$, and let
$n_1/N\to\alpha \in (0,1)$ with $n_2=N-n_1$. The block-resolved limit is governed
by the number of simple cross-block collisions.
With
\begin{equation}
\beta:=\alpha(1-\alpha)c^2,\qquad C\sim\mathrm{Poisson}(\beta),
\end{equation}
one obtains
\begin{equation}
\mathrm{TVD}\left(P_{\mathrm{sh}},P_{\mathrm{ind}}\right)\longrightarrow\frac12\,\mathbb E\left[\left|e^{-\beta}2^C-1\right|\right].
\end{equation}
\end{tw}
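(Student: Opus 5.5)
The plan is to work directly with the likelihood-ratio form \eqref{pairtvdlr}, taking the expectation under $P_{\mathrm{ind}}$, so that $(a,b)$ is uniform on $\mathcal C_{n_1,d}\times\mathcal C_{n_2,d}$. The ratio to control is
\begin{equation}
R_d(a,b)=\frac{M_1M_2}{M\binom{N}{n_1}}\prod_{j}\binom{a_j+b_j}{a_j}.
\end{equation}

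\textbf{Step 1 (prefactor).} By Lemma~\ref{normalizationlem}, $\binom{N}{n_1}/\overline M\to e^{\beta}$ with $\beta=\alpha(1-\alpha)c^2$. Hence the prefactor $\overline M/\binom{N}{n_1}$ tends to $e^{-\beta}$.

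\textbf{Step 2 (product factor).} Write $\prod_j\binom{a_j+b_j}{a_j}=2^{C'}\cdot Z$. Here $C'$ counts the bins with $a_j=b_j=1$, i.e.\ the simple cross-block collisions. The factor $Z$ collects the contributions of bins where some $a_j\ge2$ or $b_j\ge2$, or where $a_j+b_j\ge3$ occurs across blocks.

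\textbf{Step 3 (occupancy estimates).} Under the independent model, $a$ and $b$ are independent uniform weak compositions of sizes $n_1,n_2\sim\alpha c\sqrt d,(1-\alpha)c\sqrt d$. The expected number of within-block doubletons in $a$ is $\sim n_1^2/(2d)$, which stays bounded, so within-block collisions do occur. They contribute factors $\binom{2}{2}=1$, however, so they do not affect the ratio. Triples and worse have probability $O(N^3/d^2)\to0$. A cross-block bin with $a_j+b_j\ge3$ involves at least three balls landing in one bin, again with probability $O(N^3/d^2)$. So $Z=1$ with probability $1-o(1)$.

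\textbf{Step 4 (Poisson limit).} Show $C'\to\mathrm{Poisson}(\beta)$ in distribution. The cleanest route is factorial moments. Using the uniform law of weak compositions, compute
\begin{equation}
\mathbb E\bigl[(C')_r\bigr]=(d)_r\,\mathbb P(a_1=\dots=a_r\ge1)\,\mathbb P(b_1=\dots=b_r\ge1)
\end{equation}
(restricted to the occupancy-one event). Each of these probabilities is explicit via stars-and-bars: $\mathbb P(a_1,\dots,a_r=1)=\binom{n_1-r+d-r-1}{d-r-1}/M_1\sim(n_1/d)^r$. Hence $\mathbb E[(C')_r]\to (n_1n_2/d)^r\to\beta^r$. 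Alternatively one can condition on $a$ and mimic the Chen--Stein argument behind Proposition~\ref{Poisson}. Combining with Steps 1 and 3, $R_d(a,b)\to e^{-\beta}2^{C}$ in distribution, with $C\sim\mathrm{Poisson}(\beta)$.

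\textbf{Step 5 (uniform integrability).} Convergence in distribution is not enough, because the TVD is $\tfrac12\mathbb E|R_d-1|$. The function $R_d$ is unbounded, since $\prod_j\binom{a_j+b_j}{a_j}$ can be large, and this is the main obstacle. There is, however, a Scheffé-type shortcut. Since $R_d\ge0$ and $\mathbb E R_d=1$ exactly, and the limit $e^{-\beta}2^{C}$ also has mean $e^{-\beta}e^{\beta}=1$, convergence in distribution of nonnegative variables with converging means yields uniform integrability. Therefore $\mathbb E|R_d-1|\to\mathbb E|e^{-\beta}2^C-1|$.

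The delicate part is rigor in Step 3/4: one must confirm that the mass where $Z\ne1$ is $o(1)$ under $P_{\mathrm{ind}}$. It is not enough to check this in the $R_d$-weighted sense, because Scheffé only needs distributional convergence under $P_{\mathrm{ind}}$ together with the exact mean identity. I would verify the $O(N^3/d^2)$ bounds via the explicit stars-and-bars probabilities $\mathbb P(a_j\ge3)\le \binom{n_1-3+d-1}{d-1}/M_1\cdot d\approx d(n_1/d)^3$.
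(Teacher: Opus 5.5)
Your proposal is correct and follows essentially the same route as the paper: the prefactor $\overline M/\binom{N}{n_1}\to e^{-\beta}$ from Lemma~\ref{normalizationlem}, and the exact stars-and-bars factorial-moment formula for the number of $(1,1)$ bins under $P_{\mathrm{ind}}$ giving $\mathrm{Poisson}(\beta)$. It also shares the $O(N^3/d^2)$ bound ensuring $\prod_j\binom{a_j+b_j}{a_j}=2^{C}$ with probability $1-o(1)$ (internal doubletons contribute $1$), and the Scheff\'e-type uniform integrability from $\mathbb E_{\mathrm{ind}}R_d=1$ together with the mean-one limit. (Two harmless slips: under the uniform composition law the expected number of within-block doubletons is $\sim n_1^2/d$, not $n_1^2/(2d)$, and in Step~4 the event should read $a_1=\dots=a_r=1$ rather than $\ge1$; neither affects the argument.)
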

In the critical regime, the number of simple cross-block collisions converges to a Poisson random variable. Each such collision contributes a factor of two to the likelihood ratio, while all higher-order collision events are asymptotically negligible. Combining the Poisson approximation with the asymptotics of the normalization factor yields the stated limit. Full details are given in Appendix~\ref{ProofCritResolved}.

\begin{cor} \label{ResolvedFixedd}
For fixed $d\ge2$ and $N\to\infty$, if $\frac{n_1}{N}\to\alpha\in(0,1)$ and
$n_2=N-n_1$, then
\begin{equation}
\mathrm{TVD}\left(P_{\mathrm{sh}},P_{\mathrm{ind}}\right)\longrightarrow 1.
\end{equation}
\end{cor}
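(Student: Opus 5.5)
We plan to show that, for fixed $d$, the shared-unitary pair law $P_{\mathrm{sh}}$ and the independent-block law $P_{\mathrm{ind}}$ concentrate on asymptotically disjoint sets of pairs $(a,b)$. A test that distinguishes them with vanishing error then forces the total variation distance to tend to one.

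The natural statistic is the pair of normalized block histograms $(a/n_1,\,b/n_2)\in\Delta_{d-1}\times\Delta_{d-1}$. Under $P_{\mathrm{ind}}$, Lemma~\ref{wspolny} says $a$ and $b$ are independent and each is uniform on its composition set. By the Riemann-sum argument already used in Proposition~\ref{fixedd}, $(a/n_1,b/n_2)$ therefore converges in distribution to $(\ell_1,\ell_2)$ with $\ell_1,\ell_2$ independent and $\mathrm{Unif}(\Delta_{d-1})$. Under $P_{\mathrm{sh}}$ we use the conditional representation from Section~\ref{blockresolved}: given $W\sim\mathrm{Dirichlet}(1,\dots,1)$, $a\sim\mathrm{Mult}(n_1,W)$ and $b\sim\mathrm{Mult}(n_2,W)$ independently. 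By Chebyshev, $\|a/n_1-W\|\le\varepsilon$ and $\|b/n_2-W\|\le\varepsilon$ each hold with conditional probability at least $1-d/(4n_i\varepsilon^2)$, uniformly in $W$.

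Define the event
\[
A_\varepsilon=\{(a,b):\ \|a/n_1-b/n_2\|_\infty\le 2\varepsilon\}.
\]
Averaging the conditional bound over $W$ gives $P_{\mathrm{sh}}(A_\varepsilon)\ge 1-O(1/(N\varepsilon^2))\to1$ for each fixed $\varepsilon>0$.

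For the independent law we bound $P_{\mathrm{ind}}(A_\varepsilon)$ directly by counting, which avoids any portmanteau subtleties. For each $a$, the number of $b\in\mathcal C_{n_2,d}$ with $\|b/n_2-a/n_1\|_\infty\le2\varepsilon$ is at most $(4\varepsilon n_2+1)^{d-1}$, since the last coordinate is determined by the others. Dividing by $M_2\sim n_2^{d-1}/(d-1)!$ gives
\[
P_{\mathrm{ind}}(A_\varepsilon)\le (d-1)!\,(4\varepsilon+1/n_2)^{d-1}(1+o(1)).
\]
This tends to $(d-1)!(4\varepsilon)^{d-1}$, which is small when $\varepsilon$ is small because $d\ge2$.

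Combining the two estimates,
\[
\mathrm{TVD}(P_{\mathrm{sh}},P_{\mathrm{ind}})\ \ge\ P_{\mathrm{sh}}(A_\varepsilon)-P_{\mathrm{ind}}(A_\varepsilon)\ \ge\ 1-O\!\left(\frac{1}{N\varepsilon^2}\right)-(d-1)!(4\varepsilon)^{d-1}-o(1).
\]
Letting $N\to\infty$ and then $\varepsilon\to0$ yields $\liminf\mathrm{TVD}\ge1$. The reverse inequality $\mathrm{TVD}\le1$ is trivial, so the limit equals one.

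The only step needing care is the concentration under $P_{\mathrm{sh}}$, which must hold uniformly in $W$. This is supplied by the multinomial variance bound $W_j(1-W_j)/n\le 1/(4n)$ together with a union bound over the $d$ coordinates. The hypothesis $\alpha\in(0,1)$ ensures that both $n_1$ and $n_2$ grow linearly in $N$.
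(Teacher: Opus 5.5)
Your proof is correct and follows the same strategy as the paper: a near-diagonal set $\{\|a/n_1-b/n_2\|\le\delta\}$ serves as the test, $P_{\mathrm{sh}}$ concentrates on it, and $P_{\mathrm{ind}}$ asymptotically gives it only small mass. The only difference is that you replace the paper's law-of-large-numbers and weak-convergence (portmanteau) steps with a Chebyshev bound uniform in $W$ and a direct lattice count, which also gives an explicit rate: taking $\varepsilon\sim N^{-1/(d+1)}$ yields $1-\mathrm{TVD}=O(N^{-(d-1)/(d+1)})$.
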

The shared and independent models converge to mutually singular limiting distributions on the simplex product space. The claim then follows from standard properties of total variation distance under weak convergence. See Appendix~\ref{ProofResolvedFixedd} for details. In the block-resolved experiment the two models become asymptotically perfectly distinguishable.

\section{Conclusion} 
In this work we studied distinguishability questions generated by Haar-random
measurement channels at two levels of observation. At the channel level, we
considered a maximally entangled tester for distinguishing a Haar-random
measure-and-prepare channel from the identity channel. This provides a
coherence-sensitive benchmark for the random measurement family. At the
classical level, we then asked what can be distinguished after the measurement
has already produced outcome data. In that setting, the problem becomes a
comparison of occupancy statistics and, in particular, of collision structures
in random histograms.

For the comparison of the two Haar-averaged measurement models, we obtained
exact formulas for the block-unresolved aggregate histogram laws, proved the
large-$d$ asymptotic
\begin{equation}
\mathrm{TVD}\left(P^{(N)},P^{(n_1,n_2)}\right)=\frac{n_1n_2}{d}+O(d^{-2})
\end{equation}
for fixed $N$, and derived simplex-limit formulas for fixed $d$ and large $N$.
We also showed that the same leading collision-driven asymptotic persists in the
sparse joint-scaling regime $N=o(\sqrt{d})$.

We also clarified the operational role of the statistic being observed. If block
labels are retained, the ordered pair of block histograms has an exact
likelihood ratio governed by
$\prod_j\binom{\lambda^{(1)}_j+\lambda^{(2)}_j}{\lambda^{(1)}_j}$, and its TVD
dominates the aggregate TVD by data processing. In sparse large-$d$ regimes the
first-order asymptotics agree with the aggregate result, but for fixed $d$ and
large $N$ the block-resolved TVD tends to one. Thus the nontrivial simplex
limits obtained in this manuscript characterize precisely the block-unresolved
experiment in which only the total histogram is available.

Our results illustrate how a quantum discrimination problem can be translated
into a hierarchy of statistical questions. Coherent testers detect the
measurement channel as a departure from the identity, while classical outcome
statistics reveal the structure of the underlying Haar randomness only through
histogram combinatorics. In the critical scaling regime $N/\sqrt d\to c$, we
established that the number of collision pairs converges to a Poisson
distribution with mean $c^2$. This leads to an explicit asymptotic formula for
the aggregate total variation distance between the collective and
block-separated models.   

\section*{Acknowledgements}
MM and ŁP acknowledges support from the National Science Center (NCN), Poland, under
Project Opus No. 2024/53/B/ST2/02026. 
ZP and LM acknowledge support from the
National Science Center (NCN), Poland, under Project Opus No. 
2022/47/B/ST6/02380
\bibliographystyle{quantum}
\bibliography{bibliografia}

\appendix
\section{Weingarten calculus} \label{WeingartenCal}

\subsection{\texorpdfstring{Evaluation of the integral $I_{n,d}$}{Evaluation of
the integral I(n,d)}}\label{ObInd}

We evaluate
\begin{equation}
   \frac{1}{d^{2N}} I_{N,d}=\frac{1}{d^{2N}}\sum_{\mathbf{k}}\int\big|\bra{\mathbf{k}}U^{\otimes N}\ket{\mathbf{k}}\big|^2\,dU, 
\end{equation}
where the integration is with respect to the Haar measure on $U(d)$. For a fixed
multi-index $\mathbf{k}=(k_0,\dots,k_{N-1})$ we have
\begin{equation}
\bra{\mathbf{k}}U^{\otimes N}\ket{\mathbf{k}}=\prod_{r=0}^{N-1} U_{k_r k_r},\qquad\big|\bra{\mathbf{k}}U^{\otimes N}\ket{\mathbf{k}}\big|^2=\prod_{r=0}^{N-1} |U_{k_r k_r}|^2.
\end{equation}
Summing over all $\mathbf{k}\in[d]^N$ yields
\begin{equation}
\sum_{\mathbf{k}}\prod_{r=0}^{N-1} |U_{k_r k_r}|^2=\Big(\sum_{j=0}^{d-1} |U_{jj}|^2\Big)^N.
\end{equation}
Thus we obtain
\begin{equation}
I_{N,d}=\int_{U(d)}\Big(\sum_{j=0}^{d-1} |U_{jj}|^2\Big)^N dU.
\end{equation}
To evaluate $I_{N,d}$ we use the general Weingarten formula
\cite{puchala2011symbolic,collins2006integration}. For arbitrary indices
$(i_r),(j_r),(i'_r),(j'_r)$ of length $N$ one has
\begin{equation}\label{weingarten}
\begin{split}
    &\int_{U(d)}\prod_{r=0}^{N-1} U_{i_r j_r}\,\overline{U_{i'_r j'_r}}\, dU=\sum_{\sigma,\tau\in S_N}\Big(\prod_{r=0}^{N-1}\delta_{i_r,\,i'_{\sigma(r)}}\Big)\Big(\prod_{r=0}^{N-1}\delta_{j_r,\,j'_{\tau(r)}}\Big)\mathrm{Wg}(\sigma^{-1}\tau,d).\\
\end{split}
\end{equation}
The following lemma provides a combinatorial interpretation of the delta
constraints appearing in the Weingarten formula and will be used in the proof of
the next result.
\begin{lem}\label{delta} For every pair of permutations $\sigma,\tau\in S_N$ one
has
\begin{equation}
\sum_{\mathbf{k}\in[d]^N}\prod_{r=0}^{N-1} \delta_{k_r,k_{\sigma(r)}} \,\delta_{k_r,k_{\tau(r)}}=d^{c(\sigma\vee\tau)},
\end{equation}
where $c(\sigma\vee\tau)$ denotes the number of connected components of the
graph obtained by joining $(r,\sigma(r))$ and $(r,\tau(r))$.
\end{lem}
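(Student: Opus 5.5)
The plan is to read the product of Kronecker deltas as a constraint that $\mathbf{k}$ is constant on certain classes. Then I count the number of admissible strings.

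\textbf{Graph and constraint.} Fix $\sigma,\tau\in S_N$. Let $G_{\sigma,\tau}$ be the graph on vertex set $\{0,\dots,N-1\}$ with edges $\{r,\sigma(r)\}$ and $\{r,\tau(r)\}$ for all $r$. For a given $\mathbf{k}\in[d]^N$, the product
\[
\prod_{r}\delta_{k_r,k_{\sigma(r)}}\,\delta_{k_r,k_{\tau(r)}}
\]
takes only the values $0$ and $1$. It equals $1$ exactly when $k_r=k_s$ for every edge $\{r,s\}$ of $G_{\sigma,\tau}$.

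\textbf{Constancy on components.} If $r$ and $s$ lie in the same connected component, they are joined by a path of edges. Transitivity of equality along the path gives $k_r=k_s$. So the product equals $1$ if and only if $\mathbf{k}$ is constant on each connected component of $G_{\sigma,\tau}$. Conversely, any assignment of one label in $[d]$ to each component satisfies every edge constraint, because every edge lies inside a single component.

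\textbf{Counting.} The sum therefore counts functions from the set of components to $[d]$. There are $d^{\#\text{components}}$ such functions.

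\textbf{Identification with the join.} The components of $G_{\sigma,\tau}$ are the blocks of $\pi_\sigma\vee\pi_\tau$, where $\pi_\sigma$ is the partition into cycles of $\sigma$. Indeed, the cycles of $\sigma$ are exactly the connected components of the graph with edges $\{r,\sigma(r)\}$, and similarly for $\tau$. Taking the union of the two edge sets yields the finest partition coarser than both cycle partitions, which is the join in the partition lattice. Hence the count equals $d^{c(\sigma\vee\tau)}$.

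\textbf{Main obstacle.} The only delicate point is the identification of the components with the join. This amounts to checking that the transitive closure of the two equivalence relations is their lattice join, which is immediate from the definition of the join.
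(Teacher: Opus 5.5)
Your proof is correct and follows essentially the same route as the paper: you read the delta constraints as the edges $r\sim\sigma(r)$, $r\sim\tau(r)$, note that the product is the indicator that $\mathbf{k}$ is constant on each connected component, and count $d^{c(\sigma\vee\tau)}$ such strings. Your extra step identifying the components with the blocks of the join of the cycle partitions is something the paper leaves implicit, since its lemma defines $c(\sigma\vee\tau)$ directly as the number of components; it connects the lemma to the partition-lattice wording used in Theorem~\ref{typeII}.
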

\begin{proof}
Consider a graph $G$ with vertices $\{0,\dots,N-1\}$, in which for each $r$ we
add two edges:
\begin{equation}
r \sim \sigma(r), \qquad r \sim \tau(r).
\end{equation}
The product
\begin{equation}
\prod_{r=0}^{N-1} \delta_{k_r,k_{\sigma(r)}} \, \delta_{k_r,k_{\tau(r)}}
\end{equation}
equals $1$ if and only if all indices $k_r$ are equal within every connected
component of $G$, and is $0$ otherwise. If the graph $G$ has $c(\sigma\vee\tau)$
connected components, then in the tuple $\mathbf{k}=(k_0,\dots,k_{N-1})$ one
can freely choose $c(\sigma\vee\tau)$ values (one for each component), while the
remaining indices are determined. Since each value can be chosen in $d$ ways,
the total number of admissible $\mathbf{k}$ equals $d^{c(\sigma\vee\tau)}$.
\end{proof}
\begin{lem}
Specializing the Weingarten formula \eqref{weingarten} to the diagonal case
$i_r=j_r=i'_r=j'_r=k_r$ yields
\begin{equation}
I_{N,d}=\sum_{\sigma,\tau\in S_N}d^{c(\sigma\vee\tau)}\,\mathrm{Wg}(\sigma^{-1}\tau,d),
\end{equation}
where $c(\sigma\vee\tau)$ denotes the number of blocks in the join of the cycle
partitions of $\sigma$ and $\tau$.
\end{lem}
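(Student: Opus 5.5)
Starting from $I_{N,d}=\int(\sum_j |U_{jj}|^2)^N dU$ (established above), I will first expand the $N$-th power back into a sum over multi-indices:
\begin{equation}
I_{N,d}=\sum_{\mathbf{k}\in[d]^N}\int_{U(d)}\prod_{r=0}^{N-1}U_{k_rk_r}\,\overline{U_{k_rk_r}}\,dU .
\end{equation}
Each summand is then a Haar moment of degree $(N,N)$. It has exactly the form of the general Weingarten formula \eqref{weingarten}, with all four index strings set equal: $i_r=j_r=i'_r=j'_r=k_r$. Applying \eqref{weingarten} gives, for fixed $\mathbf{k}$,
\begin{equation}
\int\prod_{r}|U_{k_rk_r}|^2\,dU=\sum_{\sigma,\tau\in S_N}\Big(\prod_r\delta_{k_r,k_{\sigma(r)}}\Big)\Big(\prod_r\delta_{k_r,k_{\tau(r)}}\Big)\mathrm{Wg}(\sigma^{-1}\tau,d).
\end{equation}

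Next I will interchange the finite sums over $\mathbf{k}$ and over $(\sigma,\tau)$. The Weingarten factor does not depend on $\mathbf{k}$, so it can be pulled out of the $\mathbf{k}$-sum. What remains is, for each pair $(\sigma,\tau)$, the count
\begin{equation}
\sum_{\mathbf{k}}\prod_r\delta_{k_r,k_{\sigma(r)}}\delta_{k_r,k_{\tau(r)}}.
\end{equation}
By Lemma~\ref{delta} this count equals $d^{c(\sigma\vee\tau)}$, where $c(\sigma\vee\tau)$ is the number of connected components of the graph with edges $r\sim\sigma(r)$ and $r\sim\tau(r)$. Substituting this gives the claimed formula.

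The only point needing care is the identification of graph components with blocks of the join of the cycle partitions. I will argue it directly: the connected components of the graph generated by the edges of $\sigma$ alone are exactly the cycles of $\sigma$, and likewise for $\tau$. Adding both edge sets therefore produces the finest partition that is coarser than both cycle partitions, which is precisely their join in the partition lattice. This matches the wording of the statement.

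I do not expect a real obstacle here. The one subtle bookkeeping issue is that the Weingarten formula pairs the row indices through $\sigma$ and the column indices through $\tau$. Since the row and column strings coincide (both equal $\mathbf{k}$), both pairings constrain the same string $\mathbf{k}$. That is why the count is a single join $c(\sigma\vee\tau)$ rather than a product of two separate counts such as $d^{c(\sigma)}d^{c(\tau)}$.
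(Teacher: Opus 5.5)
Your proposal is correct and follows the paper's proof essentially line for line: expand the $N$-th power into a sum over $\mathbf{k}$, apply the Weingarten formula with all four index strings equal to $\mathbf{k}$, interchange the finite sums, and use Lemma~\ref{delta} to obtain $d^{c(\sigma\vee\tau)}$. Your added remark identifying the connected components of the graph with the blocks of the join of the cycle partitions is a small clarification that the paper leaves implicit.
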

\begin{proof}
        \begin{equation}
    \begin{split}
        & I_{N,d} = \int_{U(d)}\Big(\sum_{j=0}^{d-1} |U_{jj}|^2\Big)^N dU =\int_{U(d)} \sum_{\mathbf{k}} \prod_{r=0}^{N-1} |U_{k_r k_r}|^2 dU = \int_{U(d)} \sum_{\mathbf{k}}\prod_{r=0}^{N-1} U_{k_r k_r} \overline{U_{k_r k_r}} dU \\ 
        & = \sum_{\mathbf{k}} \sum_{\sigma,\tau\in S_N} \Big(\prod_{r=0}^{N-1}\delta_{k_r,k_{\sigma(r)}}\Big) \Big(\prod_{r=0}^{N-1}\delta_{k_r,k_{\tau(r)}}\Big)\mathrm{Wg}(\sigma^{-1}\tau,d)=\sum_{\sigma,\tau\in S_N} d^{c(\sigma\vee\tau)} \mathrm{Wg}(\sigma^{-1}\tau,d) \\
    \end{split}
\end{equation}
\end{proof}

\subsection{\texorpdfstring{Explicit evaluation for small $N$}{Explicit
evaluation for small N}}

\begin{prz}[Case $N=2$]
For $N=2$ the symmetric group $S_2$ has two elements: the identity $\mathrm{id}$
and the transposition $(12)$. Using the permutation-sum representation
\begin{equation}
I_{2,d}=\sum_{\sigma,\tau\in S_2} d^{c(\sigma\vee\tau)}\mathrm{Wg}(\sigma^{-1}\tau,d),
\end{equation}
we list the four contributions and their combinatorial factors:
\begin{equation}
\begin{array}{|c|c|c|}
\hline
(\sigma,\tau) & \sigma^{-1}\tau & c(\sigma\vee\tau) \\ \hline
(\mathrm{id},\mathrm{id}) & \mathrm{id} & 2 \\
(\mathrm{id},(12))       & (12)               & 1 \\
((12),\mathrm{id})       & (12)               & 1 \\
((12),(12))              & \mathrm{id}        & 1 \\
\hline
\end{array}
\end{equation}
Hence
\begin{equation}
I_{2,d} = (d^2+d)\,\mathrm{Wg}(\mathrm{id},d) + 2d\,\mathrm{Wg}((12),d).
\end{equation}
The Weingarten values for $N=2$  are
\begin{equation}
\mathrm{Wg}(\mathrm{id},d)=\frac{1}{d^2-1},
\end{equation}
\begin{equation}
\mathrm{Wg}((12),d)=-\frac{1}{d(d^2-1)}.
\end{equation}
Substituting these into the formula gives, for $d\ge2$,
\begin{equation}
I_{2,d} = \frac{d+2}{d+1}.
\end{equation}
\end{prz}

\begin{prz}[Case $N=3$]
For $N=3$ we start directly from the general formula
\begin{equation}
I_{3,d}=\sum_{\sigma,\tau\in S_3} d^{\,c(\sigma\vee\tau)}\,\mathrm{Wg}(\sigma^{-1}\tau,d).
\end{equation}
It is convenient to group the terms by the permutation $\pi=\sigma^{-1}\tau$.
For a fixed $\pi$ the pairs $(\sigma,\tau)$ with $\tau=\pi \sigma$ are in
one-to-one correspondence with $\sigma\in S_3$, so that
\begin{equation}
I_{3,d}=\sum_{\pi\in S_3} \mathrm{Wg}(\pi,d)\;\Bigg(\sum_{\sigma\in S_3} d^{c(\sigma\vee\sigma\pi)}\Bigg).
\end{equation}
There are three conjugacy classes in $S_3$:
\begin{itemize}
  \item $\pi=\mathrm{id}$,
  \item $\pi$ a transposition,
  \item $\pi$ a $3$-cycle.
\end{itemize}
We now compute the inner sums
\begin{equation}
S(\pi):=\sum_{\sigma\in S_3} d^{\,c(\sigma\vee\sigma\pi)}.
\end{equation}
\noindent\textbf{1.} For $\pi=\mathrm{id}$:  
If $\sigma=\mathrm{id}$ then $c(\sigma\vee\sigma\pi)=3$ giving $d^3$.  For the
three transpositions $c(\sigma\vee\sigma\pi)=2$, contributing $3d^2$.  For the
two $3$-cycles $c(\sigma\vee\sigma\pi)=1$, contributing $2d$.  Thus
\begin{equation}
S(\mathrm{id})=d^3+3d^2+2d.
\end{equation}
\noindent\textbf{2.} For $\pi$ a transposition:  
Among the six choices of $\sigma\in S_3$ we obtain two cases with
$c(\sigma\vee\sigma\pi)=2$ and four cases with $c(\sigma\vee\sigma\pi)=1$, hence
\begin{equation}
S(\text{transposition})=2d^2+4d.
\end{equation}
\noindent\textbf{3.} For $\pi$ a $3$-cycle:  
Here all six values of $\sigma$ yield $c(\sigma\vee\sigma\pi)=1$, so
\begin{equation}
S(\text{3-cycle})=6d.
\end{equation}
Collecting all contributions, we find
\begin{equation}
\begin{aligned}
I_{3,d}
&= \mathrm{Wg}(\mathrm{id},d)\,(d^3+3d^2+2d) + 3\,\mathrm{Wg}(\text{transposition},d)\,(2d^2+4d)  + 2\,\mathrm{Wg}(\text{3-cycle},d)\,(6d).
\end{aligned}
\end{equation}
\end{prz}
Using the standard tabulated Weingarten expressions, for $d\ge3$, we have
\begin{equation}
I_{3,d}=\frac{d^2+6d+10}{(d+1)(d+2)}.
\end{equation}
In the case of $d=2$, we have
\begin{equation}
    I_{3,2}=2.
\end{equation} 
Indeed, for $U\in U(2)$ one has
\begin{equation}
|U_{00}|^2+|U_{11}|^2 = 2|U_{00}|^2.
\end{equation}
Since $|U_{00}|^2$ is uniformly distributed on $[0,1]$, we obtain
\begin{equation}
I_{3,2}=\mathbb E[(2|U_{00}|^2)^3]=8 \mathbb{E}[(|U_{00}|^2)^3] =2.
\end{equation}

\subsection{\texorpdfstring{Asymptotic analysis for fixed $N$}{Asymptotic
analysis for fixed N}} \label{asymptotic} We now analyze the behavior of
$I_{N,d}$ for fixed $N$ and $d\to\infty$.  Recall that
\begin{equation}
I_{N,d}=\sum_{\sigma,\tau\in S_N} d^{c(\sigma\vee\tau)} \mathrm{Wg}(\sigma^{-1}\tau,d) =\sum_{\pi\in S_N}\mathrm{Wg}(\pi,d)\; S(\pi),\end{equation}
\begin{equation}
S(\pi):=\sum_{\sigma\in S_N} d^{\,c(\sigma\vee\sigma\pi)}.
\end{equation}
From the general asymptotics of the Weingarten function
\cite{collins2006integration} one has
\begin{equation}
\mathrm{Wg}(\pi,d)=d^{-N-|\pi|}\big(\mathrm{Moeb}(\pi)+O(d^{-2})\big),
\end{equation}
where $|\pi|=N-c(\pi)$ is the length of the permutation, and
$\mathrm{Moeb}(\pi)$ denotes the Möbius-type constant depending only on the
cycle type of $\pi$ (see \cite{collins2006integration} for the precise
definition). On the other hand, $S(\pi)$ is a polynomial in $d$ of degree
$N-|\pi|$. Hence the leading order of the product is
\begin{equation}
\mathrm{Wg}(\pi,d)\,S(\pi)\;\sim\; d^{-2|\pi|}.
\end{equation}
Therefore:
\begin{itemize}
    \item For $\pi=\mathrm{id}$ ($|\pi|=0$) one has
        \begin{equation}
            S(\mathrm{id})=\sum_{\sigma\in S_N} d^{c(\sigma)}=d^N+\binom{N}{2}d^{N-1}+O(d^{N-2}),
        \end{equation}
        and since
        \begin{equation}
            \mathrm{Wg}(\mathrm{id},d)=d^{-N}(1+O(d^{-2})),
        \end{equation}
        their product contributes at orders $1$, $\frac1d$,
        $\frac{1}{d^2}$.
  \item $\pi$ a transposition ($|\pi|=1$) contributes terms of order
  $\frac{1}{d^2}$,
  \item all other permutations ($|\pi|\ge 2$) contribute at most $O(d^{-4})$.
\end{itemize}
Thus the coefficient of $d^{-1}$ can only come from the identity permutation
$\pi=\mathrm{id}$; all nonidentity permutations contribute at order $O(d^{-2})$
or smaller.

\medskip

For $\pi=\mathrm{id}$ we have
\begin{equation}
    \begin{split}
        &S(\mathrm{id})=\sum_{\sigma\in S_N} d^{c(\sigma \vee \sigma)} =\sum_{\sigma\in S_N} d^{c(\sigma)}=  \sum_{k=1}^{N} \stirling{N}{k} d^k  = d(d+1)\cdots(d+N-1)\\
        &=d^N+\binom{N}{2}d^{N-1}+O(d^{N-2}).
    \end{split}
\end{equation}
where $\stirling{N}{k}$ is the (unsigned) Stirling number of the first kind.
These numbers count the permutations of $N$ elements that consist of exactly $k$
disjoint cycles, that is, $\stirling{N}{k}$ equals the number of $\sigma\in S_N$
such that $c(\sigma)=k$. Multiplying with
$\mathrm{Wg}(\mathrm{id},d)=d^{-N}(1+O(d^{-2}))$ yields
\begin{equation}
    \begin{split}
        & \mathrm{Wg}(\mathrm{id},d)\,S(\mathrm{id})= d^{-N}(1+O(d^{-2})) \left(d^N+\binom{N}{2}d^{N-1}+O(d^{N-2})\right) \\
        & = 1+\frac{\binom{N}{2}}{d}+O(d^{-2}).
    \end{split}
\end{equation}
For $\pi$ a transposition one has $|\pi|=1$, so $S(\pi)$ is of degree at most
$N-1$ and
\begin{equation}
\mathrm{Wg}(\pi,d)\,S(\pi)=O(d^{-2}).
\end{equation}
Since $N$ is fixed, summing over all nonidentity permutations contributes
$O(d^{-2})$ in total. Therefore
\begin{equation}
I_{N,d}=1+\frac{\binom{N}{2}}{d}+O(d^{-2}),
\end{equation}
so
\begin{equation}
    I_{N,\infty}=1.
\end{equation}
This completes the proof of the asymptotic expansion.

\section{Haar-averaged histogram distributions}
\subsection{Proof of Lemma \ref{permutacje}} \label{permutacjeobl} Consider
 the state $\rho = \ket{00\dots0}\!\bra{00\dots0}$ and, for simplicity, denote
 $N=n_1 + n_2$.
 \begin{equation}
     \begin{split}
         & \overline{Q}^{(N)}(\rho) = \sum_{\mathbf{k}} \text{tr}\left(\overline{E}_{\mathbf{k}}^{(N)}\ket{00\dots0}\!\bra{00\dots0} \right) \ket{\mathbf{k}}\!\bra{\mathbf{k}} \\
     \end{split}
 \end{equation}
 If $\mathbf{k} = \sigma(\mathbf{m})$, then
 \begin{equation}
     \begin{split}
         & \overline{E}_{\mathbf{k}}^{(N)} = \int U^{\dagger \otimes N} \ket{\mathbf{k}}\!\bra{\mathbf{k}} U^{ \otimes N} dU = \int U^{\dagger \otimes N} P_{\sigma} \ket{\mathbf{m}}\!\bra{\mathbf{m}} P_{\sigma}^T U^{\otimes N} dU\\
         & = \int P_{\sigma} P_{\sigma}^T U^{\dagger \otimes N} P_{\sigma} \ket{\mathbf{m}}\!\bra{\mathbf{m}} P_{\sigma}^T U^{ \otimes N}P_{\sigma} P_{\sigma}^T dU  = \int P_{\sigma} U^{\dagger\otimes N}  \ket{\mathbf{m}}\!\bra{\mathbf{m}} U^{ \otimes N}P_{\sigma}^T dU = P_{\sigma} \overline{E}_{\mathbf{m}}^{(N)} P_{\sigma}^T.
     \end{split}
 \end{equation}
Hence 
 \begin{equation}
     \begin{split}
         & \text{tr} \overline{E}_{\mathbf{k}}^{(N)} \ket{00\dots0}\!\bra{00\dots0} = \text{tr}P_{\sigma} \overline{E}_{\mathbf{m}}^{(N)} P_{\sigma}^T \ket{00\dots0}\!\bra{00\dots0} = \text{tr} \overline{E}_{\mathbf{m}}^{(N)} P_{\sigma}^T \ket{00\dots0}\!\bra{00\dots0} P_{\sigma} \\
         & = \text{tr} \overline{E}_{\mathbf{m}}^{(N)}\ket{00\dots0}\!\bra{00\dots0}.
     \end{split}
 \end{equation}
 That is, the trace depends only on the composition of $\mathbf{k}$, not on its
 particular ordering. On the other hand,
\begin{equation}
    \begin{split}
        & \text{tr} \overline{E}_{\mathbf{k}}^{(N)} \ket{00\dots0}\!\bra{00\dots0}  = \int \bra{00\dots0} U^{\dagger \otimes N} \ket{\mathbf{k}}\!\bra{\mathbf{k}} U^{\otimes N} \ket{00\dots0} dU \\
        & =  \int \bra{0}U^{\dagger}\ket{k_0}\dots\bra{0}U^{\dagger}\ket{k_{N-1}}\bra{k_0}U\ket{0}\dots\bra{k_{N-1}}U\ket{0} dU  = \int |\bra{k_0}U\ket{0}|^2 \dots |\bra{k_{N-1}}U\ket{0}|^2 dU \\
        & = \int \prod_{r=0}^{N-1} |\bra{k_r}U\ket{0}|^2 dU. \\
    \end{split}
\end{equation}
\subsection{Conditional multinomial distribution} \label{multinomial} For the
block-unresolved experiment, the variable $\Lambda$ provides a complete
description of the observed aggregate statistics. For the random channel
$Q_U^{(N)}$, the probability of obtaining a given outcome sequence depends only
on its histogram, not on the ordering of outcomes; for the product channel
$Q_{U,V}^{(n_1,n_2)}$, aggregate statistics are obtained by summing over all
block-wise histograms $\lambda^{(1)}$ and $\lambda^{(2)}$ satisfying
$\lambda^{(1)}+\lambda^{(2)}=\lambda.$ Denote $W_{k_r} :=
|\bra{k_r}U\ket{0}|^2$. Then 
\begin{equation}
\prod_{r=0}^{N-1} |\bra{k_r}U\ket{0}|^2 = \prod_{r=0}^{N-1} W_{k_r} = \prod_{j=0}^{d-1} W_j^{\lambda_j}.
\end{equation}
Hence we obtain the conditional multinomial distribution
\begin{equation}
    \begin{split}
        & \mathbb{P}(\Lambda=\lambda|U) =m(\lambda)(W_0)^{\lambda_0}(W_1)^{\lambda_1} \dots (W_{d-1})^{\lambda_{d-1}} = m(\lambda)\prod_{j=0}^{d-1} W_j^{\lambda_j}
    \end{split}
\end{equation}
where $m(\lambda)$ is the multinomial coefficient, i.e.,
\begin{equation}
m(\lambda) = \frac{N!}{\lambda_0!\lambda_1!\dots \lambda_{d-1}!}.\end{equation}

\subsection{Dirichlet integral and Haar averaging} \label{dirichlet}

The density of the Dirichlet distribution
$\text{Dir}(\alpha_0,\dots,\alpha_{d-1})$ on the simplex $\Delta_{d-1}$ is given
by
\begin{equation}
    f_{\text{Dir}(\alpha)}(p) = \frac{1}{B(\alpha)} \prod_{j=0}^{d-1} p_j^{\alpha_j -1}, \end{equation}
    where $\alpha=(\alpha_0,\dots,\alpha_{d-1}) $ is the vector of parameters and 
\begin{equation}  
B(\alpha) = \frac{\prod_{j=0}^{d-1} \Gamma(\alpha_j)}{\Gamma(\sum_{j=0}^{d-1} \alpha_j)},
\end{equation}
where $\Gamma(\cdot)$ denotes the gamma function. In our case
$\alpha=(1,\dots,1)$, hence
\begin{equation}
f_{\text{Dir}(1,\dots,1)}(p) =  \frac{1}{B(\mathbf{1})} = \frac{\Gamma(\sum_{j=0}^{d-1} 1)}{\prod_{j=0}^{d-1} \Gamma(1)} =\Gamma(d)=(d-1)! \end{equation}
Since for a Haar-distributed unitary matrix $U$ the vector
$W=(W_0,\dots,W_{d-1})$ is uniformly distributed on the simplex
$\Delta_{d-1}$, we can replace the integration over $U$ by integration over the
simplex with the Dirichlet density:
\begin{equation}
    \begin{split}
        & I(\lambda) = \int \prod_{j=0}^{d-1} W_j^{\lambda_j} dU = \int_{\Delta_{d-1}} \prod_{j=0}^{d-1} p_j^{\lambda_j} f_{\text{Dir}(1,\dots,1)}(p) dp.
    \end{split}
\end{equation}
By substituting the values calculated above, we have
\begin{equation} I(\lambda) = (d-1)! \int_{\Delta_{d-1}} \prod_{j=0}^{d-1} p_j^{\lambda_j} dp. \end{equation}

\begin{lem} \label{gamma}
For any parameters $\alpha_0,\dots,\alpha_{d-1}>0$, the following identity holds:
\begin{equation}
\int_{\Delta_{d-1}} \prod_{j=0}^{d-1} p_j^{\alpha_j-1}\,dp= \frac{\prod_{j=0}^{d-1}\Gamma(\alpha_j)}{\Gamma\!\left(\sum_{j=0}^{d-1}\alpha_j\right)}.
\end{equation}

\end{lem}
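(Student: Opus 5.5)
We will prove the Dirichlet integral identity by induction on the number of coordinates $d$, reducing each step to the classical one-dimensional Beta integral. Throughout, $\int_{\Delta_{d-1}}\cdots\,dp$ is understood as Lebesgue measure on the projected simplex: we parametrize by $(p_0,\dots,p_{d-2})$ with $p_j\ge0$, $\sum_{j\le d-2}p_j\le1$, and set $p_{d-1}=1-\sum_{j\le d-2}p_j$. This is the normalization under which the density of $\mathrm{Dir}(1,\dots,1)$ equals $(d-1)!$, as used above, so the lemma is consistent with $\mathrm{Vol}(\Delta_{d-1})=1/(d-1)!$.

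\textbf{Base cases.} For $d=1$ the simplex is a point, and both sides equal $1$ since $\Gamma(\alpha_0)/\Gamma(\alpha_0)=1$. For $d=2$ the integral is $\int_0^1 p^{\alpha_0-1}(1-p)^{\alpha_1-1}\,dp=B(\alpha_0,\alpha_1)$. We will take the Beta–Gamma relation $B(a,b)=\Gamma(a)\Gamma(b)/\Gamma(a+b)$ as standard. If a self-contained argument is wanted, write $\Gamma(a)\Gamma(b)=\int_0^\infty\!\int_0^\infty x^{a-1}y^{b-1}e^{-x-y}\,dx\,dy$ and substitute $x=st$, $y=s(1-t)$, whose Jacobian is $s$; Fubini then gives $\Gamma(a+b)B(a,b)$.

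\textbf{Inductive step.} Let $d\ge2$ and write $s=1-p_{d-1}=\sum_{j\le d-2}p_j$.
\begin{enumerate}
\item Fix $p_{d-1}\in[0,1]$ and rescale the remaining coordinates by $p_j=s\,q_j$ for $j\le d-2$, so that $q\in\Delta_{d-2}$.
\item In the $(d-2)$ free coordinates this change of variables has Jacobian $s^{d-2}$.
\item The integral therefore factorizes as
\begin{equation}
\int_0^1 p_{d-1}^{\alpha_{d-1}-1}(1-p_{d-1})^{\sum_{j\le d-2}\alpha_j-1}\,dp_{d-1}\cdot\int_{\Delta_{d-2}}\prod_{j\le d-2}q_j^{\alpha_j-1}\,dq .
\end{equation}
The exponent arises as $\sum_{j\le d-2}(\alpha_j-1)+(d-2)=\sum_{j\le d-2}\alpha_j-1$.
\item Apply the Beta evaluation to the first factor and the induction hypothesis to the second.
\item The intermediate $\Gamma\bigl(\sum_{j\le d-2}\alpha_j\bigr)$ cancels, leaving $\prod_j\Gamma(\alpha_j)/\Gamma\bigl(\sum_j\alpha_j\bigr)$.
\end{enumerate}

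The main obstacle is not the algebra but justifying the factorization. Tonelli's theorem handles this: the integrand is nonnegative, so iterated integration and the change of variables are valid even though the integrand may blow up at the boundary when $\alpha_j<1$. Finiteness then follows from the computation itself, because $\alpha_j>0$ makes each Beta integral converge.

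As a sanity check, specializing to $\alpha_j=\lambda_j+1$ gives $\int_{\Delta_{d-1}}\prod_j p_j^{\lambda_j}\,dp=\prod_j\lambda_j!/(N+d-1)!$. Hence $I(\lambda)=(d-1)!\prod_j\lambda_j!/(N+d-1)!$, and multiplying by $m(\lambda)$ recovers the uniform law of Proposition~\ref{rozkladPN}.
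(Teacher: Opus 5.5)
Your proof is correct, but it takes a different route from the paper's. The paper does not induct. It writes $\prod_j\Gamma(\alpha_j)$ as a single integral over the orthant $[0,\infty)^d$ and substitutes $x_j=tp_j$ with $t=\sum_j x_j$ and $p\in\Delta_{d-1}$, which has Jacobian $t^{d-1}$. Because the weight $e^{-\sum_j x_j}=e^{-t}$ depends only on the radial variable, the integral splits at once into $\Gamma\bigl(\sum_j\alpha_j\bigr)$ times the simplex integral.

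You instead peel off one coordinate at a time inside the simplex and reduce each step to a Beta integral. Your self-contained proof of $B(a,b)=\Gamma(a)\Gamma(b)/\Gamma(a+b)$ is exactly the $d=2$ case of the paper's substitution. Your inductive Jacobian $s^{d-2}$ is the same cone-over-a-simplex Jacobian, one dimension lower. So the induction does not really avoid the multidimensional change of variables. It trades the unbounded orthant and exponential weight for the bounded simplex plus a Beta evaluation.

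The paper's argument is shorter and handles all $d$ in one step. Yours makes explicit two things the paper leaves implicit:
\begin{enumerate}
\item $dp$ means Lebesgue measure in $d-1$ free coordinates. This is the normalization under which the uniform Dirichlet density is $(d-1)!$, which the paper relies on when computing $I(\lambda)$.
\item Tonelli's theorem justifies the iterated integration and change of variables when some $\alpha_j<1$ makes the integrand singular on the boundary.
\end{enumerate}
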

\begin{proof}
Recall the definition of the Gamma function,
\begin{equation}
\Gamma(s)=\int_0^\infty t^{s-1} e^{-t} dt.
\end{equation}
The product $\prod_{j=0}^{d-1}\Gamma(\alpha_j)$ can therefore be expressed as
\begin{equation}
    \begin{split}
        & \prod_{j=0}^{d-1}\Gamma(\alpha_j)  = \prod_{j=0}^{d-1} \int_0^\infty x_j^{\alpha_j -1} e^{-x_j} dx_j = \int_{[0,\infty)^d}\!
   \Big(\prod_{j=0}^{d-1} x_j^{\alpha_j-1}\Big)
   e^{-\sum_{j=0}^{d-1} x_j}\, dx_0\cdots dx_{d-1}.
    \end{split}
\end{equation}
We now perform the change of variables
\begin{equation}
t = \sum_{j=0}^{d-1} x_j, \qquad p_j = \frac{x_j}{t}, \qquad p\in\Delta_{d-1}.
\end{equation}
The inverse transformation is $x_j = t p_j$, and the Jacobian of this change is
$|\det J| = t^{d-1}$. Hence $dx_0\cdots dx_{d-1} = t^{d-1}dtdp$. Substituting
into the integral yields
\begin{equation}
    \begin{split}
        & \prod_{j=0}^{d-1}\Gamma(\alpha_j) =\int_{t=0}^\infty \int_{p\in\Delta_{d-1}}
         \Big(\prod_{j=0}^{d-1} (tp_j)^{\alpha_j-1}\Big)
         e^{-t} t^{d-1} dp dt= \Big( \int_0^\infty t^{\sum_j \alpha_j-1} e^{-t} dt \Big) \Big(\int_{\Delta_{d-1}} \prod_{j=0}^{d-1} p_j^{\alpha_j-1} dp\Big) \\
        & = \Gamma\left(\sum_{j=0}^{d-1}\alpha_j\right) \int_{\Delta_{d-1}} \prod_{j=0}^{d-1} p_j^{\alpha_j-1} dp.\\
    \end{split}
\end{equation}
Hence
\begin{equation} \int_{\Delta_{d-1}} \prod_{j=0}^{d-1} p_j^{\alpha_j-1} dp = \frac{\prod_{j=0}^{d-1}\Gamma(\alpha_j)}{\Gamma\left(\sum_{j=0}^{d-1}\alpha_j\right)}. \end{equation}
\end{proof}
Using Lemma~\ref{gamma} and $\Gamma(d)=(d-1)!$ we obtain
\begin{equation}
    \begin{split}
        & I(\lambda) = (d-1)! \int_{\Delta_{d-1}} \prod_{j=0}^{d-1} p_j^{\lambda_j} dp =  (d-1)! \frac{\prod_{j=0}^{d-1}\lambda_j!}{\left(\sum_{j=0}^{d-1}(\lambda_j+1)-1\right)!} = (d-1)! \frac{\prod_{j=0}^{d-1} \lambda_j!}{\left(N+d-1\right)!} \\
    \end{split}
\end{equation}
Finally,
\begin{equation}
    \begin{split}
        & \mathbb{P}(\Lambda =\lambda) = m_{\lambda} I(\lambda)  = \frac{N!}{\lambda_0!\lambda_1!\dots \lambda_{d-1}!} (d-1)! \frac{\prod_{j=0}^{d-1} \lambda_j!}{\left(N+d-1\right)!} = \frac{(d-1)!N!}{(N+d-1)!}.
    \end{split}
\end{equation}
\subsection{Proof of Lemma \ref{wspolny}} \label{Pn1Pn2} Assume the measurement
operators act on separate blocks, i.e. $E_{U,\mathbf{l}}^{(n_1)}$ acts on the
first block subsystems $n_1$ and $E_{V,\mathbf{m}}^{(n_2)}$ acts on the second
block $n_2$, and the input state factorizes as
$\rho=\rho^{(n_1)}\otimes\rho^{(n_2)}$ with $\rho^{(n_1)}=\rho^{(n_2)}
=\ket{0\ldots0}\bra{0\ldots0}$. Then
\begin{equation}
    \begin{split}
        & \text{tr}\!\left[(E_{U,\mathbf{l}}^{(n_1)} \otimes E_{V,\mathbf{m}}^{(n_2)})(\rho^{(n_1)} \otimes \rho^{(n_2)})\right] \\
        & = \text{tr}\!\left(E_{U,\mathbf{l}}^{(n_1)}\rho^{(n_1)}\right) \text{tr}\!\left( E_{V,\mathbf{m}}^{(n_2)}\rho^{(n_2)}\right) \\
        & = \bra{0\dots0}E_{U,\mathbf{l}}^{(n_1)} \ket{0\dots0}\bra{0\dots0} E_{V,\mathbf{m}}^{(n_2)}\ket{0\dots0}\\ 
        & = \int \prod_{s=0}^{n_1-1}|\bra{l_s}U\ket{0}|^2 dU \int \prod_{t=0}^{n_2-1}|\bra{m_t}V\ket{0}|^2 dV 
    \end{split}
\end{equation}  
For $Q_{U,V}^{(n_1,n_2)}$ we introduce the count vectors $\lambda^{(1)}$ and
$\lambda^{(2)}$ for the first and the second block, respectively. Computing
analogously as for $Q_U^{(N)}$, we obtain
\begin{equation}
    \begin{split}
        & \mathbb{P}(\Lambda^{(1)}=\lambda^{(1)}, \Lambda^{(2)}=\lambda^{(2)}|U,V) = m^{(1)}(\lambda^{(1)})\prod_{j=0}^{d-1} W_j^{\lambda^{(1)}_j} \cdot m^{(2)}(\lambda^{(2)})\prod_{j=0}^{d-1} Z_j^{\lambda^{(2)}_j}
    \end{split}
\end{equation}
where 
\begin{equation} \begin{split}
    &W_j = |\bra{j}U\ket{0}|^2, \\
    &Z_j = |\bra{j}V\ket{0}|^2.\\
\end{split} \end{equation}
Using the same reasoning as above
\begin{equation}
    \begin{split}
        & \mathbb{P}(\Lambda^{(1)}=\lambda^{(1)},\Lambda^{(2)}=\lambda^{(2)}) = m^{(1)}(\lambda^{(1)}) \cdot I(\lambda^{(1)}) \cdot m^{(2)}(\lambda^{(2)}) \cdot I(\lambda^{(2)})  = \\
        &\frac{n_1!}{\lambda^{(1)}_0!\dots \lambda^{(1)}_{d-1}!}(d-1)!\frac{\prod_{j=0}^{d-1} (\lambda^{(1)}_j)!}{(n_1+d-1)!} \frac{n_2!}{\lambda^{(2)}_0!\dots \lambda^{(2)}_{d-1}!}(d-1)!\frac{\prod_{j=0}^{d-1} (\lambda^{(2)}_j)!}{(n_2+d-1)!} = \frac{n_1! n_2! (d-1)!(d-1)!}{(n_1+d-1)!(n_2+d-1)!}
    \end{split}
\end{equation}
This shows that the joint Haar-averaged distribution factorizes into two
independent multinomial blocks corresponding to the two subsystems.

\subsection{Counting admissible histogram decompositions} \label{liczbapar} We
first ignore the upper bounds and consider $k_j \ge 0$. In this case, the number
of solutions is given by
\begin{equation}
\#\{k_j \ge 0:\ \sum_j k_j = n_1\} = \binom{n_1 + d - 1}{d - 1}.
\end{equation}
We now subtract the solutions that violate the constraints $k_j \le \lambda_j$
using the inclusion–exclusion principle. For each index $j$, let
\begin{equation}
A_j = \{(k_0,\dots,k_{d-1}): k_j > \lambda_j\}
\end{equation}
be the set of solutions where the $j$-th component exceeds its allowed limit. By
the inclusion–exclusion principle,
\begin{equation}
L_{\lambda}(n_1,n_2)= \sum_{S\subseteq\{0,\dots,d-1\}} (-1)^{|S|} |A_S|,
\end{equation}
where \(A_S=\bigcap_{j\in S} A_j\).

For a fixed nonempty subset $S$, the condition $A_S$ means that for each $j\in
S$ we have $k_j \ge \lambda_j+1$. Introducing new variables
\begin{equation}
k'_j =\begin{cases}k_j - (\lambda_j+1), & j\in S,\\k_j, & j\notin S,\end{cases}
\end{equation}
we obtain
\begin{equation}
n_1= \sum_{j \in S} k'_j + (\lambda_j+1) + \sum_{j \notin S} k'_j\end{equation}
\begin{equation}
\sum_{j=0}^{d-1} k'_j = n_1 - \sum_{j\in S} (\lambda_j+1),
\end{equation}
where all $k'_j \ge 0$. Hence, the number of solutions corresponding to \(A_S\)
is
\begin{equation}
|A_S| = \binom{\,n_1 - \sum_{j\in S}(\lambda_j+1) + d - 1\,}{\,d - 1\,},
\end{equation}
with the convention that $\binom{a}{b}=0$ if $a<b$. For $S=\varnothing$ we
recover $|A_\varnothing|=\binom{n_1+d-1}{d-1}$. Thus, the final expression is
\begin{equation} \label{pairs}
    \begin{split}
        & L_{\lambda}(n_1,n_2) = \sum_{S\subseteq\{0,\dots,d-1\}} (-1)^{|S|}  \binom{n_1 - \sum_{j\in S}(\lambda_j+1) + d - 1}{d - 1},
    \end{split}
\end{equation}
where $\binom{a}{b}=0$ for $a<b$.

\begin{prz}
    Let $d=3$ and $\lambda=(2,1,1)$, so that $N=4$. We compute the number of
    pairs for $n_1=3$ and $n_2=1$.
    
    By applying the inclusion-exclusion principle, we have
    \begin{equation}
    L_{\lambda}(3,1)= \sum_{S\subseteq\{0,1,2\}} (-1)^{|S|}
      \binom{3 - \sum_{j\in S}(\lambda_j+1) + 2}{2},
    \end{equation}
    where we assume that \(\binom{a}{b}=0\) for $a<b$. Let us compute the
    contributions for all subsets $S$:
    \begin{itemize}
        \item $S=\varnothing$: $\sum_{j\in S}(\lambda_j+1)=0$,
        \begin{equation}
        (+1)\binom{3+2}{2} = \binom{5}{2}=10.
        \end{equation}
        \item $S=\{0\}$: $\sum_{j\in S}(\lambda_j+1)=\lambda_0+1=2+1=3$, 
        \begin{equation}
        (-1)\binom{3-3+2}{2} = -\binom{2}{2}=-1.
        \end{equation}
        \item $S=\{1\}$: $\lambda_1+1=1+1=2$, 
       \begin{equation}
        (-1)\binom{3-2+2}{2} = -\binom{3}{2}=-3.
        \end{equation}
        \item $S=\{2\}$: analogously to $S=\{1\}$, 
        \begin{equation} -\binom{3}{2}=-3\end{equation}
        \item $S=\{0,1\}$: $\sum_{j\in S}(\lambda_j+1)=3+2=5$,
        \begin{equation}
        (+1)\binom{3-5+2}{2}=\binom{0}{2}=0.
        \end{equation}
        \item $S=\{0,2\}$: analogously we have $0$.
        \item $S=\{1,2\}$: $\sum=2+2=4$,
        \begin{equation}
        (+1)\binom{3-4+2}{2}=\binom{1}{2}=0.
        \end{equation}
        \item $S=\{0,1,2\}$: $\sum=3+2+2=7$,
        \begin{equation}(-1)\binom{3-7+2}{2}=\binom{-2}{2}=0\end{equation}
    \end{itemize}
    Summing all the contributions, we obtain
    \begin{equation}
    L_{\lambda}(3,1) = 10 - 1 - 3 - 3 + 0 + 0 + 0 - 0 = 3.
    \end{equation}
\end{prz}
\section{Proof of Proposition \ref{fixedN}} \label{proofpropo}
Recall that
\begin{equation}
S_0=\{\lambda:\lambda_j\in\{0,1\},\ \sum_j\lambda_j=N\}
\end{equation}
is the set of collision-free histograms, while
\begin{equation}
S_1=\{\lambda:\text{exactly one}, \lambda_j=2,\text{ all others in }\{0,1\}\}
\end{equation}
is the set of single-collision histograms.
The histogram in $S_0$ is specified by choosing the $N$ occupied bins among the $d$ available bins. Hence
\begin{equation}
|S_0|=\binom{d}{N}.
\end{equation}
Similarly, to construct a histogram in $S_1$, one first chooses the bin carrying the double occupation (in $d$ possible ways), and then chooses the remaining $(N-2)$ singly occupied bins among the remaining $(d-1)$ bins. Therefore
\begin{equation}
|S_1|=d\binom{d-1}{N-2}.
\end{equation}
Consequently,
\begin{equation}
    \frac{|S_1|}{|S_0|}=\frac{N(N-1)}{d}+O(d^{-2}).
\end{equation}
When $\lambda$ has no collision ($\lambda \in S_0$), each of the $N$ hits
occupies a distinct bin. Assigning exactly $n_1$ to the first block amounts to
choosing $n_1$ of these $N$ bins, hence
\begin{equation}
L_{S_0}(n_1,n_2)=\binom{N}{n_1}.
\end{equation}
Using the asymptotic expansion
\begin{equation} 
\frac{(d-1)!}{(d+k-1)!} =d^{-k}\Bigl(1-\frac{k(k-1)}{2d} + O(d^{-2})\Bigr),\qquad k=\text{const},
\end{equation}
we obtain
\begin{equation}
    \begin{split}
        & P^{(N)}(\lambda)=N!\,d^{-N}\Bigl(1-\frac{N(N-1)}{2d}+O(d^{-2})\Bigr),\\
        & P_{S_0}^{(n_1,n_2)}(\lambda)= c_{n_1,n_2} \binom{N}{n_1} =N!\,d^{-N}\Bigl(1-\frac{n_1(n_1-1)+n_2(n_2-1)}{2d}+O(d^{-2})\Bigr).\\
    \end{split}
\end{equation}
where
\begin{equation}
c_{n_1,n_2} :=\frac{n_1! \, n_2! \, (d-1)!^2}{(n_1+d-1)! \, (n_2+d-1)!}.
\end{equation}
Thus
\begin{equation}
\begin{split}
    & \bigl|P^{(N)}(\lambda)-P_{S_0}^{(n_1,n_2)}(\lambda)\bigr| =\left| N! d^{-N} \left( 1-\frac{N(N-1)}{2d} -  1 +\frac{n_1(n_1-1)+n_2(n_2-1)}{2d} \right) +O(d^{-N-2}) \right|\\
    & = \left| -N! d^{-N} \left( \frac{N(N-1)-n_1(n_1-1) - n_2(n_2-1)}{2d}\right) +O(d^{-N-2}) \right| \\
    & = \left| -N! d^{-N} \frac{N^2 - n_1^2 -n_2^2}{2d} + O(d^{-N-2}) \right| = \left| -N! d^{-N} \frac{2n_1n_2}{2d} + O(d^{-N-2}) \right| \\
    &= N!\,d^{-N}\frac{n_1n_2}{d}+O(d^{-N-2}),
\end{split}
\end{equation}
Summing over all $\lambda\in S_0$ gives
\begin{equation}
\begin{split}
    & \mathrm{TVD}_{S_0}=\frac{1}{2}|S_0|\,N!d^{-N}\frac{n_1n_2}{d} = \frac{1}{2} \binom{d}{N} N!d^{-N}\frac{n_1n_2}{d} = \frac{n_1n_2}{2d}\frac{d!N!}{d^N N!(d-N)!}\\
    &=\frac{n_1n_2}{2d} \frac{d(d-1)\cdots (d-N+1)}{d^N} = \frac{n_1n_2}{2d} \prod_{i=0}^{N-1} \Big(1 - \frac{i}{d} \Big) =\frac{n_1n_2}{2d}(1+O(d^{-1}))= \frac{n_1n_2}{2d}+O(d^{-2}).
\end{split}
\end{equation}
The expansion holds uniformly for every fixed $N$. We now turn to histograms
with exactly one collision. For a histogram with one bin equal to $2$ ($\lambda
\in S_1$),
\begin{equation}
L_{S_1}(n_1,n_2)=\binom{N-2}{n_1}+\binom{N-2}{n_1-1}+\binom{N-2}{n_1-2}= \binom{N}{n_1}-\binom{N-2}{n_1-1}.
\end{equation}
Thus
\begin{equation}
\begin{split}
&P_{S_1}^{(n_1,n_2)}(\lambda) = c_{n_1,n_2}L_{S_1}(n_1,n_2)\\
&= n_1!n_2!\,d^{-N}\left(\binom{N}{n_1}-\binom{N-2}{n_1-1}\right)\Bigl(1-\frac{n_1(n_1-1)+n_2(n_2-1)}{2d}+O(d^{-2})\Bigr)\\
&= N!\,d^{-N}\Bigl(1-\frac{n_1(n_1-1)+n_2(n_2-1)}{2d}\Bigr)- n_1!n_2!\,d^{-N}\binom{N-2}{n_1-1}+O(d^{-N-2}).
\end{split}
\end{equation}
and
\begin{equation}
\begin{split}
    & P^{(N)}(\lambda)-P_{S_1}^{(n_1,n_2)}(\lambda)= N!d^{-N}\Bigl(\frac{-N(N-1)+n_1(n_1- 1)+n_2(n_2-1)}{2d}\Bigr)+\\
    &+ n_1!n_2!\,d^{-N}\binom{N-2}{n_1-1} + O(d^{-N-2}) = -N!d^{-N}\frac{n_1n_2}{d} + n_1!n_2!\,d^{-N}\binom{N-2}{n_1-1} + O(d^{-N-2}).
    \end{split}
\end{equation}
We now evaluate the contribution of single-collision histograms $\lambda \in
S_1$. Using $|S_1|=d\binom{d-1}{N-2}$ we get
\begin{equation}
\begin{split}
&\mathrm{TVD}_{S_1}
=\frac{1}{2}|S_1|\Bigg|N!d^{-N}\frac{n_1n_2}{d}- n_1!n_2!\,d^{-N}\binom{N-2}{n_1-1}+ O(d^{-N-2})\Bigg|\\
&= \Bigg|\frac{1}{2}\,d\binom{d-1}{N-2}N!d^{-N}\frac{n_1n_2}{d}- \frac{1}{2}\,d\binom{d-1}{N-2}\,n_1!n_2!\,d^{-N}\binom{N-2}{n_1-1}\Bigg|+ O(d^{-2}) =: |T_A-T_B| + O(d^{-2})
\end{split}
\end{equation}
The first term becomes
\begin{equation}
    \begin{split}
        & T_A = \frac{1}{2}\,d\binom{d-1}{N-2}\,N!d^{-N}\frac{n_1n_2}{d} = \frac{n_1n_2}{2d}\frac{d(d-1)\cdots(d-N+2)}{d^{N-1}}\frac{N(N-1)}{d} \\
        & =  \frac{n_1n_2}{2d} \frac{N(N-1)}{d} \bigl(1+O(d^{-1})\bigr) =   \frac{n_1n_2N(N-1)}{2d^2}(1+O(d^{-1})) = O(d^{-2})
    \end{split}
\end{equation}
The second term
\begin{equation}
    \begin{split}
        & T_B = \frac{1}{2}\,d\binom{d-1}{N-2}\,n_1!n_2!\,d^{-N}\binom{N-2}{n_1-1} =\frac{1}{2}\frac{1}{(N-2)!}\frac{1}{d}\bigl(1+O(d^{-1})\bigr)\frac{(N-2)!n_1!n_2!}{(n_1-1)!(n_2-1)!}\\
        & =\frac{n_1n_2}{2d}(1+O(d^{-1})) = \frac{n_1n_2}{2d} +O(d^{-2})
    \end{split}
\end{equation}
Collecting terms we have
\begin{equation}
    \begin{split}
        & \mathrm{TVD}_{S_1}= \left|O\left(d^{-2}\right) - \Big(\frac{n_1n_2}{2d}+O\left(d^{-2}\right)\Big)\right| + O\left(d^{-2}\right) = \frac{n_1n_2}{2d} + O\left(d^{-2}\right) 
    \end{split}
\end{equation}
Summing the two leading contributions yields
\begin{equation}
\begin{split}
    &\mathrm{TVD}\left(P^{(N)},P^{(n_1,n_2)}\right)= \mathrm{TVD}_{S_0} + \mathrm{TVD}_{S_1} + O(d^{-2}) =\frac{n_1n_2}{2d}+O(d^{-2}) + \frac{n_1n_2}{2d} + O(d^{-2})\\ &= \frac{n_1n_2}{d} + O(d^{-2})
\end{split}
\end{equation}

\section{Proof of Lemma \ref{Postacpsi}} \label{erhart}
We use the inclusion--exclusion formula
\begin{equation}
L_{\lambda}(n_1,n_2)=\sum_{S\subseteq\{0,\dots,d-1\}}(-1)^{|S|}
\binom{n_1-\sum_{j\in S}(\lambda_j+1)+d-1}{d-1},
\end{equation}
with the convention that the binomial coefficient is zero if the upper argument
is smaller than $d-1$. For fixed $d\ge2$ and all integers $a=O(n_1)$, uniformly
including the boundary region $a=0$, we have
\begin{equation}
\binom{a+d-1}{d-1}\mathbf 1_{\{a\ge0\}}
=\frac{a_+^{\,d-1}}{(d-1)!}+O(n_1^{d-2}).
\end{equation}
Indeed, for $a\ge0$ this follows from expanding
$(a+1)\cdots(a+d-1)/(d-1)!$ as a polynomial in $a$, while for $a<0$ both leading
terms vanish. Applying this estimate to
\begin{equation}
a_S:=n_1-\sum_{j\in S}(\lambda_j+1)
\end{equation}
and summing over the finitely many subsets $S$ gives the uniform expansion
\begin{equation}
L_{\lambda}(n_1,n_2)=\frac{1}{(d-1)!}
\sum_{S\subseteq\{0,\dots,d-1\}}(-1)^{|S|}
\left(n_1-\sum_{j\in S}(\lambda_j+1)\right)_+^{d-1}
O(n_1^{d-2}).
\end{equation}
Let $\ell_j=\lambda_j/N$ and $\alpha_N=n_1/N$. Since
$n_1=\lfloor\alpha N\rfloor$, we have $\alpha_N\to\alpha$ and
\begin{equation}
\frac{\sum_{j\in S}(\lambda_j+1)}{n_1}
=\sum_{j\in S}\frac{\ell_j}{\alpha_N}+\frac{|S|}{n_1}.
\end{equation}
The function $x\mapsto (1-x)_+^{d-1}$ is Lipschitz on every bounded interval, and
the quantities above remain uniformly bounded for $\ell\in\Delta_{d-1}$. Hence,
uniformly in $\lambda$ and in $S$,
\begin{equation}
\left(1-\frac{\sum_{j\in S}(\lambda_j+1)}{n_1}\right)_+^{d-1}
=\left(1-\sum_{j\in S}\frac{\ell_j}{\alpha}\right)_+^{d-1}
O(n_1^{-1}).
\end{equation}
After multiplication by $n_1^{d-1}$ and summation over $S$, this error is
$O(n_1^{d-2})$. Finally, replacing $\ell_j/\alpha$ by
$\min\{1,\ell_j/\alpha\}$ does not change any term: if some $\ell_j/\alpha\ge1$
appears in a subset $S$, then the corresponding positive part is already zero;
otherwise the minimum leaves the term unchanged. Therefore
\begin{equation}
L_{\lambda}(n_1,n_2)=\frac{n_1^{d-1}}{(d-1)!}\,
\varphi_\alpha(\ell)+O(n_1^{d-2}),
\end{equation}
uniformly over all histograms $\lambda$, where
\begin{equation}
\varphi_\alpha(\ell)=
\sum_{S\subseteq\{0,\dots,d-1\}}(-1)^{|S|}
\left(1-\sum_{j\in S}\min\left\{1,\frac{\ell_j}{\alpha}\right\}\right)_+^{d-1}.
\end{equation}

\section{Proof of Proposition \ref{JSsparse}} \label{sparseproof}
Observe that
\begin{equation}
    \binom{k+d-1}{d-1}= \frac{d(d+1)\cdots(d+k-1)}{k!} = \frac{d^k}{k!} \prod_{j=0}^{k-1} \left(1 + \frac{j}{d}\right).
\end{equation}
Taking logarithms, we obtain
\begin{equation}
    \log \prod_{j=0}^{k-1} \left(1 + \frac{j}{d}\right) = \sum_{j=0}^{k-1} \log\left(1 + \frac{j}{d}\right).
\end{equation}
Using $\log(1+x) = x - \frac{x^2}{2} + O(x^3)$, valid for small $x$, we get
\begin{equation}
\begin{split}
    &\sum_{j=0}^{k-1} \log\left(1 + \frac{j}{d}\right)= \sum_{j=0}^{k-1} \left(\frac{j}{d} - \frac{j^2}{2d^2} + O\left(\frac{j^3}{d^3}\right)\right)=\frac{1}{d}\sum_{j=0}^{k-1}j-\frac{1}{2d^2}\sum_{j=0}^{k-1}j^2+ O\left(\frac{1}{d^3} \sum_{j=0}^{k-1} j^3\right)\\
    &=\frac{k(k-1)}{2d}-\frac{k(k-1)(2k-1)}{12d^2}+O\left(\frac{k^4}{d^3}\right)=\frac{k(k-1)}{2d}+O\left(\frac{k^3}{d^2}\right)+O\left(\frac{k^4}{d^3}\right)=\frac{k(k-1)}{2d}+O\left(\frac{k^3}{d^2}\right).
     \end{split}
\end{equation}
Therefore
  \begin{equation}
      \begin{split}
          &\log \prod_{j=0}^{k-1} \left(1 + \frac{j}{d}\right)= \frac{k(k-1)}{2d}+ O\left(\frac{k^3}{d^2}\right).
      \end{split}
  \end{equation}
Exponentiating, and using $ e^x = 1 + x + O(x^2)$ for small $x$, we obtain
\begin{equation}
    \begin{split}
        &\prod_{j=0}^{k-1} \left(1 + \frac{j}{d}\right)= \exp\left(\frac{k(k-1)}{2d}+O\left(\frac{k^3}{d^2}\right)\right)=\exp\left(\frac{k(k-1)}{2d}\right)\exp\left(O\left(\frac{k^3}{d^2}\right)\right)\\
        &=\left( 1 + \frac{k(k-1)}{2d}+ O\left(\frac{k^4}{d^2}\right)\right)\left(1+O\left(\frac{k^3}{d^2}\right)\right)=1 + \frac{k(k-1)}{2d}+ O\left(\frac{k^4}{d^2}\right).
   \end{split}
\end{equation}
Here we used that $\left(\frac{k(k-1)}{2d}\right)^2 = O\left(\frac{k^4}{d^2}\right)$ dominates the error term $O\left(\frac{k^3}{d^2}\right)$.
Hence
\begin{equation}
    \binom{k+d-1}{d-1}= \frac{d^k}{k!} \left( 1 + \frac{k(k-1)}{2d}+ O\left(\frac{k^4}{d^2}\right)\right).
\end{equation}
Applying this asymptotic expansion together with
\begin{equation}
    \frac{1+a}{1+b} =(1+a)\frac{1}{1+b}=(1+a)(1-b+O(b^2))= 1 + (a-b) + O(a^2 + b^2),
\end{equation}
for sufficiently small $a,b$, we obtain the expansion for $\overline M$
\begin{equation}
    \begin{split}
        & \overline{M}=\frac{M_1M_2}{M}=\frac{N!}{n_1!n_2!}\frac{1+\frac{n_1(n_1-1)+n_2(n_2-1)}{2d}}{1+\frac{N(N-1)}{2d}}+O\left(\frac{N^4}{d^2}\right)\\
        & =\binom{N}{n_1}\left(1+\frac{1}{d}\left(\frac{n_1(n_1-1)+n_2(n_2-1)}{2}-\frac{N(N-1)}{2}\right)+O(d^{-2})\right)+O\left(\frac{N^4}{d^2}\right)\\
        & = \binom{N}{n_1}\left(1-\frac{n_1n_2}{d}+O\left(\frac{N^4}{d^2}\right)\right).
    \end{split}
\end{equation}
The same stratification by collision structure as in Proposition~\ref{fixedN}
remains valid uniformly when $N=o(\sqrt{d})$. For collision-free histograms
$S_0$, each outcome appears at most once, and the decomposition of the total
histogram into two blocks reduces to choosing which $n_1$ outcomes belong to the
first block. Then
\begin{equation}
L_{S_0}(n_1,n_2)=\binom{N}{n_1},
\end{equation}
Hence, using $\frac{1}{1-x} = 1 + x + O(x^2)$ with $x = \frac{n_1 n_2}{d}$,
\begin{equation} \label{zerocollisions}
    \frac{L_{S_0}(n_1,n_2)}{\overline{M}}=\frac{1}{1-\frac{n_1n_2}{d}+O\left(\frac{N^4}{d^2}\right)}=1+\frac{n_1n_2}{d}+O\left(\frac{N^4}{d^2}\right)
\end{equation}
For histograms in $S_1$ with a single double occupation, the combinatorics is
slightly modified, since the doubleton may be assigned entirely to one block or
split across the two blocks.
\begin{equation}
L_{S_1}(n_1,n_2)=\binom{N-2}{n_1}+\binom{N-2}{n_1-1}+\binom{N-2}{n_1-2}
=\binom{N}{n_1}-\binom{N-2}{n_1-1}.
\end{equation}
The drop is
\begin{equation}
\Delta_{S_1}=\binom{N}{n_1}-L_{S_1}(n_1,n_2) = \binom{N-2}{n_1-1}=\binom{N}{n_1}\frac{n_1 n_2}{N(N-1)},
\end{equation}
Hence
\begin{equation}
\frac{\Delta_{S_1}}{\overline{M}}= \frac{\binom{N}{n_1}\frac{n_1 n_2}{N(N-1)}}{\binom{N}{n_1}\left(1-\frac{n_1n_2}{d}\right)}+O\left(\frac{N^4}{d^2}\right)=\frac{n_1 n_2}{N(N-1)}\left(1+\frac{n_1n_2}{d}\right)+O\left(\frac{N^4}{d^2}\right)\end{equation}
Stratum sizes:
\begin{equation}
|S_0|=\binom{d}{N},\qquad
|S_1|=d\binom{d-1}{N-2}=|S_0|\frac{N(N-1)}{d-N+1}.
\end{equation}
Moreover, the contribution of histograms with at least two collisions is
negligible in the sparse regime. Indeed, such configurations require either two
distinct double occupations or one occupation number at least three. Their
respective cardinalities satisfy
\begin{equation}
|S_{2}|=\binom{d}{2}\binom{d-2}{N-4},\qquad|S_3|=d\binom{d-1}{N-3},
\end{equation}
and therefore
\begin{equation}
\mathbb P(S_{\ge2\mathrm{coll.}})=O\!\left(\frac{N^4}{d^2}\right),
\end{equation}
uniformly in the regime \(N=o(\sqrt d)\). Putting this into the
uniform-expectation identity gives
\begin{equation}
    \begin{split}
        & \mathrm{TVD}\left(P^{(N)},P^{(n_1,n_2)}\right)= \frac12 \mathbb{E}_{\lambda}\Bigg[ \Bigg|1-\frac{L_{\lambda}(n_1,n_2)}{\overline{M}}\Bigg|\Bigg]= \frac12 \Bigg(\mathbb{P}(S_0) \mathbb{E}\Bigg[ \Bigg|1-\frac{L_{S_0}(n_1,n_2)}{\overline{M}}\Bigg| \Big| S_0\Bigg]+\\ 
        & + \mathbb{P}(S_1)\mathbb{E}\Bigg[ \Bigg|1-\frac{L_{S_1}(n_1,n_2)}{\overline{M}}\Bigg| \Big| S_1\Bigg]
         + \mathbb{P}(S_{\ge2\mathrm{coll.}})\mathbb{E}\Bigg[ \Bigg|1-\frac{L_{\ge2\mathrm{coll.}}(n_1,n_2)}{\overline{M}}\Bigg| \Big|S_{\ge2\mathrm{coll.}}\Bigg]\Bigg)\\
        & =\frac12\mathbb{P}(S_0)\mathbb{E}\Bigg[ \Bigg|1-\frac{L_{S_0}(n_1,n_2)}{\overline{M}}\Bigg|\Big| S_0\Bigg] + \frac12 \mathbb{P}(S_1)\mathbb{E}\Bigg[ \Bigg|1-\frac{L_{S_1}(n_1,n_2)}{\overline{M}}\Bigg|\Big| S_1\Bigg] + O\left( \frac{N^4}{d^2} \right)\\
    \end{split}
\end{equation}
Since the integrand is constant on the strata $S_0$ and $S_1$, the corresponding
conditional expectations reduce to deterministic values. 
\begin{equation}
    \mathrm{TVD}\left(P^{(N)},P^{(n_1,n_2)}\right)=\frac12\mathbb{P}(S_0)\left|1-\frac{L_{S_0}(n_1,n_2)}{\overline{M}}\right|+ \frac12 \mathbb{P}(S_1) \left|1-\frac{L_{S_1}(n_1,n_2)}{\overline{M}}\right| + O\left( \frac{N^4}{d^2} \right)
\end{equation}
Since histograms are uniformly distributed, the probability of observing zero or
one collision is given by
\begin{equation}
    \begin{split}
        & \mathbb{P}(S_0)=\frac{|S_0|}{M}=\frac{\binom{d}{N}}{M}=\frac{d^N}{N!}\left(1-\frac{N(N-1)}{2d}+O\left( \frac{N^4}{d^2}\right)\right)\cdot\frac{N!}{d^N}\left(1+\frac{N(N-1)}{2d}+O\left( \frac{N^4}{d^2}\right)\right)\\
        &=\frac{1-\frac{N(N-1)}{2d}}{1+\frac{N(N-1)}{2d}}+O\left( \frac{N^4}{d^2}\right) =1-\frac{N(N-1)}{d}+O\left( \frac{N^4}{d^2}\right)
    \end{split}
\end{equation}
and
\begin{equation}
    \begin{split}
        &\mathbb P(S_1)=\frac{|S_1|}{M}= \frac{|S_0|}{M}\frac{N(N-1)}{d-N+1}=\left(1-\frac{N(N-1)}{d}+O\left( \frac{N^4}{d^2} \right)\right)\frac{N(N-1)}{d-N+1}\\
        &=\left(1-\frac{N(N-1)}{d}+O\left( \frac{N^4}{d^2} \right)\right) \frac{N(N-1)}{d}\left(\frac{1}{1-\frac{N-1}{d}}\right)\\
        &= \frac{N(N-1)}{d}\left(1-\frac{N(N-1)}{d}+O\left( \frac{N^4}{d^2} \right)\right) \left(1+\frac{N-1}{d}+O\left(\frac{N^2}{d^2}\right)\right) \\
        &=\frac{N(N-1)}{d}\left(1-\frac{N(N-1)}{d}+O\left( \frac{N^4}{d^2} \right)\right)\left(1+O\left(\frac{N}{d}\right)\right)\\
        &=\frac{N(N-1)}{d}+O\left( \frac{N^3}{d^2} \right)+O\left( \frac{N^4}{d^2} \right)=\frac{N(N-1)}{d} +O\left( \frac{N^4}{d^2} \right)
    \end{split}
\end{equation}
where $M=\binom{N+d-1}{d-1}$ is the total number of histograms. Then
\begin{equation}
    \begin{split}
        &\mathrm{TVD}\left(P^{(N)},P^{(n_1,n_2)}\right)=\frac{1}{2}\frac{|S_0|}{M}\frac{n_1n_2}{d}+ \frac12 \frac{|S_1|}{M} \frac{\Delta_{S_1}}{\overline{M}} + O\left( \frac{N^4}{d^2} \right) \\
        & = \frac12\left(1-\frac{N(N-1)}{d}\right) \frac{n_1n_2}{d}+\frac12\frac{N(N-1)}{d}\frac{n_1 n_2}{N(N-1)}\left(1+\frac{n_1n_2}{d}\right) + O\left( \frac{N^4}{d^2} \right)\\
        &= \frac{n_1n_2}{2d}-\frac{n_1n_2N(N-1)}{2d^2}+\frac{n_1n_2}{2d}+\frac{n_1^2n_2^2}{2d^2}+O\!\left(\frac{N^4}{d^2}\right)\\
        & =\frac{n_1n_2}{2d}+\frac{n_1n_2}{2d}+O\!\left(\frac{N^4}{d^2}\right)= \frac{n_1n_2}{d}+O\!\left(\frac{N^4}{d^2}\right).
    \end{split}
\end{equation}
\section{Critical scaling regime}
\subsection{Proof of Proposition~\ref{Poisson}} \label{lkolizji}
The robust route is to work with the number of doubletons
\begin{equation}
    D_2(\lambda):=\#\{i:\lambda_i=2\},
\end{equation}
and to separate higher occupancies through the event
\begin{equation}
A_d:=\left\{\max_i\lambda_i\le 2\right\}.
\end{equation}
On $A_d$, we have $C(\lambda)=D_2(\lambda)$.

\paragraph{Step 1: higher occupancies are rare.}
Under the uniform composition law, $\lambda$ has the Dirichlet-multinomial
representation
\begin{equation}
W\sim \mathrm{Dirichlet}(1,\dots,1),\qquad (Y_1,\dots,Y_N)\mid W \text{ i.i.d. with law }W.
\end{equation}
Hence
\begin{equation}
\mathbb P(Y_1=Y_2=Y_3)=\mathbb E\!\left[\sum_i W_i^3\right]=\frac{6}{(d+1)(d+2)}.
\end{equation}
Therefore
\begin{equation}
\mathbb P(A_d^c)\le \mathbb E\!\left[\sum_i\binom{\lambda_i}{3}\right]=\binom{N}{3}\frac{6}{(d+1)(d+2)}.
\end{equation}
For $N/\sqrt d\to c$, this gives
\begin{equation}
\mathbb P(A_d^c)=O\left(d^{-\frac12}\right)\to 0.
\end{equation}

\paragraph{Step 2: Poisson limit for $D_2(\lambda)$.}
For fixed $r\ge 1$, the falling factorial moment is
\begin{equation}
\mathbb E[(D_2)_r]=(d)_r\,\frac{\binom{N-2r+d-r-1}{d-r-1}}{\binom{N+d-1}{d-1}},
\end{equation}
where $(d)_j = \prod_{l=0}^{j-1}(d-l)$ is the Pochhammer symbol. This identity is
exact: choose $r$ ordered bins to carry occupancy $2$, then distribute the
remaining $N-2r$ balls among the other $d-r$ bins. With $N/\sqrt d\to c$ and fixed
$r$,
\begin{equation}
\mathbb E[(D_2)_r]\longrightarrow c^{2r}.
\end{equation}
Thus $D_2(\lambda)\Rightarrow \mathrm{Poisson}(c^2)$ by the method of factorial
moments.

\paragraph{Step 3: transfer from $D_2$ to $C$.}
Since $C(\lambda)=D_2(\lambda)$ on $A_d$ and $\mathbb P(A_d^c)\to 0$,
\begin{equation}
\mathbb P\!\big(C(\lambda)\neq D_2(\lambda)\big)\to 0.
\end{equation}
Combining this with $D_2(\lambda)\Rightarrow \mathrm{Poisson}(c^2)$ yields
\begin{equation}
C(\lambda)\Rightarrow \mathrm{Poisson}(c^2),
\end{equation}
which proves Proposition~\ref{Poisson}.

\subsection{Proof of Proposition~\ref{grLn}} \label{L/binom} Assume $\lambda$
has exactly $k$ bins of multiplicity $2$ and no higher multiplicities. Then
\begin{equation}
L_{\lambda}(n_1,n_2)=[x^{n_1}](1+x)^{N-2k}(1+x+x^2)^k.
\end{equation}
Using $1+x+x^2=(1+x)^2-x$,
\begin{equation}
\begin{aligned}
L_{\lambda}(n_1,n_2)
&=[x^{n_1}](1+x)^{N-2k}\sum_{j=0}^k(-1)^j\binom{k}{j}x^j(1+x)^{2k-2j}=\sum_{j=0}^k(-1)^j\binom{k}{j}\binom{N-2j}{n_1-j}.
\end{aligned}
\end{equation}
Hence
\begin{equation}
\begin{aligned}
\frac{L_{\lambda}(n_1,n_2)}{\binom{N}{n_1}}& =\sum_{j=0}^k(-1)^j\binom{k}{j}\frac{\binom{N-2j}{n_1-j}}{\binom{N}{n_1}}=\sum_{j=0}^k(-1)^j\binom{k}{j}\frac{n_1\cdots(n_1-j+1)n_2\cdots(n_2-j+1)}{N\cdots(N-2j+1)}\\&=\sum_{j=0}^k(-1)^j\binom{k}{j}\frac{(n_1)_j(n_2)_j}{(N)_{2j}},
\end{aligned}
\end{equation}
where $(m)_j = \prod_{l=0}^{j-1}(m-l)$. For each fixed $j$, if $n_1/N\to\alpha$ and $n_2/N\to1-\alpha$, then
\begin{equation}
\frac{(n_1)_j(n_2)_j}{(N)_{2j}}\longrightarrow \alpha^j(1-\alpha)^j.
\end{equation}
Therefore, for fixed $k$,
\begin{equation}
\frac{L_{\lambda}(n_1,n_2)}{\binom{N}{n_1}}\longrightarrow\sum_{j=0}^k(-1)^j\binom{k}{j}\big(\alpha(1-\alpha)\big)^j=\big(1-\alpha(1-\alpha)\big)^k.
\end{equation}
This proves Proposition~\ref{grLn}.

\subsection{Proof of normalization Lemma~\ref{normalizationlem}} \label{normalization}
Write
\begin{equation}
\overline{M}=\frac{M_1M_2}{M},
\qquad
M=\binom{N+d-1}{N},
\quad
M_1=\binom{n_1+d-1}{n_1}, 
\quad 
M_2=\binom{n_2+d-1}{n_2}.
\end{equation}
For $k=O(\sqrt d)$,
\begin{equation}
\binom{d+k-1}{k}=\frac{d^k}{k!}\exp\!\left(\frac{k(k-1)}{2d}+O\!\left(\frac{k^3}{d^2}\right)\right).
\end{equation}
Applying this to $k=N,n_1,n_2$ gives
\begin{equation}
M=\frac{d^N}{N!}\exp\!\left(\frac{N(N-1)}{2d}+O\!\left(\frac{N^3}{d^2}\right)\right),
\end{equation}
\begin{equation}
M_i=\frac{d^{n_i}}{n_i!}\exp\!\left(\frac{n_i(n_i-1)}{2d}+O\!\left(\frac{N^3}{d^2}\right)\right),\quad i=1,2.
\end{equation}
Hence
\begin{equation}
\overline{M}=\binom{N}{n_1}\exp\!\left(-\frac{n_1n_2}{d}+O\!\left(\frac{N^3}{d^2}\right)\right),
\end{equation}
so
\begin{equation}
\frac{\binom{N}{n_1}}{\overline{M}}=\exp\!\left(\frac{n_1n_2}{d}+O\!\left(\frac{N^3}{d^2}\right)\right).
\end{equation}
In the critical scaling $N/\sqrt d\to c$ with $n_1/N\to\alpha$,
\begin{equation}
\frac{\binom{N}{n_1}}{\overline{M}} \longrightarrow e^{\alpha(1-\alpha)c^2}.
\end{equation}
This proves Lemma~\ref{normalizationlem}.

\subsection{Proof of Theorem~\ref{twCritScal}} \label{ProofCritical}
Recall
\begin{equation}
\mathrm{TVD}\left(P^{(N)},P^{(n_1,n_2)}\right) =\frac12\,\mathbb E_{\lambda\sim P^{(N)}}\!\left[\left|1-\frac{L_{\lambda}(n_1,n_2)}{\overline{M}}\right|\right].
\end{equation}
On the event $A_d$, a histogram with $D_2(\lambda)=k$ has only singletons and
doubletons. For each fixed $k$, Proposition~\ref{grLn} gives
\begin{equation}
\frac{L_{\lambda}(n_1,n_2)}{\binom{N}{n_1}} \longrightarrow (1-\alpha(1-\alpha))^k.
\end{equation}
Fix $\varepsilon>0$. Choose $R$ large enough so that
\begin{equation}
    \mathbb P(D_2(\lambda)>R)<\varepsilon
\end{equation}
for all sufficiently large $d$, using the convergence $D_2(\lambda)\Rightarrow
\mathrm{Poisson}(c^2)$. Since $R$ is fixed, Proposition~\ref{grLn} applies to
each $k\le R$, and taking the maximum over finitely many values of $k$ yields
uniform convergence on $\{0,\dots,R\}$. Hence
\begin{equation}
  \sup_{\lambda:\, D_2(\lambda)\le R}\left|\frac{L_\lambda(n_1,n_2)}{\binom{N}{n_1}}-(1-\alpha(1-\alpha))^{D_2(\lambda)}\right|\longrightarrow 0,  
\end{equation}
where the supremum is taken over histograms satisfying $D_2(\lambda)=k$.
Therefore, uniformly on the event $A_d\cap\{D_2(\lambda)\le R\}$,
\begin{equation}
    \left|\frac{L_\lambda(n_1,n_2)}{\binom{N}{n_1}}-(1-\alpha(1-\alpha))^{D_2(\lambda)}\right|  \longrightarrow 0 .
\end{equation}
Since $\mathbb P(A_d^c)\to0$ and $\mathbb P(D_2(\lambda)>R)<\varepsilon$, it follows that
\begin{equation}
    \frac{L_\lambda(n_1,n_2)}{\binom{N}{n_1}}-(1-\alpha(1-\alpha))^{D_2(\lambda)}\xrightarrow[]{p} 0.
\end{equation}
Combining this with Lemma~\ref{normalizationlem},
\begin{equation}
\frac{L_\lambda(n_1,n_2)}{\overline M}-e^{\alpha(1-\alpha)c^2}(1-\alpha(1-\alpha))^{D_2(\lambda)}\xrightarrow[]{p} 0.
\end{equation}
Since $D_2(\lambda)\Rightarrow K$ with $K\sim\mathrm{Poisson}(c^2)$,
\begin{equation}
e^{\alpha(1-\alpha)c^2}(1-\alpha(1-\alpha))^{D_2(\lambda)}\xrightarrow[]{d} e^{\alpha(1-\alpha)c^2}(1-\alpha(1-\alpha))^K.
\end{equation}
Since the difference between the two quantities converges to zero in
probability, we have
\begin{equation}
\frac{L_\lambda(n_1,n_2)}{\overline M}\xrightarrow[]{d} e^{\alpha(1-\alpha)c^2}(1-\alpha(1-\alpha))^K.
\end{equation}
By Lemma~\ref{criticaldomination}, the corresponding absolute deviations are
uniformly integrable. Therefore convergence in distribution upgrades to
convergence of the expectation:
\begin{equation}
\frac12\,\mathbb E\!\left[\left|1-\frac{L_{\lambda}(n_1,n_2)}{\overline{M}}\right|\right]\longrightarrow\frac12\,\mathbb E\!\left[\left|1-e^{\alpha(1-\alpha) c^2}(1-\alpha(1-\alpha))^K\right|\right].
\end{equation}
This is exactly Theorem~\ref{twCritScal}.

\section{Block-resolved asymptotic}
\subsection{Proof of Proposition~\ref{SparseResolved}} \label{ProofSR}

We follow the collision decomposition used in Proposition~\ref{fixedN} for the
aggregate statistic, now applied to the pair histogram $(a,b)$. Recall that the
likelihood ratio is
\begin{equation}
\frac{P_{\mathrm{sh}}(a,b)}{P_{\mathrm{ind}}(a,b)}=\frac{M_{1}M_{2}}{M\binom{N}{n_1}}\prod_{j=0}^{d-1}\binom{a_j+b_j}{a_j}.
\end{equation}
Hence, by \eqref{pairtvdlr},
\begin{equation}
\mathrm{TVD}\left(P_{\mathrm{sh}},P_{\mathrm{ind}}\right)=\frac12\mathbb E_{\mathrm{ind}}\left[\left|\frac{P_{\mathrm{sh}}(a,b)}{P_{\mathrm{ind}}(a,b)}-1\right|\right].
\end{equation}
We decompose the configuration space according to the collision structure.
Define
\begin{align}
S_0
&=\left\{(a,b):a_j+b_j\in\{0,1\}\text{ for all } j\right\},\\[1ex]
S_{\times}
&=
\left\{(a,b):
\begin{array}{l}
\exists!_j \text{ such that } (a_j,b_j)=(1,1),\\
\text{and all remaining coordinates belong to }\\
\{(1,0),(0,1),(0,0)\}
\end{array}
\right\},
\\[1ex]
S_{\mathrm{int}}
&=
\left\{
(a,b):
\begin{array}{l}
\exists!_j \text{ such that }
(a_j,b_j)\in\{(2,0),(0,2)\},\\
\text{and all remaining coordinates belong to }\\
\{(1,0),(0,1),(0,0)\}
\end{array}
\right\}.
\end{align}
and let $S_{\ge2}$ denote the remaining configurations, that is configurations containing either at least two collision events or a non-simple collision pattern (such as occupancies of size at least three). Accordingly, we write
\begin{equation}
\mathrm{TVD}\left(P_{\mathrm{sh}},P_{\mathrm{ind}}\right)=\mathrm{TVD}_{S_0}+\mathrm{TVD}_{S_\times}+\mathrm{TVD}_{S_{\mathrm{int}}}+\mathrm{TVD}_{S_{\ge2}},
\end{equation}
where for any sector \(S\),
\begin{equation}
\mathrm{TVD}_S:=\frac12\sum_{(a,b)\in S}P_{\mathrm{ind}}(a,b)\left|\frac{P_{\mathrm{sh}}(a,b)}{P_{\mathrm{ind}}(a,b)}-1\right|.
\end{equation}
We may therefore decompose the expectation according to the collision sectors as follows:
\begin{equation}
\begin{split}
\mathrm{TVD}\left(P_{\mathrm{sh}},P_{\mathrm{ind}}\right)
& = \frac12\sum_{S\in\{S_0,S_\times,S_{\mathrm{int}},S_{\ge2}\}}
\mathbb P_{\mathrm{ind}}(S)  \mathbb E_{\mathrm{ind}}\left[\left|\frac{P_{\mathrm{sh}}(a,b)}{P_{\mathrm{ind}}(a,b)}-1\right| \,\middle|\,
(a,b)\in S \right].
\end{split}
\end{equation}
We first estimate the contribution of $S_{\ge2}$. Exactly as in
Section~\ref{collision}, every additional collision reduces by one the number of
freely chosen occupied bins. Therefore
\begin{equation}
|S_{\ge2}|=O(d^{N-2}).
\end{equation}
Moreover, uniformly in $(a,b)$,
\begin{equation}
P_{\mathrm{sh}}(a,b)=O(d^{-N}), \qquad P_{\mathrm{ind}}(a,b)=O(d^{-N}).
\end{equation}
Hence
\begin{equation}
\mathrm{TVD}_{S_{\ge2}}=O(d^{-2}).
\end{equation}
We now analyze configurations containing at most one collision. For
$(a,b)\in S_0$, since $(a_j,b_j)\in\{(1,0),(0,1),(0,0)\}$, all local factors
satisfy
\begin{equation}
\binom{a_j+b_j}{a_j}=1.
\end{equation}
Therefore
\begin{equation}
\frac{P_{\mathrm{sh}}(a,b)}{P_{\mathrm{ind}}(a,b)}=\frac{M_{1}M_{2}}{M\binom{N}{n_1}}.
\end{equation}
Using the asymptotic expansion already employed in Proposition~\ref{fixedN},
\begin{equation}
\begin{split}
    & \binom{k+d-1}{d-1}=\frac{d^{k}}{k!}\left(1+\frac{k(k-1)}{2d}+O(d^{-2})\right).\\
\end{split}
\end{equation}
we obtain
\begin{equation}
\frac{P_{\mathrm{sh}}(a,b)}{P_{\mathrm{ind}}(a,b)}=1-\frac{n_1n_2}{d}+O(d^{-2}).
\end{equation}
Consequently
\begin{equation}
\left|\frac{P_{\mathrm{sh}}(a,b)}{P_{\mathrm{ind}}(a,b)}-1\right|=\frac{n_1n_2}{d}+O(d^{-2}).
\end{equation}
The complement of \(S_0\) consists of configurations containing at least one
collision. Under the independent-block model, there are $\binom{N}{2}$ pairs of
samples, and each pair-collision event has probability of order $d^{-1}$. By the union bound,
the probability of observing at least one collision is therefore
\begin{equation}
O\!\left(\frac{N^2}{d}\right).
\end{equation}
For fixed \(N\), this reduces to \(O(d^{-1})\), hence
\begin{equation}
\mathbb P_{\mathrm{ind}}(S_0)=1-O(d^{-1}).
\end{equation}
Therefore
\begin{equation}
\mathrm{TVD}_{S_0}=\frac12\left(\frac{n_1n_2}{d}+O(d^{-2})\right)\left(1-O(d^{-1})\right) =\frac{n_1n_2}{2d}+O(d^{-2}).
\end{equation}
Now consider $(a,b)\in S_{\mathrm{int}}$. Then there exists a unique coordinate
such that
\begin{equation}
(a_j,b_j)=(2,0)\qquad\text{or}\qquad(a_j,b_j)=(0,2).
\end{equation}
For this coordinate,
\begin{equation}
\binom{2}{2}=\binom{2}{0}=1,
\end{equation}
while all remaining factors are also equal to \(1\). Consequently,
\begin{equation}
\frac{P_{\mathrm{sh}}(a,b)}{P_{\mathrm{ind}}(a,b)}=1-\frac{n_1n_2}{d}+O(d^{-2}),
\end{equation}
exactly as in the collision-free case. Then
\begin{equation}
    \left|\frac{P_{\mathrm{sh}}(a,b)}{P_{\mathrm{ind}}(a,b)}-1\right|=\frac{n_1n_2}{d}+O(d^{-2}).
\end{equation}
This event requires an internal collision within one of the two blocks. In the
first block there are $\binom{n_1}{2}$ pairs of samples, while in the second
block there are $\binom{n_2}{2}$ such pairs. For two samples drawn from the same Haar-distributed probability vector,
the collision probability equals
\begin{equation}
\sum_{j=0}^{d-1}\mathbb E[W_j^2]=\frac{2}{d+1}.
\end{equation}
Therefore
\begin{equation}
\mathbb P_{\mathrm{ind}}(S_{\mathrm{int}})=O\!\left(\frac{n_1^2+n_2^2}{d}\right).
\end{equation}
For fixed \(N=n_1+n_2\), this is \(O(d^{-1})\). Consequently,
\begin{equation}
\mathrm{TVD}_{S_{\mathrm{int}}}=\frac12 \left(\frac{n_1n_2}{d}+O(d^{-2})\right)O(d^{-1}) =O(d^{-2}).
\end{equation}
Thus internal collisions do not contribute at leading order. Finally, consider
$(a,b)\in S_\times$. Then, for the unique collision coordinate,
\begin{equation}
(a_j,b_j)=(1,1),
\end{equation}
and therefore
\begin{equation}
\binom{a_j+b_j}{a_j}=\binom21=2.
\end{equation}
All remaining factors equal $1$, hence
\begin{equation}
\frac{P_{\mathrm{sh}}(a,b)}{P_{\mathrm{ind}}(a,b)}=2\left(1-\frac{n_1n_2}{d}+O(d^{-2})\right)=2-\frac{2n_1n_2}{d}+O(d^{-2}).
\end{equation}
In this case,
\begin{equation}
\left|\frac{P_{\mathrm{sh}}(a,b)}{P_{\mathrm{ind}}(a,b)}-1\right|=\left|1-\frac{2n_1n_2}{d}+O(d^{-2}) \right|=1+O(d^{-1}).
\end{equation}
A configuration in \(S_\times\) contains exactly one cross-block collision.
There are $n_1n_2$ pairs consisting of one sample from the first block and one
sample from the second block. For each such pair, the probability that both
samples occupy the same bin is $\frac{1}{d}$. Hence the expected number of
cross-block collisions equals $\frac{n_1n_2}{d}$. The probability of observing
two or more cross-block collisions is \(O(d^{-2})\), since this requires at least two distinct pairs of samples to collide simultaneously. Indeed, each additional
collision contributes another factor \(d^{-1}\). Therefore
\begin{equation}
\mathbb P_{\mathrm{ind}}(S_\times)=\frac{n_1n_2}{d}+O(d^{-2}).
\end{equation}
Hence
\begin{equation}
\mathrm{TVD}_{S_\times}=\frac12\left(\frac{n_1n_2}{d}+O(d^{-2})\right)\left(1+O(d^{-1})\right)=\frac{n_1n_2}{2d}+O(d^{-2}).
\end{equation}
Collecting all contributions gives
\begin{equation} 
\mathrm{TVD}\left(P_{\mathrm{sh}},P_{\mathrm{ind}}\right)=\frac{n_1n_2}{d}+O(d^{-2}). 
\end{equation}
The same argument extends to the sparse regime \(N=o(\sqrt d)\). Indeed, the
probability of observing at least two collisions is now
\begin{equation}
O\!\left(\frac{N^4}{d^2}\right),
\end{equation}
since specifying two collisions requires choosing two pairs of samples, which
can be done in \(O(N^4)\) ways, while each collision contributes a factor
\(d^{-1}\). Moreover,
\begin{equation}
\mathbb P_{\mathrm{ind}}(S_\times)=\frac{n_1n_2}{d}+O\!\left(\frac{N^4}{d^2}\right),
\end{equation}
while all remaining sectors contribute only
\begin{equation}
O\!\left(\frac{N^4}{d^2}\right).
\end{equation}
Consequently,
\begin{equation}
\mathrm{TVD}\left(P_{\mathrm{sh}},P_{\mathrm{ind}}\right)=\frac{n_1n_2}{d}+O\!\left(\frac{N^4}{d^2}\right).
\end{equation}

\subsection{Proof of Theorem~\ref{CriticalResolved}} \label{ProofCritResolved}

We again start from the likelihood ratio
\begin{equation}
\frac{P_{\mathrm{sh}}(a,b)}{P_{\mathrm{ind}}(a,b)}=\frac{M_{1}M_{2}}{M\binom N{n_1}}\prod_{j=0}^{d-1}\binom{a_j+b_j}{a_j}.
\end{equation}
Recall that in the critical scaling regime
\begin{equation}
\frac{N}{\sqrt d}\to c, \qquad \frac{n_1}{N}\to\alpha, \qquad \frac{n_2}{N}\to 1-\alpha.
\end{equation}
We define the number of cross-block collisions by
\begin{equation}
C(a,b):=\#\{j:(a_j,b_j)=(1,1)\}.
\end{equation}
Thus $C(a,b)$ counts the number of bins simultaneously occupied by exactly one
sample from the first block and exactly one sample from the second block. Since
\begin{equation}
n_1\sim \alpha N, \qquad n_2\sim (1-\alpha)N, \qquad \frac{N}{\sqrt d}\to c,
\end{equation}
we have
\begin{equation}
\frac{n_1n_2}{d}\longrightarrow \alpha(1-\alpha)c^2=: \beta.
\end{equation}
We now compute the factorial moments of $C(a,b)$ under the independent-block
model. For fixed $r\ge1$ and all sufficiently large $d$,
\begin{equation}
\begin{split}
\mathbb E_{\mathrm{ind}}\big[(C)_r\big]
&=(d)_r
\frac{\binom{n_1-r+d-r-1}{d-r-1}}{\binom{n_1+d-1}{d-1}}
\frac{\binom{n_2-r+d-r-1}{d-r-1}}{\binom{n_2+d-1}{d-1}}.
\end{split}
\end{equation}
Indeed, choose the ordered list of $r$ distinct bins with $(a_j,b_j)=(1,1)$ and
then distribute the remaining $n_1-r$ and $n_2-r$ counts over the remaining
$d-r$ bins. Since $n_1,n_2=O(\sqrt d)$, the two ratios are asymptotic to
$(n_1/d)^r$ and $(n_2/d)^r$, respectively. Hence
\begin{equation}
\mathbb E_{\mathrm{ind}}\big[(C)_r\big]\longrightarrow \beta^r.
\end{equation}
The method of factorial moments gives
\begin{equation}
C(a,b)\xrightarrow[]{d}\mathrm{Poisson}(\beta).
\end{equation}
It remains to check that non-simple collision configurations do not affect the
likelihood ratio. Let
\begin{equation}
B_d:=\{a_j+b_j\le2\ \text{for all }j\}.
\end{equation}
By a union bound over triples of samples, using the Dirichlet moment estimates
$\mathbb E\sum_j W_j^3=O(d^{-2})$ and
$\mathbb E\sum_j W_j^2Z_j=O(d^{-2})$ for independent
$W,Z\sim\mathrm{Dirichlet}(1,\dots,1)$, we get
\begin{equation}
\mathbb P_{\mathrm{ind}}(B_d^c)=O\!\left(\frac{N^3}{d^2}\right)=o(1).
\end{equation}
We now analyze the product
\begin{equation}
\prod_{j=0}^{d-1}\binom{a_j+b_j}{a_j}.
\end{equation}
If a bin contains no collision, then
\begin{equation}
(a_j,b_j)\in\{(1,0),(0,1),(0,0)\},
\end{equation}
and therefore
\begin{equation}
\binom{a_j+b_j}{a_j}=1.
\end{equation}
If a bin contains an internal collision, i.e.
\begin{equation}
(a_j,b_j)=(2,0) \qquad\text{or}\qquad (a_j,b_j)=(0,2),
\end{equation}
then again
\begin{equation}
\binom22=\binom20=1.
\end{equation}
Hence internal collisions do not affect the likelihood ratio. The only
nontrivial contribution comes from cross-block collisions:
\begin{equation}
(a_j,b_j)=(1,1),
\end{equation}
for which
\begin{equation}
\binom{a_j+b_j}{a_j}=\binom21=2.
\end{equation}
Consequently, every cross-block collision contributes a factor \(2\), and
therefore on $B_d$
\begin{equation}
\prod_{j=0}^{d-1}\binom{a_j+b_j}{a_j}=2^{C(a,b)}.
\end{equation}
Recall
\begin{equation}
\overline M=\frac{M_{1}M_{2}}{M}.
\end{equation}
By Lemma~\ref{normalizationlem},
\begin{equation}
\frac{\binom N{n_1}}{\overline M}\longrightarrow e^\beta.
\end{equation}
Equivalently,
\begin{equation}
\frac{\overline M}{\binom N{n_1}}=\frac{M_{1}M_{2}}{M\binom N{n_1}}\longrightarrow e^{-\beta}.
\end{equation}
Combining this with the previous step yields
\begin{equation}
\frac{P_{\mathrm{sh}}(a,b)}{P_{\mathrm{ind}}(a,b)}=\frac{M_{1}M_{2}}{M\binom N{n_1}}\prod_{j=0}^{d-1}\binom{a_j+b_j}{a_j}\Longrightarrow e^{-\beta}2^C,
\end{equation}
where
\begin{equation}
C\sim \mathrm{Poisson}(\beta).
\end{equation}
Let
\begin{equation}
R_d(a,b):=\frac{P_{\mathrm{sh}}(a,b)}{P_{\mathrm{ind}}(a,b)}.
\end{equation}
The variables $R_d$ are nonnegative and satisfy
\(\mathbb E_{\mathrm{ind}}[R_d]=1\) exactly. The limiting variable also has mean
one, since
\begin{equation}
\mathbb E\!\left[e^{-\beta}2^C\right]
=e^{-\beta}\exp(\beta(2-1))=1.
\end{equation}
Thus convergence in distribution together with convergence of the first moments
implies uniform integrability of $\{R_d\}$, and hence of $\{|R_d-1|\}$. Using the
likelihood-ratio representation
\begin{equation}
\mathrm{TVD}(P_{\mathrm{sh}},P_{\mathrm{ind}})=\frac12\mathbb E_{\mathrm{ind}}\left[\left|R_d(a,b)-1\right|\right],
\end{equation}
we conclude that
\begin{equation}
\mathrm{TVD}(P_{\mathrm{sh}},P_{\mathrm{ind}})\longrightarrow\frac12\mathbb E\left[\left|e^{ -\beta}2^C-1\right| \right].
\end{equation}
This proves the claim.

\subsection{Proof of~\ref{ResolvedFixedd}} \label{ProofResolvedFixedd}
Fix \(d\ge2\) and let \(N\to\infty\) with $\frac{n_1}{N}\to\alpha\in(0,1)$. Under
the shared-unitary model, conditioned on
\begin{equation}
W\sim\mathrm{Unif}(\Delta_{d-1}),
\end{equation}
all samples in both blocks are i.i.d. with distribution \(W\). Therefore, by the
law of large numbers,
\begin{equation}
\left(\frac{a}{n_1},\frac{b}{n_2}\right)\xrightarrow[]{p}(W,W).
\end{equation}
Under the independent-block model, the two blocks are generated independently
from independent Haar-random probability vectors \(W_1,W_2\). Hence
\begin{equation}
\left(\frac{a}{n_1},\frac{b}{n_2}\right)\xrightarrow[]{p}(W_1,W_2),
\end{equation}
where \(W_1,W_2\) are independent and uniformly distributed on \(\Delta_{d-1}\).
Thus the shared-model limit is supported on the diagonal
\begin{equation}
\mathcal D=\{(w,w):w\in\Delta_{d-1}\},
\end{equation}
whereas the independent-model limit is absolutely continuous with respect to
Lebesgue measure on
\begin{equation}
\Delta_{d-1}\times\Delta_{d-1}.
\end{equation}
Since the diagonal \(\mathcal D\) has codimension \(d-1\), it has Lebesgue
measure zero. Therefore
\begin{equation}
\mathbb P\big((W_1,W_2)\in\mathcal D\big)=0.
\end{equation}
Hence the two limiting measures are mutually singular. To turn this into a total
variation statement, for $\delta>0$ let
\begin{equation}
\mathcal D_\delta:=\{(x,y)\in\Delta_{d-1}\times\Delta_{d-1}:\|x-y\|_1\le\delta\}.
\end{equation}
Under the shared model,
\begin{equation}
\mathbb P_{\mathrm{sh}}\left(\left(\frac a{n_1},\frac b{n_2}\right)\in\mathcal D_\delta\right)\longrightarrow 1
\end{equation}
for every fixed $\delta>0$. Under the independent model, weak convergence gives
\begin{equation}
\limsup_{N\to\infty}
\mathbb P_{\mathrm{ind}}\left(\left(\frac a{n_1},\frac b{n_2}\right)\in\mathcal D_\delta\right)
\le
\mathbb P\big(\|W_1-W_2\|_1\le\delta\big).
\end{equation}
Because $W_1$ and $W_2$ have a joint density and $d\ge2$,
$\mathbb P(\|W_1-W_2\|_1\le\delta)\to0$ as $\delta\downarrow0$. Therefore, for
arbitrarily small $\varepsilon>0$ we may choose $\delta$ so that the independent
model assigns asymptotic mass at most $\varepsilon$ to $\mathcal D_\delta$, while
the shared model assigns asymptotic mass one. Consequently,
\begin{equation}
\liminf_{N\to\infty}\mathrm{TVD}(P_{\mathrm{sh}},P_{\mathrm{ind}})\ge1-\varepsilon.
\end{equation}
Since total variation distance is always at most one and $\varepsilon$ is
arbitrary, the desired convergence follows.
\end{document}